\definecolor{darkblue}{rgb}{0,0,0.5}
\DeclareMathOperator{\bw}{bw}
\DeclareMathOperator{\tw}{tw}
\newcommand{\bra}{\ensuremath{\langle}}
\newcommand{\ket}{\ensuremath{\rangle}}
\newcommand{\per}{\ensuremath{\operatorname{per}}}
\newcommand{\rk}{\ensuremath{\operatorname{rk}}}
\newcommand{\F}{\ensuremath{\mathbb{F}}}
\newcommand{\Z}{\ensuremath{\mathbb{Z}}}
\newcommand{\ts}{\ensuremath{\mathcal{T}_S}}
\newcommand{\td}{\ensuremath{\mathcal{T}_D}}
\newcommand{\tdk}{\ensuremath{\mathcal{T}_{D^{\otimes k}}}}
\declaretheorem[name=Theorem,numberwithin=section]{Thm}
\declaretheorem[name=Corollary,numberlike=Thm]{Cor}
\declaretheorem[name=Lemma,numberlike=Thm]{Lem}
\declaretheorem[name=Claim,numberlike=Thm]{Claim}
\newenvironment{Proof}{\begin{proof}}{\end{proof}}
\newcommand{\slimfigure}[3]{%
	\@fleqnfalse
    \begin{equation}
    \begin{aligned}
      \label{fig:#3}
      \includegraphics[scale=#1]{#2}
    \end{aligned}
  \end{equation}
}
\title{Tensor network complexity of multilinear maps}
\titlerunning{Tensor network complexity of multilinear maps}
\author{Per Austrin}{School of Computer Science and Communication, KTH Royal Institute of Technology}{austrin@kth.se}{}{}
\author{Petteri Kaski}{Department of Computer Science, Aalto University}{petteri.kaski@aalto.fi}{}{}
\author{Kaie Kubjas}{Department of Mathematics and Systems Analysis, Aalto University\\Laboratoire d'Informatique de Paris 6, Sorbonne Universit\'e}{kaie.kubjas@aalto.fi}{}{}
\authorrunning{P. Austrin, P. Kaski, and K. Kubjas}
\subjclass{Theory of computation $\rightarrow$ Models of computation;
Theory of computation $\rightarrow$ Computational complexity and cryptography;
Theory of computation $\rightarrow$ Design and analysis of algorithms}
\keywords{arithmetic complexity, lower bound, multilinear map, tensor network}
\begin{document}

\maketitle


\begin{abstract}
  We study \emph{tensor networks} as a model of arithmetic computation for
  evaluating multilinear maps.
  These capture any algorithm based on low border rank tensor decompositions, such as $O(n^{\omega+\epsilon})$ time matrix multiplication, and in addition many other algorithms such as $O(n \log n)$ time discrete Fourier transform and $O^*(2^n)$ time for computing the permanent of a matrix.
  However tensor networks sometimes yield faster algorithms than those
  that follow from low-rank decompositions.  For instance the
  fastest known $O(n^{(\omega +\epsilon)t})$ time algorithms for
  counting $3t$-cliques can be implemented with tensor networks, even
  though the underlying tensor has border rank $n^{3t}$ for all $t \ge 2$.
  For counting homomorphisms of a general pattern graph $P$ into a host graph on $n$ vertices we obtain an upper bound of $O(n^{(\omega+\epsilon)\bw(P)/2})$ where $\bw(P)$ is the branchwidth of $P$. This essentially matches the bound for counting cliques, and yields small improvements over previous algorithms for many choices of $P$.

  While powerful, the model still has limitations, and we are able to
  show a number of unconditional lower bounds for various multilinear maps,
  including:
  \begin{enumerate}
  \item[(a)] an $\Omega(n^{\bw(P)})$ time lower bound for counting homomorphisms from $P$ to an $n$-vertex graph, matching the upper bound if $\omega = 2$.
    In particular for $P$ a $v$-clique this yields an $\Omega(n^{\lceil 2v/3 \rceil})$ time lower bound for counting $v$-cliques, and for $P$ a $k$-uniform $v$-hyperclique we obtain an $\Omega(n^v)$ time lower bound for $k \ge 3$, ruling out tensor networks as an approach to obtaining non-trivial algorithms for hyperclique counting and the Max-$3$-CSP problem.

  \item[(b)] an $\Omega(2^{0.918n})$ time lower bound for the permanent
    of an $n \times n$ matrix.

  \end{enumerate}
\end{abstract}


\section{Introduction}

\noindent
One of the cornerstones of the theory of computation is the study of efficient
algorithms:
\begin{quote}
For a function $f$, how much time is required to evaluate $f$ on given inputs?
\end{quote}
Answering this question for almost any specific $f$ is well beyond
reach of contemporary tools. For example, it is theoretically possible
that canonical NP-complete problems,
such as the Circuit Satisfiability problem, can be solved in linear
time whereas they are widely believed to require super-polynomial
(or somewhat less widely, exponential) time \cite{Gasarch2012,Impagliazzo2001,Impagliazzo2001b}.
The main reason for this barrier to quantitative understanding is that it is
very hard to prove lower bounds for explicit functions in general models
of computation such as circuits or Turing machines.

This situation withstanding, a more modest program to advance our
understanding of computation is to study restricted models of computation
that for many $f$ of interest are simultaneously
\begin{enumerate}[(i)]
\item
general enough to capture the fastest-known algorithms for
$f$, and
\item
restricted enough to admit proofs of strong {\em unconditional} time
lower bounds for $f$.
\end{enumerate}
There is a substantial body of existing work that fits under this program,
ranging from the study of low-depth or otherwise restricted circuits (see e.g.~\cite{AroraB2009}, Ch.~14)
to models of algorithm-design principles such as greedy algorithms,
backtracking, or dynamic programming
\cite{AlekhnovichBBIMP2011,DavisI2009}, to linear or semidefinite
programming relaxations for hard combinatorial optimization problems
\cite{LeeRS2015}.

\subsection{Multilinear maps}

One class of functions $f$ that are of substantial importance is the family of {\em $\ell$-linear maps} ({\em multilinear maps}) from $\ell$ input vector spaces to an output vector space.%
\footnote{Multilinear maps with $\ell=1$ are called {\em linear}, $\ell=2$ {\em bilinear}, $\ell=3$ {\em trilinear}, and so forth.}{}
Examples range from maps of known near-linear-time complexity in the input size, such as the discrete Fourier transform~\cite{CooleyT1965,VanLoan1992}, to maps without known polynomial-time-complexity algorithms, such as the permanent of a matrix~\cite{Ryser1963,Valiant1979}. Beyond motivation in numerical multilinear algebra and its applications, recent advances in the study of fine-grained algorithm design and complexity have highlighted the fundamental role of algebraic methods in the fastest-known algorithm designs across a broad range of tasks from graph problems, such as all-pairs shortest paths and $k$-clique, to parsing and constraint satisfaction problems, such as maximum satisfiability and graph coloring~\cite{AbboudBW15a,BjorklundHKK2008,BjorklundHK2009,EisenbrandG2004,ItaiR1978,Nesetril1985,Williams2005,Williams14apsp}.

In this paper, we study the \emph{arithmetic complexity} of evaluating
a multilinear map, that is, the number of operations (scalar additions,
subtractions, and multiplications) needed to evaluate the map.
To set up a baseline, a generic $\ell$-linear map from $\ell$ vector spaces
of dimension $n$ to a scalar requires $\Omega(n^\ell)$ scalars to represent
the map directly using combinations of basis vectors. Given this
complexity of a direct explicit representation, it is a fundamental
problem to seek less costly representations, along with associated
efficient algorithms that work on the chosen representation.

We propose the systematic study of {\em tensor networks} on hypergraphs as
a framework for fast evaluation of multilinear maps, and show a number of
upper and lower bounds on the computational power of tensor networks
in the spirit of (i) and (ii) above.

\subsection{Tensor networks}

\label{sect:nets}

Tensor networks have a long and rich history which can be traced as
far back as \nth{19}-century studies in invariant theory due
to Cayley~\cite{Cayley1875,Cayley1879}, Clebsch~\cite{Clebsch1861},
Clifford~\cite{Clifford1878}, Sylvester~\cite{Sylvester1878},
and Kempe~\cite{Kempe1885,Kempe1892}.
Tensor networks are extensively deployed in applications from pure mathematics
and theoretical physics~\cite{Kauffman1989,Kuperberg1991,Landsberg2012,LandsbergQY2012,Penrose1956,Penrose1971,RichterGebertL2009,Schrijver2015a} to computational physics and chemistry~\cite{Orus2014,PfeiferHV2014,SolomonikMHSD2014}.
In theoretical computer science,
they appear in various guises including, for example,
in the Holant framework \cite{Valiant2008,CaiLX2011,CaiGW2016a},
in the study of probabilistic graphical models~\cite{KollerF2009,RobevaS2017},
in the study of parallel programming~\cite{SolomonikH2015},
in the study of quantum computing~\cite{ItaiL2010},
and in the study of arithmetic
complexity~\cite{BeaudryH2007,CapelliDM2016,DammHM2002}.
Tensor contraction is also emerging as a basic
computational primitive in computer
hardware~\cite{JouppiYPPABBBBB17,MarkidisCLPV18}.
(See \cref{sec:related work} for a detailed discussion.)
As the precise definitions are somewhat technical, let us start with
a few simple motivating examples and then state our results,
with the understanding that precise definitions appear
in \cref{sec:tensor network defs}.

In our setting, a tensor network is a hypergraph in which the vertices are
tensors and the hyperedges are called \emph{modes}.
Each mode that is incident to a tensor defines a ``dimension''
for indexing the entries of the tensor---for example, a matrix is a tensor
that is incident to two modes, one mode for the rows of the matrix, and
the other mode for the columns of the matrix. A network may be simplified
by a sequence of {\em contractions}, where each contraction takes a subset
of tensors and replaces it with a single tensor whose entries are obtained
as generalized inner products of the entries of the tensors being contracted.

As a concrete first example of these concepts, let us consider the task of
multiplying two matrices, $A$ and $B$. More specifically, let
$A$ be a matrix with rows indexed by mode $i$ and
columns indexed by mode $k$, and let $B$ be a matrix
with rows indexed by mode $k$ and columns indexed by mode $j$.
We may represent the multiplication task as
the tensor network depicted on the left in \eqref{fig:intro matrix mul 1}.
The result of contracting $A$ and $B$ is a new
matrix with rows indexed by $i$ and columns indexed by $j$, where the
entry at each position $(i,j)$ is $\sum_k A_{ik} B_{kj}$.  If the three
index sets all have size $n$, then computing $A \cdot B$ by contracting
them in such a direct manner uses $\Theta(n^3)$ operations.  To obtain
faster matrix multiplication, we can rewrite the bare-bones network
on the left in \eqref{fig:intro matrix mul 1} using a low-rank
decomposition of the matrix multiplication tensor.
For example, Strassen's decomposition~\cite{Strassen1969}
of $2 \times 2$ matrix multiplication
can be represented using the second network in \eqref{fig:intro matrix mul 1}.
Note that the index $i$ used by $A$ and the result has been separated into
two distinct indices $i$ and $i'$, and similarly for $j$ and $k$.
\slimfigure{0.9}{figures/intro_matrixmul1}{intro matrix mul 1}

We can \emph{execute} the network by succesively contracting
groups of vertices.  In \eqref{fig:intro matrix mul 2} we see the
process of successively contracting pairs of tensors in a carefully
chosen order, until only a single tensor -- the result of the
computation -- remains.  Such an execution can be naturally
represented by a rooted binary tree, as shown on the right in
\eqref{fig:intro matrix mul 2}, where the tensors of the network form
the leaves, and each internal node represents the result of
contracting its two children.  To summarize, a tensor-network
algorithm is specified by providing (a) a tensor network that when
contracted yields the desired result, \emph{and} (b) an execution tree
indicating the order in which to contract tensors in the network.

\slimfigure{0.74}{figures/intro_matrixmul2}{intro matrix mul 2}

The \emph{cost} of performing one of the contractions in an execution
is the product of the lengths of the modes used by any tensor involved
in the contraction.  This simply measures (up to a constant factor)
the number of arithmetic operations (additions/multiplications) used
to compute the result by a direct, na\"\i{}ve computation that does not
depend on the values of the tensors.  For example, the contraction of
$\alpha$ and $A$ in the first step of \eqref{fig:intro matrix mul 2}
has cost $28$ because it involves the three modes $i'$ (length $2$),
$k$ (length $2$) and $\ell$ (length $7$).

We observe that cost is data-oblivious -- the tensor $\alpha$ is fixed with many
zero-entries but these entries still contribute to the cost.  Indeed, in many cases there may be faster ways of evaluating a contraction than to evaluate it naively, and just like we saw above, this can often be dealt with by rewriting the network appropriately.
For instance, consider now the multiplication of two $2^k \times 2^k$
matrices.  Because the family of matrix multiplication tensors is closed under
Kronecker products, this operation may be computed by a tensor
network like the one shown in \eqref{fig:intro matrix mul 3}
(depicting the case $k=5$), where $\alpha$, $\beta$ and $\gamma$ are
as in \eqref{fig:intro matrix mul 2}.  The rows/columns of the
matrices are now indexed by $k$-tuples of bits.  The execution of this
network contracts one $\alpha$/$\beta$/$\gamma$ tensor at a time,
which lets us keep the cost low.  For example, the first contraction of $A$
with the first $\alpha$ block has a cost of $2^k \cdot 2^k \cdot 7$,
and results in a tensor of size $2^{k-1} \times 2^{k-1} \times 7$,
then the next contraction has a cost of $2^{k-1} \cdot 2^{k-1} \cdot
7^2$ and produces a result of size $2^{k-2} \times 2^{k-2} \times 7
\times 7$, and so on, until the contraction with the last $\alpha$
block which has a cost of $2 \cdot 2 \cdot 7^k = O(7^k)$, and all the
contractions in the execution have cost bounded by this, meaning that
we get a total running time of $O(k 7^k) = O(N^{\log_2 7} \log N)$ for
$N \times N$ matrices.%
\footnote{In fact, a more careful analysis gives running time $O(N^{\log_2 7})$.}{}
\slimfigure{0.65}{figures/intro_matrixmul3}{intro matrix mul 3}

This type of argument can capture any algorithm based on a low-rank
decomposition of the underlying tensor of the multilinear map, and
indeed, this enables $O(n^\omega)$-time\footnote{Throughout the paper,
$\omega = \omega(\F)$ denotes the infimum over all $e$ such that
the arithmetic complexity of multiplying two $n \times n$ matrices is
$O(n^e)$.  While the value of $\omega$ may depend on the underlying
field $\F$, we tacitly ignore this, since the field is fixed throughout
the paper. For all fields it is known that
$2\leq\omega<2.3728639$~\cite{LeGall2014,VassilevskaWilliams2012}.}{}
 matrix multiplication
using tensor networks.
Beyond simple low-rank decompositions, which always give rise to
``star-like'' networks as in \eqref{fig:intro matrix mul 3}, there are
many interesting algorithms that can be captured using networks with a
more complicated topology.
For instance, many maps of substantial
importance have a layered structure that decomposes the map to a sequence
of elementary maps. A canonical example is the
discrete Fourier transform (DFT),
which for a smooth composite order such as $2^k$, can be decomposed
into a fast Fourier transform (FFT) that consists of a sequence
of $k$ transforms of order~$2$ interleaved with diagonal-matrix
multiplications of twiddle factors~\cite{CooleyT1965,VanLoan1992}.

\subsection{Our results}
\label{sect:our}

Starting with motivation (i) and seeking to express existing fastest-known
algorithms
as executions of tensor networks by a sequence of contractions,
we show upper bounds for a number of
natural problems. Beyond standard linear settings such as the FFT,
not only do tensor networks realize classical bilinear settings
such as Abelian group algebra products and
fast matrix multiplication algorithms based on low tensor rank,
they are powerful enough to capture a substantial number of higher-linearity
applications, including Ryser's algorithm for matrix permanent \cite{Ryser1963},
and the {\em Kruskal operator}~\cite{Kolda2006,Kruskal1977}
(cf.~\cref{sect:kruskal}),
which underlies realization of rank-decompositions for tensor rank
\cite{Kolda2009} and current fastest algorithms for detecting outlier
correlations~\cite{KarppaKK2018}.

One problem for which tensor networks turn out to be particularly useful
is counting homomorphisms from a fixed pattern graph $P$ to
a large host graph $G$ on $n$ vertices.  The most well-studied such problem
is when $P$ is a $k$-clique.  For this problem, the currently fastest
algorithm runs in time roughly $O(n^{\omega k/3})$ (with variations in
the exponent depending on $k \bmod 3$)
\cite{Nesetril1985,EisenbrandG2004}.  For general $P$, it is known
that the problem can be solved in $O(n^{\tw(P)+1})$ time
\cite{DiazST02}, where $\tw(P)$ is the treewidth of $P$.
We show that tensor networks can solve the problem in $O(n^{(\omega+\epsilon)
  \bw(P)/2})$ time, where $\bw(P)$ is the \emph{branchwidth} of $P$.
For $P$ a $k$-clique we have $\bw(P) = \lceil 2k/3\rceil$ so this
almost recovers the $O(n^{\omega k/3})$ running time, and in this case
we can slightly improve the running time to recover the
$O(n^{\omega \lfloor k/3 \rfloor + (k \bmod 3)})$ time of Nešetřil and
Poljak \cite{Nesetril1985}.  In the case of general $P$, this improves
on the treewidth-based bound for graphs with $\bw(P) \le
2(\tw(P)+1)/\omega$ (and in particular if $\omega=2$ it is always as
fast as the treewidth-based bound, ignoring the $\epsilon$).  By
recent results of Curticapean, Dell, and Marx~\cite{CurticapeanDM2017},
fast algorithms for homomorphism-counting can be used to obtain fast
algorithms for counting subgraphs of $G$ isomorphic to $P$, and in
some cases our new branchwidth-based bound leads to an improvement;
for example, for counting paths of lengths of length $7$, $8$ or $9$,
we get a running time of $O(n^{3\omega/2+\epsilon}) < O(n^{3.56})$
compared to $O(n^{4})$ using the treewidth-based bound, whereas for very long
paths it is not clear whether we would need $\omega=2$ in order for
these bound to improve on the treewidth-based bound. Previous work
that combines branch decompositions and fast matrix
multiplication includes Dorn~\cite{Dorn2006} and
Bodlaender~{\em et~al.}~\cite{BodlaenderLRV2010}.

Further applications
captured by tensor networks are the set covering and set partitioning
frameworks via fast zeta and M\"obius transforms that underlie the
current fastest algorithms for graph coloring~\cite{BjorklundHK2009}
and its generalizations such as computing the
Tutte polynomial~\cite{BjorklundHKK2007,BjorklundHKK2008}.
To summarize, we have the following compendium theorem of upper bound results.
For the detailed definitions of the relevant multilinear maps,
cf.~\cref{sect:examples} and \cref{sec:upper bounds}.

\begin{Thm}
  \label{thm:upper bounds}
  We have the following upper bounds on arithmetic complexity via tensor networks:
  \begin{enumerate}
  \item $O(n^{\omega + \epsilon})$ for the matrix multiplication map of two $n \times n$ matrices.
  \item $O(n^{(\omega+\epsilon)\lfloor v/3 \rfloor  + (v \bmod 3)})$ for counting $v$-cliques in an $n$-vertex graph.
  \item $O(n^{(\omega+\epsilon) \bw(P)/2})$ for counting homomorphisms of a fixed pattern (hyper)graph $P$ into a (hyper)graph on $n$ vertices.
  \item $O(\max(n^{\lceil\ell/2\rceil(\omega+\epsilon-1)}r,\,n^{2\lceil\ell/2\rceil}r^{\omega+\epsilon-2}))$ for the Kruskal operator of $\ell$ matrices of shape $n\times r$.
  \item $O(2^k k)$ for the discrete Fourier transforms for the Abelian groups $\Z_{2^k}$ and $\Z_2^k$.
  \item $O(2^kk)$ for group algebra products on $\F[\Z_{2^k}]$ and $\F[\Z_2^k]$ when $2$ is unit in $\F$.
  \item $O(2^kk)$ for the semigroup algebra product on $\F[(\{0,1\}^k,\subseteq,\cap,\cup)]$.
  \item $O(2^n n)$ for the permanent of an $n \times n$ matrix.

  \end{enumerate}
Above $\epsilon>0$ is an arbitrary constant.
\end{Thm}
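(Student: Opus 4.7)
The plan is to prove each of the eight upper bounds by exhibiting an explicit tensor network together with a contraction order and then arguing that the cost of every contraction along the execution tree is bounded as claimed. The framework of \cref{sec:tensor network defs} reduces each item to two tasks: (a) writing the given multilinear map as a legal tensor network, and (b) choosing an execution whose per-step mode lengths give the desired asymptotic.

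The ``linear'' or star-shaped items, namely (1), (5), (6), (7), and (8), follow from classical constructions recast in tensor-network language. For (1), the $N \times N$ matrix-multiplication tensor with $N = 2^k$ is written as the $k$-fold Kronecker product of a border-rank decomposition of a small matrix-multiplication tensor whose exponent is at most $\omega + \epsilon$; contracting the resulting layered network one layer at a time, exactly as illustrated in \eqref{fig:intro matrix mul 3}, yields total cost $O(N^{\omega+\epsilon})$. Items (5) and (6) come from laying out the Cooley--Tukey and Yates (Walsh--Hadamard) butterflies as $k$ layers of constant-size tensors, each with $O(2^k)$ cost; (6) additionally uses the standard transform--pointwise-multiply--inverse-transform pattern, which is available because $2$ is a unit in $\F$. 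Item (7) writes the zeta and M\"obius transforms on the subset lattice as $k$-layer butterfly networks in the spirit of Yates, implementing intersection and union convolutions in $O(2^k k)$. Item (8) encodes Ryser's inclusion--exclusion formula for the permanent as $n$ zeta-style layers combined with an elementwise product over rows, giving $O(2^n n)$.

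The main technical content is item (3), from which (2) follows by specialization. Starting from a branch decomposition of $P$ of width $w = \bw(P)$, I place on each leaf of the decomposition tree the adjacency tensor of the host graph, indexed by modes corresponding to the two endpoints of the associated edge of $P$; the execution tree is exactly the branch-decomposition tree. Each internal node corresponds to a contraction whose three incident mode-sets (the two inputs and the output) are indexed by ``middle sets'' of size at most $w$ of the decomposition, so the contraction can be realized as a rectangular matrix multiplication of format $n^{w/2} \times n^{w/2} \times n^{w/2}$, carried out in $n^{(\omega+\epsilon)w/2}$ operations via fast matrix multiplication. Summing over the $O(|E(P)|)$ internal nodes preserves the exponent. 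Specializing $P = K_v$ with $\bw(K_v) = \lceil 2v/3\rceil$ almost gives (2); the refinement from $\lceil 2v/3\rceil/2$ to $\lfloor v/3\rfloor + (v \bmod 3)$ is obtained by partitioning the vertices of $K_v$ into three groups of sizes as equal as possible, using the $(v \bmod 3)$ leftover vertices as extra ``free'' indices in an explicit rectangular multiplication, as in Nešetřil--Poljak. Item (4) is handled similarly by splitting the $\ell$ factors of the Kruskal operator into two halves, contracting each half into an $n^{\lceil\ell/2\rceil} \times r$ matrix, and combining by a rectangular product; the two regimes in the stated bound correspond to whether the $n$-side or the $r$-side of this product dominates.

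The main obstacle I expect is item (3). Two things need care: first, arguing that the branch decomposition really can be turned into a legal tensor network with the surviving modes at each node matching the middle sets, and second, that rectangular fast matrix multiplication is applicable and yields the exponent $(\omega + \epsilon)w/2$ rather than the na\"\i ve $n^{3w/2}$; controlling the $\epsilon$ slack requires padding mode lengths to powers of a convenient base and amortizing the resulting loss across the tree. The remaining items are, by comparison, direct translations of well-known algorithms into the tensor-network formalism set up in \cref{sec:tensor network defs}.
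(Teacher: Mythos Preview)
Your proposal is broadly correct and tracks the paper's approach for most items, but there is one structural difference and one technical oversimplification worth noting.

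The paper does not handle the Kronecker-power items by ad hoc layer-by-layer arguments. Instead it proves a single submultiplicativity theorem (\cref{thm:submul-am}): it defines the \emph{amortized cost} $a(\td)$ of an execution and shows $c(A^{\otimes k})\le a(\td)^{k-1}c(\td)$. This is what drives (1), and more notably (2): the paper does \emph{not} obtain (2) by specializing (3) and then refining \`a la Ne\v{s}et\v{r}il--Poljak as you propose, but builds a constant-size base network for the $\binom{v}{2}$-form with amortized cost $d^s c^u$ (where $v=3s+u$ and $d\le c^{\omega+\epsilon/2}$) and applies \cref{thm:submul-am} directly. Your route should also work, but you would still have to argue that the Ne\v{s}et\v{r}il--Poljak refinement can be realized \emph{inside} the tensor-network model; the paper's amortized-cost machinery sidesteps this. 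Also, for (1) the paper uses an ordinary rank decomposition (cf.~\eqref{eq:omega-inf}), not a border-rank one.

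For (3) your description of each internal contraction as a product of format $n^{w/2}\times n^{w/2}\times n^{w/2}$ is too coarse and would not give the right exponent as stated. The three middle sets $C_x,C_y,C_z$ at a node are not disjoint: they share a common part $C_{xyz}$, and the remaining pieces $C_{xy},C_{yz},C_{xz}$ have sizes $a,b,c$ with $a+b+|C_{xyz}|,\ b+c+|C_{xyz}|,\ a+c+|C_{xyz}|\le w$. The paper realizes the contraction as $n^{|C_{xyz}|}$ independent rectangular products of format $n^a\times n^b$ by $n^b\times n^c$, bounds each by $O(n^{\max(a+b,b+c,a+c)(\omega+\epsilon)/2})$ via a separate rectangular-multiplication lemma, and only then uses the three inequalities to reach $O(n^{w(\omega+\epsilon)/2})$. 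You flagged (3) as the main obstacle, and this decomposition of the middle sets is precisely the missing step.
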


Perhaps the most interesting application above
is the $v$-clique problem, which suggests that one should seek to
pursue generalizations to $v$-vertex hypercliques of
$\binom{v}{k}$ hyperedges with $v>k\geq 3$.
Indeed, subgraph counting is a problem that has received substantial
interest over the years (e.g.~\cite{%
ItaiR1978,%
Nesetril1985,%
AlonYZ1997,%
AlonG2010,%
EisenbrandG2004,%
BjorklundHKK2009,%
BjorklundKK2014,%
WilliamsWWY15,%
WilliamsW2013,%
FominLRSR2012,%
FloderusKLL2015,%
Olariu1988,%
KloksKM2000,%
CurticapeanDM2017}), but progress in the particular case of $v$-clique
has been stuck to the extent that the problem has attracted notoriety as
a hardness assumption in fine-grained
complexity~\cite{AbboudBBK2017,AbboudBW15a}.
Beyond the study of cliques, hypercliques, and subgraph counting, nontrivial
algorithms for such forms would have immediate applicability, for example,
in the study of maximum constraint satisfaction problems (Max-CSP) for
constraints of width $k\geq 3$; cf.~Williams~\cite{Williams2005} for
the case $k=2$. One of the main goals of our subsequent lower bounds is
to rule out tensor networks as a candidate to yield improved algorithms
in this setting.

Turning from upper bounds to lower bounds and motivation (ii), tensor
networks are restricted enough to enable nontrivial lower bounds for many
multilinear maps.
To begin with, an immediate limitation of tensor networks is
that all the intermediate results during the execution of a network are
multilinear, and the execution of a network can be simulated by a multilinear
circuit. Raz~\cite{Raz2013} shows that multilinear formulas cannot compute the
determinant of an $n\times n$ matrix in a polynomial number of operations
in $n$, even though polynomial-size general circuits are known for
the determinant (cf.~\cite{Berkowitz1984,BunchH1974,Rote2001}).

It turns out that considerably stronger lower bounds can be shown
for tensor networks. In particular, we establish essentially tight lower bounds
(subject to the assumption $\omega=2$) for arithmetic complexity via tensor
networks of $P$-homomorphism counting and the Kruskal operator.
Furthermore, we rule out the possibility of any
nontrivial algorithm designs via tensor networks for counting cliques in hypergraphs.
The following theorem collects
our main lower-bound results, and should be contrasted with the upper bounds
in Theorem~\ref{thm:upper bounds}.

\begin{Thm}
  \label{thm:lower bounds}
  We have the following lower bounds on arithmetic complexity via tensor networks:
  \begin{enumerate}
  \item $\Omega(n^{\bw(P)})$ for the multilinear form corresponding to $P$-homomorphism counting.  In particular, this yields a lower bound of $\Omega(n^{\lceil 2v/3 \rceil})$ for counting cliques of size $v$, and a lower bound of $\Omega(n^v)$ for counting hypercliques of size $v$.
  \item $\Omega(\max(n^\ell, n^{\lceil l/2\rceil}r))$ for the Kruskal operator of $\ell$ matrices of shape $n\times r$.
  \item $\Omega(\binom{n}{n/3})$ for the determinant or permanent of an $n \times n$ matrix.
  \end{enumerate}
\end{Thm}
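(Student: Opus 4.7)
The plan is to prove all three lower bounds by a common strategy: \emph{any execution tree of a tensor network computing a multilinear form $T$ must, at every internal edge, produce an intermediate tensor whose number of entries is at least the rank of a corresponding bipartite flattening of $T$}. Concretely, fix an execution tree $E$ and cut it at an internal edge; this partitions the input tensors of the network into two sides $X$ and $Y$. The computation on the $X$-side produces an intermediate tensor $T_X$ that must, once contracted against the $Y$-side, reproduce $T$. A standard rank argument then shows that the number of scalars stored in $T_X$ is at least $\rk(T_{X\mid Y})$, where $T_{X\mid Y}$ denotes $T$ flattened as a bilinear map between the $X$-inputs and the $Y$-inputs. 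Since the cost of the contraction producing $T_X$ is at least $|T_X|$, this supplies a cut-based lower bound on the cost of any execution.

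For part~(a), the $P$-homomorphism-counting form is $T\bigl((A_e)_{e\in E(P)}\bigr)=\sum_{\phi\colon V(P)\to[n]}\prod_{uv\in E(P)} A_{uv}[\phi(u),\phi(v)]$ in edge-indexed matrices $A_e\in\F^{n\times n}$. For any partition $(X,Y)$ of $E(P)$ with boundary $V(X)\cap V(Y)$ (the vertices incident to edges in both parts), I would show $\rk(T_{X\mid Y})=n^{|V(X)\cap V(Y)|}$ by writing $T$ as $\sum_\sigma f_\sigma(A_X)\,g_\sigma(A_Y)$ indexed by labellings $\sigma\colon V(X)\cap V(Y)\to[n]$ and then verifying that generic specializations of $A_X$ separate the $f_\sigma$'s. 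Given any execution tree, I would then prune leaves that are not inputs and suppress degree-two nodes to obtain a ternary tree whose leaves are exactly $E(P)$; this is by definition a branch decomposition of $P$, so some cut has separator width at least $\bw(P)$, yielding $\Omega(n^{\bw(P)})$. The clique specialization follows from the classical identity $\bw(K_v)=\lceil 2v/3\rceil$; for the $k$-uniform $v$-hyperclique with $k\ge 3$, any nontrivial partition of the hyperedges has $V(X)\cap V(Y)=V(P)$ since every vertex lies in hyperedges on both sides of any nontrivial split, giving separator $v$.

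For part~(b), the trivial output-size bound already gives $\Omega(n^\ell)$ since any execution must at least write the $n^\ell$ entries of the Kruskal operator. For the second term, cut the execution tree so that the $\ell$ input matrices split into groups of size $\lceil\ell/2\rceil$ and $\lfloor\ell/2\rfloor$; the corresponding flattening has rank $n^{\lceil\ell/2\rceil}r$ because the $r$ choices of column index $j$ contribute linearly independent rank-one blocks on the shared mode. For part~(c), I would use the fact that the permanent of an $n\times n$ matrix, when flattened across a split of the rows into sets of size $\lceil n/3\rceil$ and $\lfloor 2n/3\rfloor$, has rank $\binom{n}{n/3}$: the partial permanents corresponding to the $\binom{n}{n/3}$ column subsets of size $n/3$ are linearly independent, and the rank lower bound transfers to the execution cost via the general cut argument.

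The main obstacle is the tree-cleanup step in part~(a): an execution tree can contain auxiliary tensors unrelated to any specific edge of $P$ (as in low border-rank decompositions), so showing that one really obtains a branch decomposition with the width bound intact requires care. In particular, the rank identity $\rk(T_{X\mid Y})=n^{|V(X)\cap V(Y)|}$ must survive the presence of auxiliary structure, meaning the lower bound should be stated for the rank of the flattening of $T$ as an abstract multilinear form, independent of how the network chooses to encode it. Establishing this algebraic rank equality, and arguing that no re-parametrization of the network can produce narrower effective cuts, is the crux of the proof.
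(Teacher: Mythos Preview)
Your overall framework matches the paper's: reduce to a rank lower bound on a flattening of $T(A)$ determined by an edge of a ``socket tree'' (the topological minor of the execution tree on the input/output leaves), and then analyze the worst edge of the best such tree.  The paper formalizes this as the \emph{socket-width} $w(A)$ and proves $c(A)\ge w(A)$ essentially as you sketch.

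There is, however, a genuine error in your hyperclique argument.  You claim that for the complete $k$-uniform hypergraph on $v$ vertices with $k\ge 3$, ``any nontrivial partition of the hyperedges has $V(X)\cap V(Y)=V(P)$.''  This is false: if $X$ consists of a single hyperedge $S$, then $V(X)=S$ has size $k<v$, so the separator has size at most $k$.  What is true, and what the paper proves, is that in any branch decomposition there \emph{exists} an edge whose separator is all of $[v]$.  The argument is not immediate: the paper orients each edge of the decomposition toward the side covering $[v]$, finds a sink vertex $x^*$, and uses $k\ge 3$ to derive a contradiction from the three missing elements at $x^*$'s neighbors.  Your ``every split'' claim short-circuits this and does not survive the counterexample.

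A related but less severe issue appears in parts~(b) and~(c): you write ``cut the execution tree so that the $\ell$ input matrices split into groups of size $\lceil\ell/2\rceil$ and $\lfloor\ell/2\rfloor$'' and ``flattened across a split of the rows into sets of size $\lceil n/3\rceil$.''  But you do not get to choose the cut; you must show that \emph{every} socket tree contains \emph{some} edge with the needed property.  For the permanent the paper invokes a standard balanced-edge lemma (any ternary tree on $n$ leaves has an edge separating between $n/3$ and $2n/3$ leaves).  For the Kruskal operator the paper does not try to balance the $\ell$ inputs directly; instead it looks at the internal neighbor of the output leaf, whose two other subtrees partition the $\ell$ input sockets, forcing one side to have at least $\lceil\ell/2\rceil$ of them while the output stays on the other side --- this is what buys the factor $r$.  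Your proposal does not supply either of these structural arguments.
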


We remark that \cite{LincolnWW18} independently showed that the border rank of the $v$-hyperclique tensor is $\Omega(n^v)$; our $\Omega(n^v)$ lower bound for tensor networks strengthens that.
One may wonder about the gap between the bounds of
$\Omega(\binom{n}{n/3})$ and $O(2^n n)$ for the permanent.
As we explain below, our lower bound methods are inherently rank-based
and cannot go beyond $\binom{n}{n/3}$.  A curious point is that it is not immediately clear whether tensor networks can even achieve $O^*(2^n)$ time for the determinant, and we do not know whether or not this is possible.

\subsection{Overview of proof ideas}

\label{sect:ideas}

As a running example in this overview, we consider the $6$-linear forms
$A: \F^{[n] \times [n]} \times \F^{[n] \times [n]} \times \ldots
\times \F^{[n] \times [n]} \rightarrow \F$
taking as input $6$ matrices of size $n \times n$, defined by
\begin{equation}
\label{eq:overview form}
A(X^{(1)}, X^{(2)}, X^{(3)}, X^{(4)}, X^{(5)}, X^{(6)}) = \sum_{i, j, k, \ell \in[n]} X^{(1)}_{ij} X^{(2)}_{ik} X^{(3)}_{i\ell} X^{(4)}_{jk} X^{(5)}_{j\ell} X^{(6)}_{k\ell}\,.
\end{equation}
If $\chi$ is the adjacency matrix of a loopless graph $G$, then $A(\chi, \chi, \chi, \chi, \chi,
\chi)$ counts the number of $4$-cliques in the graph.  Associated with $A$ is the $6$-tensor $T(A)$ of size $n^2 \times n^2 \times \cdots \times n^2$, where each of the $6$ modes is indexed by a pair $(i, j) \in
[n] \times [n]$, and the value at a given position is the coefficient of the corresponding term in $A$.  Concretely,
\[
T(\!A)_{i_1j_1, i_2k_2, i_3\ell_3, j_4k_4, j_5\ell_5, k_6\ell_6} \!=\!\!
\begin{cases}
  1 & \!\!\! \text{if $i_1 = i_2 = i_3$, $j_1 = j_4 = j_5$, $k_2 = k_4 = k_6$ $\wedge$ $\ell_3 = \ell_5 = \ell_6$},\\
  0 & \!\!\! \text{otherwise}.
\end{cases}
\]

\subparagraph*{Upper bounds}

Most, but not all, of the families of multilinear maps we consider are
closed under taking Kronecker products.  For instance, consider the
$4$-clique counting form \eqref{eq:overview form} for an $n$-vertex
graph and its associated tensor $T(A)$.  Then for any
$k \ge 1$, the tensor associated with the $4$-clique counting form in
$n^k$-vertex graphs is $T(A)^{\otimes k}$, the $k$-fold Kronecker
product of $T(A)$ with itself.  We write $A^{\otimes k}$ for the
associated map.  With this in mind, it is natural to seek general
constructions that, given an efficient evaluation
of some map $A$, yields an efficient evaluation of $A^{\otimes k}$.

We give such a construction, and show that the cost of the best tensor
network execution for $A^{\otimes k}$ is essentially submultiplicative
in a quantity that we call the \emph{amortized cost} of an execution.
For tensors of order at most $3$, the notion of amortized cost
essentially captures the rank of $T(A)$, but for higher-order tensors,
the amortized cost may be significantly smaller than the rank.
Roughly speaking, the amortized cost of a step in an execution of a
map $A$ is: (i) equal to the normal cost if the operation involves the
contraction of two tensors that both depend on some input variables of
$A$, but (ii) equal to the size of the result if only one of the
tensors involved in the contraction depends on the input variables of
$A$.  A precise definition appears in \cref{sec:submultiplicativity}.
Our general upper bound for the cost of $A^{\otimes k}$ can, somewhat
informally, be stated as follows.

\begin{restatable}[Submultiplicativity of cost, informal statement]{Thm}{GeneralUpperBoundTheorem}
  \label{thm:general upper bound}
  If a multilinear map $A$ has a tensor network execution
  consisting of $s$ steps, each with cost at most $c$ and amortized
  cost at most $a$, then $A^{\otimes k}$ has a tensor network
  execution consisting of at most $k \cdot s$ steps, each with cost at
  most $a^{k-1} \cdot c$.
\end{restatable}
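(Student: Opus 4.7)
The plan is to construct, from the given execution $\tau$ of $A$, an execution $\tau_k$ of $A^{\otimes k}$ by ``unrolling'' $\tau$ into $k$ Kronecker layers.  First I would build the tensor network underlying $\tau_k$ by keeping each input-dependent tensor of $\tau$ as a single tensor whose mode sizes are raised to the $k$-th power, while replacing each data-oblivious tensor $S$ by $k$ parallel copies $S^{(1)},\dots,S^{(k)}$, one per layer, with $S^{(j)}$'s modes glued to the layer-$j$ sub-modes of the surrounding input tensors.  By the mixed-product property of the Kronecker product, this network contracts to $T(A)^{\otimes k}$, so it does compute $A^{\otimes k}$.

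Next I would build $\tau_k$ by traversing $\tau$ in its original order and expanding each of its contractions into sub-steps.  For a \emph{case~(i)} step (both children input-dependent), $\tau_k$ performs the corresponding contraction once, with the modes of all $k$ layers bundled together; this has cost $c^k$, which equals $a^{k-1}c$ since $a=c$ in case~(i).  For a \emph{case~(ii)} step contracting an input-dependent tensor with modes $I\cup M$ against a data-oblivious tensor with modes $M\cup O$, $\tau_k$ performs $k$ sub-steps, the $j$-th of which contracts the current accumulated tensor with $S^{(j)}$, peeling off the layer-$j$ $M$-modes and introducing the layer-$j$ $O$-modes.  Thus $\tau_k$ contains at most $k\cdot s$ sub-steps in total.

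The main obstacle is the per-sub-step cost bound in case~(ii).  Just before the $j$-th sub-step of such a step, the accumulated tensor carries the $I$-modes of all $k$ layers, the $M$-modes of the unprocessed layers $j,\dots,k$, and the $O$-modes of the already processed layers $1,\dots,j-1$; together with the modes of $S^{(j)}$ this yields a sub-step cost of
\[
|I|^k\cdot|M|^{k-j+1}\cdot|O|^j.
\]
To dominate this by $a^{k-1}c$ uniformly in $j\in\{1,\dots,k\}$, I would invoke the formal definition of amortized cost from \cref{sec:submultiplicativity}, which tightens the ``size of the result'' heuristic of the overview and in particular guarantees $a\ge |I|\cdot\max(|M|,|O|)$ in case~(ii).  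With $c=|I|\cdot|M|\cdot|O|$, the required inequality then reduces to $|M|^{k-j}|O|^{j-1}\le \max(|M|,|O|)^{k-1}$, which is immediate because the left-hand side is a product of $k-1$ factors each at most the right-hand base.

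Assembling the pieces gives the claim: $\tau_k$ consists of at most $k\cdot s$ sub-steps, each of cost at most $a^{k-1}\cdot c$.  The only subtlety, on which the whole argument hinges, is to use the correct formal notion of amortized cost rather than only the informal ``size of the result'' heuristic; once that is in place, submultiplicativity follows by the direct mode-counting argument above.
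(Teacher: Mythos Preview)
Your construction and cost analysis are essentially the paper's proof of the formal version (\cref{thm:submul-am}): take $k$ copies of the non-socket subnetwork, attach single socket vertices, and replace each mixed contraction (one socket-side child, one non-socket-side child) by a chain of $k$ layer-by-layer contractions. The paper carries out the case-(ii) cost bound directly in terms of the volumes of $x$ and $y$ rather than your mode partition $I,M,O$; this is slightly more robust since it does not assume that every shared mode is summed out, but in the generic situation the two computations coincide and yield the same inequality $|M|^{k-j}|O|^{j-1}\le \max(|M|,|O|)^{k-1}$.

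There is one genuine gap. Your two cases follow the two-case taxonomy of the informal overview in \cref{sect:ideas}, but the formal definition of amortized cost in \cref{sec:submultiplicativity}---which you explicitly invoke---has a \emph{third} case that you never treat: the contraction of two tensors \emph{neither} of which is input-dependent (both subtrees socket-free; formal case~(i), amortized cost~$1$). In $D^{\otimes k}$ each such tensor exists in $k$ separate copies, one per layer, so the corresponding step of $\tau$ must be carried out as $k$ independent per-layer contractions in $\tau_k$, each of cost equal to the original step cost and hence at most $c \le a^{k-1}c$. This is easy once noticed, but without it your execution $\tau_k$ is not fully specified on such steps, and in particular your ``at most $k\cdot s$ sub-steps'' count is not yet justified for executions containing them.
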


An immediate corollary of this is that we can capture any algorithm
for $A^{\otimes k}$ based on a low-rank decomposition of $T(A)$
(\cref{cor:low rank execution}).  For example, this implies that
tensor networks can multiply $n \times n$ matrices in $O(n^{\omega +
  \epsilon})$ time (\cref{sec:matrix mul upper bound}).

However, returning to our running example form \eqref{eq:overview
  form}, as we explain below the tensor $T(A)$ has rank $n^4$, meaning
that \cref{cor:low rank execution} only yields a trivial upper bound.
This is where the full generality of \cref{thm:general upper bound}
comes in.  Consider the form \eqref{eq:overview form} for graphs on
some constant number $n_0$ of vertices.  As it turns out, we can design
a network and an associated execution for this form, depicted
in \eqref{fig:overview 4 choose 2 execution} and explained in more detail
in the proof of \cref{lem:v choose 2 linear form upper bound},
with an execution of cost $n_0^{2e+3}$ and amortized cost
$n_0^{e+1}$, where $n_0^e$ is the rank of the tensor associated with
$n_0 \times n_0$ matrix multiplication.  Picking $n_0$ to be a large
enough \emph{constant} so that
$e$ is approximately $\omega$, and letting $k$ be such that $n$ is
approximately $n_0^k$, we obtain via \cref{thm:general upper bound} an
$O(n^{\omega +\epsilon + 1})$ time upper bound for \eqref{eq:overview form}.
\slimfigure{0.75}{figures/intro_4choose2}{overview 4 choose 2 execution}

\subparagraph*{Lower bounds}

Just like many other arithmetic complexity lower bounds, our lower bounds
boil down to establishing lower bounds on the rank of certain
matrices.

In order to establish a lower bound on the \emph{rank} of $T(A)$,
we \emph{flatten} it to a matrix and analyze the rank of that
matrix.  There are $2^5$ possible bipartitions of the $6$ modes of
$T(A)$ into two non-empty subsets, and the lower bound on the rank of
$T(A)$ that we obtain is the maximum of the ranks of the
resulting matrices.  Using this method it is easy to establish that
for our example form \eqref{eq:overview form}, the rank of
$T(A) = n^4$.  That this is an upper bound follows from
\eqref{eq:overview form}, and that it is a lower bound follows by
considering the bipartition taking variables $X^{(1)}$ and
$X^{(6)}$ as row indices, and the other $4$ variables as column indices.
The resulting $n^4 \times n^8$ matrix has full rank.

Tensor networks are more versatile and can be more efficient
than low-rank decompositions of $T(A)$.  Nevertheless, we show
limitations on this versatility.  In particular we show that every
tensor network execution for $A$ induces a tree in which the leaves are the inputs of $A$ and all internal vertices have degree $3$.  We call this a \emph{socket tree}.  Each edge in a socket
tree induces a bipartition of the variables and our key technical
lemma is to show that for each such bipartition, the rank of the
corresponding flattening of $T(A)$ is a lower bound on the
cost of the execution that gave rise to the tree.  Thus, to obtain a lower
bound for the cost of a specific execution, we consider the maximum
rank obtained over all
edges of the corresponding socket tree, and to lower bound the cost of
every tensor network execution, we minimize this quantity over all
possible socket trees.  We refer to the resulting quantity as the
\emph{socket width} of $A$, denoted $w(A)$ (formal definition appears in \cref{sec:general lower bound}).  Our general lower bound
can thus be phrased as follows, where $c(A)$ denotes the minimum cost of a tensor network for evaluating $A$ (formal definition appears in \cref{sec:map cost}).

\begin{restatable}{Thm}{GeneralLowerBoundTheorem}
  \label{thm:general lower bound}
  For every multilinear map $A$, it holds that $c(A) \ge w(A)$.
\end{restatable}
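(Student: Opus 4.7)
The plan is to show that every tensor-network execution $E$ evaluating $A$ carries a canonical socket tree $T_S(E)$, and that for every edge $e$ of $T_S(E)$ some single contraction step of $E$ has cost at least the rank of the flattening of $T(A)$ along the bipartition $(S_1,S_2)$ of the sockets induced by $e$. The desired lower bound then follows by taking the maximum over edges and the minimum over executions.

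First I would construct $T_S(E)$. The execution $E$ is naturally a rooted binary tree $T_E$ whose leaves are the tensors in the network---some of which are the $\ell$ sockets (inputs of $A$) and the rest are fixed structural tensors---and whose internal nodes are the contraction steps. Take the minimal subtree of $T_E$ spanning the sockets and suppress its degree-$2$ internal vertices; the resulting tree has the sockets as leaves and all internal vertices of degree $3$, so it is a socket tree. For each edge $e$ of $T_S(E)$, the bipartition of sockets $(S_1,S_2)$ it induces is realized in $T_E$ at a uniquely determined internal node $v(e)$: one of the two subtrees of $v(e)$ contains exactly the sockets in $S_1$ (possibly along with some purely structural leaves), and the other contains exactly the sockets in $S_2$ (together with the remaining structural leaves).

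Next I would prove the key rank bound at step $v(e)$. Let $U$ and $V$ be the two tensors contracted at $v(e)$, so $U$ is multilinear in $x_{S_1}$ alone and $V$ in $x_{S_2}$ alone. Partition the modes involved at $v(e)$ into three sets: $P$ (shared between $U$ and $V$, hence contracted), $Q$ (the remaining modes of $U$), and $R$ (the remaining modes of $V$); write $|P|,|Q|,|R|$ for the products of mode lengths in each set. The contraction produces $W_{q,r}=\sum_{p} U_{p,q}V_{p,r}$, and since no socket is touched after $v(e)$, the output $A$ is a fixed linear function of $W$. Hence
\begin{equation*}
  T(A)(x_{S_1},x_{S_2}) \;=\; \sum_{p,q} U_{p,q}(x_{S_1})\cdot \widetilde V_{p,q}(x_{S_2}),
\end{equation*}
where $\widetilde V_{p,q}(x_{S_2})$ absorbs $V$ together with the fixed linear map on $W$ induced by the remainder of the execution. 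This exhibits the $(S_1,S_2)$-flattening of $T(A)$ as a sum of at most $|P|\cdot|Q|$ rank-one matrices, so its rank is at most $|P|\cdot|Q|$. The cost of $v(e)$ is $|P|\cdot|Q|\cdot|R|\ge |P|\cdot|Q|$, and therefore $\rk(T(A)|_{S_1,S_2})\le \operatorname{cost}(v(e))$.

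Finally, since $\operatorname{cost}(E)$ is at least the cost of any single step,
\begin{equation*}
  \operatorname{cost}(E) \;\ge\; \max_{e\in T_S(E)}\rk(T(A)|_{S_1,S_2}) \;\ge\; \min_{T_S}\max_{e\in T_S}\rk(T(A)|_{S_1,S_2}) \;=\; w(A),
\end{equation*}
and minimizing over $E$ yields $c(A)\ge w(A)$. The main obstacle is the tree bookkeeping in the first step: verifying that degree-$2$ suppression really yields a proper degree-$3$ socket tree, that $v(e)$ is well-defined, and crucially that the tensors $U,V$ produced at $v(e)$ depend only on their respective sides of the socket bipartition (this is where the fact that structural leaves carry no input variables matters). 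The subsequent rank computation is then a routine regrouping.
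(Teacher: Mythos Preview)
Your overall strategy---extract a socket tree from the execution tree and bound the cost by the rank of the flattening at each edge---is the same as the paper's. But the ``tree bookkeeping'' you flag as the main obstacle is in fact not verifiable as stated, and this is where the argument breaks.

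You claim that for every edge $e$ of $T_S(E)$ with bipartition $(S_1,S_2)$ there is an internal node $v(e)$ of $T_E$ whose two child subtrees contain exactly $S_1$ and exactly $S_2$ respectively, so that the two operands $U,V$ at that step depend only on $x_{S_1}$ and $x_{S_2}$ and ``no socket is touched after $v(e)$''. This is false. Take four sockets $a,b,c,d$ and the caterpillar execution: contract $a,b$ to $y$, then $y,c$ to $x$, then $x,d$ to the root. The induced socket tree has an internal edge giving the bipartition $\{a,b\}$ versus $\{c,d\}$, but no node of $T_E$ realizes it: the root's children give $\{a,b,c\}$ versus $\{d\}$, and $x$'s children give $\{a,b\}$ versus $\{c\}$. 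So in general one of the two operands at the relevant step depends on only a \emph{proper subset} of its side's sockets, and further sockets are contracted later.

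The paper's fix is to work asymmetrically. Choose an edge $\widetilde e$ of $T_E$ in the subdivision of $e$; the subtree on one side of $\widetilde e$ contains exactly the sockets $S_1$ (plus structural vertices), so the intermediate tensor $P$ computed there depends only on $x_{S_1}$. The \emph{entire remainder} of the network---which may involve $S_2$-sockets scattered above $\widetilde e$---is linear in $P$ and multilinear in $x_{S_2}$, so the form factors as $X_{S_1}\cdot A\cdot B\cdot X_{S_2}$ with the inner dimension equal to the volume of $P$. Hence $\rk M(\ts,e)\le \mathrm{vol}(P)$, and $\mathrm{vol}(P)$ is bounded by the cost of the step producing $P$. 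The point is that you only need \emph{one} side of the cut to be pure; the other side must be handled globally, not as a single contraction operand $V$.

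A secondary issue: the paper's socket tree has $\ell+1$ leaves, one for the output socket $E'$ as well as for each input. Your $T_S(E)$ has only the $\ell$ input sockets, so it is not a socket tree of $A$ in the paper's sense when $A$ is not a form. The paper handles this by first passing to the form $F(A)$ (turning the output into an $(\ell+1)$-st input), which costs nothing; you would need to do the same.
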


Indeed, for our running example \eqref{eq:overview form}, there are
low-width socket trees establishing that $w(A) \le n^3$, see
\eqref{fig:overview socket tree}.  However, that bound is tight: our
$\Omega(n^{\lceil 2 \cdot 4 / 3 \rceil}) = \Omega(n^3)$ lower bound
for the $\binom{4}{2}$-linear form (\cref{thm:lower bounds})
is obtained by proving  that $w(A) \ge n^3$ (\cref{lem:lb-binom-2}) and
appealing to \cref{thm:general lower bound}.
\slimfigure{1.0}{figures/intro_4choose2_tree}{overview socket tree}

\vspace{-0.5cm}

\subsection{Earlier and related work}
\label{sec:related work}

We now proceed to a more detailed discussion of earlier work.

\subparagraph*{Tensor networks}

The history of tensor networks
(or, alternatively, {\em tensor diagrams} or {\em diagrams}) as
an analytical and computational tool goes back to the \nth{19}-century to the works
by Cayley~\cite{Cayley1875,Cayley1879}, Clebsch~\cite{Clebsch1861},
Clifford~\cite{Clifford1878}, Sylvester~\cite{Sylvester1878},
and Kempe~\cite{Kempe1885,Kempe1892}. The diagrammatic form used here can be
traced back to Penrose~\cite{Penrose1971}. Some early appearances of tensor
diagrams are by Cvitanovi\`c~\cite{Cvitanovic1976},
Cvitanovi\`c and Kennedy~\cite{CvitanovicK1982},
Kuperberg~\cite{Kuperberg1991}, and many others. Surveys of tensor
diagrams can be found in Penrose and Rindler~\cite{PenroseR1986a} and
Landsberg~\cite{Landsberg2012}. Schrijver~\cite{Schrijver2015a} gives
a brief historical account.

A principal deviation from Penrose's notation that we make in this paper
is to work subject to a basis and a corresponding dual basis in each relevant
vector space to avoid distinction between primal and dual spaces.
This in particular enables a concise treatment of hyperedges and saves us
from considering orientation of edges, or the planar layout of edges
in a drawing. That is, we will view a tensor network combinatorially as a
hypergraph with tensors associated at the vertices, and with a subset
of the hyperedges designated to form the
boundary of the network (cf.~\cref{sec:tensor network defs} for the
precise definitions). A yet further conceptual difference is that we view
the execution of a tensor network as a sequence of contractions
of sets of {\em vertices} (tensors) rather than as contractions of
hyperedges (modes). This choice enables us to reduce the size of hyperedges
gradually before eliminating them during an execution, thus
enabling better granularity. For simplicity, we will restrict to
a purely multilinear framework and will not consider sums of networks
although such a study would be possible, and is pursued e.g.~in
Penrose's work~\cite{Penrose1971}.

A large body of existing work in applications (cf.~\cref{sect:nets}) studies
how to efficiently execute a {\em given} tensor network $D$.
Our quest in this paper differs from such studies in that
we study a multilinear map $A$, and seek to
{\em design the most efficient network $D$ that realizes $A$},
or to establish lower bounds for best-possible designs.
In particular, our upper and lower bounds in~\cref{thm:upper bounds}
and~\cref{thm:lower bounds} are over all tensor networks that realize
a particular map $A$ of interest.

\subparagraph*{Computational problems on tensors and tensor networks}

Problems on given tensors and tensor networks are known to be mostly
computationally hard as soon as the setting changes from matrices to
higher-order tensors.
H\aa{}stad~\cite{Hastad1990} showed that computing the rank of
an explicitly given $\ell$-tensor is NP-complete over finite fields
and NP-hard over rationals whenever $\ell\geq 3$. Hillar and Lim extended the latter result to any field containing $\mathbb{Q}$~\cite{HillarL2013}. They also showed that many other tensor problems such as the eigenvalue, singular value and spectral norm decision and approximation problems as well as rank-1 approximation problem for 3-tensors (over $\mathbb{R}$ and in some cases over $\mathbb{C}$) are NP-hard.

The task of finding the value of a given scalar-valued tensor network
is known to be $\#$P-complete (see e.g.~\cite{BacchusDP2003,BiamonteMT2015}).
Similarly, it is NP-hard to find the most efficient sequence of contractions
for a given network~\cite{LamSW1997,PfeiferHV2014}.

\subparagraph*{Tensor networks in applications}

Beyond our present use of tensor networks as a model of computation to efficiently evaluate multilinear maps, tensor networks are used across a broad range of applications. Accordingly, the following should be viewed as mere pointers to further literature on tensor networks, not as an exhaustive listing of all applications of tensor networks. Orus~\cite{Orus2014} gives an introduction to tensor networks in the context of computational physics. Itai and Landau~\cite{ItaiL2010} study quantum computation and quantum algorithms for evaluating tensor networks~\cite{ItaiL2010}. Solomonik and Hoefler study sparse tensor algebra as a model for parallel programming~\cite{SolomonikH2015}. The Holant framework
introduced by Valiant~\cite{Valiant2008} and studied further by
Cai {\em et al.}~\cite{CaiLX2011,CaiGW2016a} involves the study of multilinear
sum--product expressions that can be viewed as tensor networks.
Tensor networks appear naturally in the study of probabilistic graphical
models~\cite{KollerF2009,MoitraW2018,RobevaS2017,WainwrightJ2008},
and in the study of various machine-learning
problems~\cite{CichockiLOPZM2016,CichockiPZLOSM2017}.

\subparagraph*{Bilinear and multilinear complexity}

As was concisely outlined in \cref{sect:ideas}, for bilinear maps our present work reduces to the study of tensor rank of 3-tensors and an extensive body of work on bilinear complexity, with the arithmetic complexity of the matrix multiplication map as the primary driver. For two starting points to this literature, we refer to the monograph by B\"urgisser, Clausen, and Shokrollahi~\cite{Buergisser1997} and to the survey by Pan~\cite{Pan2014}. Our present work can be seen as generalizing this bilinear theory to higher orders of linearity via tensor networks and their executions. The current state of the art for fast matrix multiplication can be found in Le Gall~\cite{LeGall2012,LeGall2014}, Le Gall and Urrutia~\cite{LeGallU2017}, Vassilevska Williams~\cite{VassilevskaWilliams2012}, Cohn and Umans~\cite{CohnU2013}, Cohn, Kleinberg, Szegedy, and Umans~\cite{Cohn2005}, Benson and Ballard~\cite{BensonB2015}, and Huang, Rice, Matthews, and van de Geijn~\cite{HuangRMG2017}.

\subsection{Organization of this paper}

Section~\ref{sect:preliminaries} recalls preliminaries on tensors, multilinear maps, and branchwidth. Section~\ref{sect:examples} reviews the specific multilinear maps that we study in this work and describes for each map its associated tensor. In Section~\ref{sec:tensor network defs}, tensor networks, execution and cost of a tensor network, and cost of a multilinear map are defined. Section~\ref{sec:upper bounds} presents tensor-network algorithms for the multilinear maps introduced in Section~\ref{sect:examples}. In Section~\ref{sec:general lower bound}, a general lower bound on the cost of evaluating a multilinear map using tensor network is obtained. The lower bound is expressed in terms of the socket-width of a multilinear map. In Section~\ref{sec:socket-width lower bound}, lower bounds on socket-width for concrete multilinear maps studied in Sections~\ref{sect:examples} and~\ref{sec:upper bounds} are derived. Finally, Appendix~\ref{sect:exec} gives some background results on minimum-cost executions.

\section{Preliminaries}
\label{sect:preliminaries}

Throughout the paper $[n]$ denotes $\{1,2,\ldots,n\}$ and $\F$ denotes
an arbitrary fixed field.

\subsection{Tensors}
\label{sect:tensors-maps}

This section sets up our notation for tensors and multilinear maps.
We work with tensors and multilinear maps relative to fixed bases for
the respective vector spaces over $\F$.

\subparagraph*{Modes, indexing, and positions}

We will work with the following convention of positioning individual
entries inside a tensor.  Let $E$ be a finite set of {\em
  modes}. Associate with each mode $e\in E$ a finite nonempty {\em
  index set} $J(e)$. In this case we say that $E$ is a set of {\em
  indexed} modes.  The {\em length} of $e$ is $|J(e)|$. A {\em position}
  is an element $j\in \prod_{e\in
  E}J(e)$. Let us write $J(E)=\prod_{e\in E}J(e)$ for the set of all
positions with respect to the indexed modes $E$.  In the special case
that the set of modes $E$ is empty we define the set of positions
$J(E)$ to consist of a single element.

\subparagraph*{Tensors, matrices, vectors, and scalars}

\label{sect:tensor-def}

Let $E$ be a set of indexed modes. A {\em tensor} with respect to $E$ is
a mapping $T:J(E)\rightarrow \F$. Equivalently, we write $T\in\F^{J(E)}$
to indicate that $T$ is a tensor with respect to the indexed modes $E$.
We view the set $\F^{J(E)}$ of all tensors over $E$ as a vector space over $\F$
with addition and scalar multiplication of tensors defined entrywise.
We say that $|E|$ is the {\em order} of the tensor.  A tensor of order
zero is called a {\em scalar}, a tensor of order one is called a {\em
  vector}, and a tensor of order two is called a {\em matrix}.
The {\em volume} of the tensor is $|J(E)|$. The tuple $(|J(e)|:e\in E)$
is the {\em shape} of the tensor. It is convenient to use the
``$\times$''-symbol to punctuate the shape of a tensor; that is,
instead of writing, say $(2,3,4)$ for the shape, we write $2\times 3\times 4$.
For a position $j\in J(E)$ and a tensor $T\in\F^{J(E)}$,
we say that $T_j\in\F$ is the {\em entry} of $T$ at $j$.

A \emph{flattening} of $T$ induced by a bipartition $E_1 \cup E_2 = E$
of the modes of $T$ is a $|J(E_1)| \times |J(E_2)|$ matrix $M$ where, for $j_1 \in J(E_1)$ and $j_2 \in J(E_2)$ we have $M_{j_1,j_2} = T_{j_1j_2}$.
Given two order $\ell$ tensors $S \in \F^{[n_1] \times [n_2] \times \cdots \times [n_\ell]}$ and $T \in \F^{[m_1] \times [m_2] \times \cdots \times [m_\ell]}$, their \emph{Kronecker product} $S \otimes T$ is a tensor in $\F^{[n_1 m_1] \times [n_2 m_2] \times \cdots \times [n_\ell m_\ell]}$ defined by
$$
(S \otimes T)_{m_1(i_1-1)+j_1,m_2(i_2-1)+j_2,\ldots,m_\ell(i_\ell-1)+j_\ell} = S_{i_1,i_2,\ldots,i_\ell} T_{j_1,j_2,\ldots,j_\ell}.
$$

\subsection{Multilinear maps}

Let $E_1,E_2,\ldots,E_\ell,E'$ be pairwise disjoint sets of indexed modes
such that $E_1,E_2,\ldots,E_\ell$ are nonempty. We say that a map
$
A:\F^{J(E_1)}\times\F^{J(E_2)}\times\cdots \times\F^{J(E_\ell)}
      \rightarrow \F^{J(E')}
$
is an $\ell$-{\em linear map} if
$A$ is linear with respect to each of its $\ell$ inputs
individually when the other inputs remain fixed.
In particular, a $1$-linear map is a linear map.
A multilinear map that gives a scalar output is a multilinear {\em form}.
In particular, $A$ is a form if and only if $E'$ is empty.

\subparagraph*{The tensors of a multilinear map}

For an $\ell$-linear map
$
A:\F^{J(E_1)}\times\F^{J(E_2)}\times\cdots \times\F^{J(E_\ell)}\rightarrow \F^{J(E')}\,,
$
we define two slightly different tensors $T(A)$ and $\hat{T}(A)$.  Both are indexed by $J(E_1 \cup E_2 \cup \ldots \cup E_\ell \cup E')$ and at position $j_1j_2\ldots j_\ell j'$ take the value
\[
T(A)_{j_1j_2\ldots j_\ell j'} = \hat{T}(A)_{j_1j_2 \ldots j_\ell j'} =
A\bigl(e^{(j_1)},e^{(j_2)},\ldots,e^{(j_\ell)}\bigr)_{j'}\,,
\]
where $e^{(j_i)} \in \F^{J(E_i)}$ denotes the tensor with a $1$ in
position $j_i$ and $0$s in all other position.  The difference between
$T(A)$ and $\hat{T}(A)$ is their shape.  The shape of $T(A)$ is
$|J(E_1)| \times |J(E_2)| \times \cdots \times |J(E_\ell)| \times
|J(E')|$,
except if $A$ is a form in which case the $|J(E')|$ part is omitted.
Thus $T(A)$ is of order $\ell+1$ (or $\ell$ if $A$ is a form).  The
shape of $\hat{T}(A)$ is $(|J(e)|: e \in E_i, i \in [\ell+1])$, thus
its order is $|E_1| + |E_2| + \cdots + |E_\ell| + |E'|$.

In other words, each mode of $T(A)$ corresponds to one of the inputs
of $A$, or the output.  These inputs are in turn sets of indexed modes
so may contain more ``fine-grained'' structure, but this information
is lost at the level of granularity of $T(A)$.  When working with
tensor networks for evaluating $A$, we need to keep track of the
fine-grained mode structure because this is in many cases what allows
us to construct efficient algorithms, hence in most parts of the paper
we are more interested in the tensor $\hat{T}(A)$ which contains this
fine-grained structure.

On the other hand, $\hat{T}(A)$ does not contain information about
which modes are grouped together to form the inputs and output of $A$,
and this information is also important.  This leads us to the notion
of sockets, defined next.

\subparagraph*{Sockets}

Let us study the tensor $\hat{T}(A)$ with respect to the map~$A$.
We say that the modes in
$E_1\cup E_2\cup\cdots\cup E_\ell$ are the {\em input} modes of $\hat{T}(A)$,
and the modes in $E'$ are the {\em output} modes of $\hat{T}(A)$ with respect to $A$.
Let us say that $E_1, \ldots, E_\ell$ are the
{\em input sockets} of $\hat{T}(A)$ with respect to $A$. Similarly,
$E'$ is the {\em output socket} in $\hat{T}(A)$ with respect to $A$.
In particular, the output socket is empty if and only if $A$ is a form. To describe a socketing of the modes of a tensor, it is convenient to
use parentheses to group a ``$\times$''-punctuated shape of a tensor
into sockets, see also~\cref{sect:tensor-def}.

Let $\hat{T}$ be a tensor over a set of indexed modes $E$.
Any tuple $\mathcal{E} = (E_1,E_2,\ldots,E_\ell,E')$ of subsets of $E$
that partitions $E$ with $E_1,E_2,\ldots,E_\ell$ nonempty
defines an $\ell$-linear map
$A_{\mathcal{E}}(\hat{T}):\F^{J(E_1)}\times\F^{J(E_2)}\times\cdots \times\F^{J(E_\ell)}
\rightarrow\F^{J(E')}$ with $\hat{T}(A_{\mathcal{E}}(\hat{T}))=\hat{T}$. In this case the tuple
$(E_1,E_2,\ldots,E_\ell)$ gives the input sockets of $T$ and $E'$
is the output socket of $\hat{T}$ with respect to $A_{\mathcal{E}}(\hat{T})$.

We thus conclude that two multilinear maps $A_1,A_2$
may have the same base tensor $\hat{T}(A_1)=\hat{T}(A_2)$, and from a tensor $\hat{T}$ one may
obtain different multilinear maps by varying how the modes of $\hat{T}$
are assigned to input and output sockets.

\subparagraph*{The form of a multilinear map}

Let $A$ be a multilinear map with a nonempty output socket.
We can turn $A$ into a multilinear form $F(A)$ by turning
its output socket into an input socket.
Let us say that $F(A)$ is the {\em multilinear form} of $A$.
We also set $F(A)=A$ when $A$ is a multilinear form.

\subsection{Branch decompositions and branchwidth}

A {\em branch decomposition} of a (hyper)graph $G$ consists of
(i) a tree $T$ whose every vertex has degree either one or three, and
(ii) a bijection $\pi$ between the (hyper)edge set of $G$ and the set of
vertices of degree one in $T$. Deleting an edge $e\in E(T)$ from $T$
partitions $T$ into two subtrees $T_1$ and $T_2$ that via $\pi$ give rise
to a partition of the (hyper)edges of $G$ into two sets $E_1$ and $E_2$.
The {\em width} $w(e)$
of the partition induced by $e$ is the number of vertices of $G$ that are
incident to at least one (hyper)edge in $E_1$ and at least one (hyper)edge
in $E_2$. The {\em width} of the branch decomposition $(T,\pi)$ is the maximum
of the widths $w(e)$ for $e\in E(T)$. The {\em branchwidth} $\bw(G)$ of $G$
is the minimum width of a branch decomposition of $G$.

For graphs, we recall the following upper bound on branchwidth.

\begin{Claim}[Robertson and Seymour~\cite{RobertsonS1991}]
  For every $n \ge 3$, $\bw(K_n) = \lceil 2n/3 \rceil$.  Consequently, $\bw(G) \le \lceil 2 |V(G)|/3 \rceil$ for every graph $G$.
\end{Claim}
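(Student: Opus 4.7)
The plan is to prove $\bw(K_n) = \lceil 2n/3 \rceil$ by matching upper and lower bounds, and then derive the general statement $\bw(G) \le \lceil 2|V(G)|/3 \rceil$ by a monotonicity argument that turns any branch decomposition of $K_{|V(G)|}$ into one for $G$ of no greater width.

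For the upper bound $\bw(K_n) \le \lceil 2n/3 \rceil$, I would build an explicit branch decomposition $(T,\pi)$. Partition $V(K_n) = [n]$ into three sets $V_1, V_2, V_3$ of sizes $\lfloor n/3 \rfloor$ or $\lceil n/3 \rceil$, and build $T$ with three top-level subtrees meeting at a short central path; the subtree associated to $V_i$ recursively organizes the edges with at least one endpoint in $V_i$ that have not been assigned to an earlier subtree. The central tree edges have middle sets of the form $V_i \cup V_j$ for two of the three parts, giving width $|V_i|+|V_j| \le \lceil 2n/3 \rceil$; the internal tree edges inside each subtree have strictly smaller middle sets by the inductive hypothesis applied to the smaller complete graphs and bipartite graphs that arise.

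For the lower bound $\bw(K_n) \ge \lceil 2n/3 \rceil$, start with an arbitrary branch decomposition $(T,\pi)$. By a standard centroid-edge argument on the tree $T$, select an edge $e \in E(T)$ whose removal splits $E(K_n)$ into two sides $E_1, E_2$ with $|E_i| \ge \lceil \binom{n}{2}/3 \rceil$ for both $i=1,2$. Write $M$ for the middle set of the induced partition and $A_i$ for the vertices of $K_n$ incident only to edges of $E_i$. Since every edge in $E_i$ has both endpoints in $A_i \cup M$, we get $|E_i| \le \binom{|A_i|+|M|}{2}$, while $|A_1|+|A_2|+|M|=n$. Combining these with the balance condition and using the convexity of $\binom{x}{2}$ forces $|M| \ge \lceil 2n/3 \rceil$, which is the width of $e$ in $(T,\pi)$ and thus a lower bound on the width of the decomposition.

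For the consequence, given $G$ on $n \ge 3$ vertices, take an optimal branch decomposition $(T,\pi)$ of $K_n$ and delete from $T$ the leaves corresponding to non-edges of $G$, suppressing any degree-two vertices that arise. The resulting tree, with $\pi$ restricted to $E(G)$, is a branch decomposition of $G$, and every partition of $E(G)$ induced by an edge of the pruned tree is obtained from an $E(K_n)$-partition by dropping non-edges; since dropping edges can only remove elements from the middle set, the resulting width is at most $\lceil 2n/3 \rceil$. Corner cases where $G$ has very few edges are handled directly. The principal obstacle is the lower bound for $K_n$: the centroid argument provides balance, but extracting the exact ceiling $\lceil 2n/3 \rceil$ from the resulting inequality demands careful case analysis, which is why the original Robertson--Seymour proof typically proceeds through tangles as a dual certificate of high branchwidth.
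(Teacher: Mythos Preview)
The paper does not prove this claim; it is stated with attribution to Robertson and Seymour \cite{RobertsonS1991} and used as a black box. So there is no ``paper's own proof'' to compare against---your proposal goes beyond what the paper does by attempting an actual argument.

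That said, your lower-bound sketch has a genuine quantitative gap, not merely a need for ``careful case analysis.'' First observe that in $K_n$ the sets $A_1$ and $A_2$ cannot both be nonempty: if $u\in A_1$ and $v\in A_2$ then the edge $uv$ must lie in one of $E_1,E_2$, contradicting the definition of the other $A_i$. So without loss of generality $A_2=\emptyset$, which forces $E_2\subseteq\binom{M}{2}$. Combined with your centroid bound $|E_2|\ge\binom{n}{2}/3$ this gives only $\binom{|M|}{2}\ge\binom{n}{2}/3$, hence $|M|\gtrsim n/\sqrt{3}\approx 0.577n$, strictly weaker than $\lceil 2n/3\rceil$. (Without the $A_2=\emptyset$ observation your two inequalities summed give the even weaker $|M|\ge n(2/\sqrt{3}-1)\approx 0.155n$.) The centroid edge simply does not carry enough information; the tight bound genuinely requires either the tangle duality you allude to, or an argument at a centroid \emph{vertex} of the tree exploiting all three incident cuts simultaneously. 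Your upper-bound construction and the monotonicity argument for the consequence are fine in outline.
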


\section{Examples of multilinear maps}
\label{sect:examples}

This section reviews the specific multilinear maps that we study in this work.
Together with each map we describe its associated
tensor and a socketing of the tensor that realizes the map.

\subsection{Discrete Fourier transform}

\label{sect:dft}

Let $n\geq 2$ be an integer and let $\rho\in\F$ be a primitive
$n^\text{th}$ root of unity in the field $\F$.%
\footnote{That is, $\rho$ satisfies $\rho^n=1$ and $\rho^{n/k}\neq 1$
for all integer divisors $k\geq 2$ of $n$.}{}
Define the $n\times n$ symmetric matrix $\Phi$ with
entries $\Phi_{i,j}=\rho^{(i-1)(j-1)}$ for all $i,j\in [n]$.
The {\em discrete Fourier transform (DFT) of order $n$ at $\rho$} is the
linear map $L:\F^{[n]}\rightarrow \F^{[n]}$ defined for all $x\in\F^{[n]}$
by the matrix-vector multiplication $L(x)=\Phi x$.
In particular, the matrix $\Phi$ is the tensor
associated with $L$. The matrix $\Phi$ has two modes,
namely its rows and columns.
To realize $L$, take the columns of $\Phi$ as the input socket, and the rows
as the output socket.

\subsection{Determinant and permanent}

Let us write $S_n$ for the symmetric group of all permutations
$\sigma:[n]\rightarrow [n]$. Let $c(\sigma)$ be the number of
cycles in the cycle decomposition of $\sigma$,
where each fixed point of $\sigma$ is counted as a cycle.
Further standard examples of multilinear operators include the
determinant and permanent
\begin{align}
\label{eq:det and per}
\det A &=
\sum_{\sigma\in S_n}(-1)^{n-c(\sigma)}\prod_{i\in[n]}a_{i,\sigma(i)} &
\per A &=
\sum_{\sigma\in S_n}\prod_{i\in[n]}a_{i,\sigma(i)}\,,
\end{align}
both of which are $n$-linear in the $n$ columns (or the $n$ rows) of the
matrix $A=(a_{ij})_{i,j\in[n]}$.
The determinant and permanent are associated with order $n$ tensors
$\hat{T}^{(\det)}, \hat{T}^{(\per)} \in \F^{[n] \times [n] \times \cdots \times [n]}$
defined for all $j = (i_1,i_2,\ldots,i_n) \in [n] \times [n] \times \cdots \times [n]$ by

\begin{align}
\label{tensor:det}
\hat{T}^{(\det)}_{j}&=
\begin{cases}
(-1)^{n-c(j)} & \text{if } j \in S_n\\
0 & \text{otherwise}
\end{cases} &
\hat{T}^{(\per)}_{j}&=
\begin{cases}
1 & \text{if } j \in S_n\\
0 & \text{otherwise}.
\end{cases}
\end{align}
To realize the determinant and permanent using \eqref{tensor:det}, we socket $\hat{T}^{(\det)}$ and $\hat{T}^{(\per)}$ with the rows of $A$.
The determinant and permanent tensors are not closed under taking Kronecker products, because the Kronecker product of two tensors has the same order as its input tensors, but for each $n$ there is exactly one determinant (permanent) tensor of order $n$.

\subsection{Matrix multiplication}
\label{sect:mm}

Let $n$, $r$, and $m$ be positive integers.
Perhaps the most fundamental example of a bilinear map is the map
that multiplies
an $n\times r$ matrix $A=(A_{ij})_{i\in[n],\,j\in [r]}$ with
an $r\times m$ matrix $B=(B_{ij})_{i\in[r],\,j\in [m]}$
to obtain the $n\times m$ product matrix $C=(C_{ij})_{i\in [n]\,,j\in [m]}$
defined for all $i\in[n]$ and $j\in [m]$ by
\begin{equation}
\label{eq:mm}
C_{ij}=\sum_{k\in[r]} A_{ik}B_{kj}\,.
\end{equation}
It is natural to view the input $A\in\F^{[n]\times [r]}$ as a 2-tensor,
and similarly so for the input $B\in\F^{[r]\times [m]}$, and the
output $C\in\F^{[n]\times [m]}$. Thus, \eqref{eq:mm} is naturally
associated with the 6-tensor
$\hat{T} \in\F^{[n]\times [r]\times [r]\times [m]\times [n]\times [m]}$ with
entries defined for all $i_1,i_2\in [n]$, $j_1, j_2\in[m]$, and
$k_1,k_2\in[r]$ by
\begin{equation}
\label{eq:mm-t}
\hat{T}_{i_1\!k_1\!k_2j_1\!i_2j_2}=
\begin{cases}
1 & \text{if $i_1=i_2$ and $j_1=j_2$ and $k_1=k_2$};\\
0 & \text{otherwise}.
\end{cases}
\end{equation}
To realize \eqref{eq:mm} using \eqref{eq:mm-t}, we can use the socketing
grouped by parentheses in
$([n]\times [r])\times([r]\times [m])\times([n]\times [m])$, where the first
two groups are the two input sockets corresponding to $A$ and $B$,
and the last group is the output socket corresponding to $C$.

Let us write $\bra n,r,m\ket$ as a shorthand for the tensor \eqref{eq:mm-t}.
From \eqref{eq:mm-t} and the definition of the Kronecker product
it is immediate that matrix-multiplication tensors satisfy
\begin{equation}
\label{eq:mm-kron}
\bra n_1,r_1,m_1\ket
\otimes
\bra n_2,r_2,m_2\ket
=
\bra n_1n_2,r_1r_1,m_1m_2\ket\,.
\end{equation}
That is, matrix multiplication tensors are closed under taking  Kronecker products.

\subsection{Group algebra product}

Let $(\mathbb{A},+)$ be an Abelian group of order $n$.
Another fundamental example of a bilinear map is to convolve
two vectors $f\in\F^{\mathbb{A}}$ and $g\in\F^{\mathbb{A}}$ according to the
group operation of $\mathbb{A}$ to obtain
the vector $h=f*g\in\F^{\mathbb{A}}$, defined for all $k\in \mathbb{A}$ by
\begin{equation}
\label{eq:conv}
h_k=\sum_{j\in \mathbb{A}}f_{(k-j)}\,g_j\,.
\end{equation}
The map \eqref{eq:conv} is associated with the $3$-tensor
$\hat{T}\in\F^{\mathbb{A}\times\mathbb{A}\times\mathbb{A}}$ defined for all
$i,j,k\in \mathbb{A}$ by
\begin{equation}
\label{eq:conv-t}
\hat{T}_{ijk}=
\begin{cases}
1 & \text{$i+j=k$};\\
0 & \text{otherwise}.
\end{cases}
\end{equation}
A socketing of the three modes of \eqref{eq:conv-t} that
realizes \eqref{eq:conv}
is to take the first two modes as two input sockets corresponding
to $f$ and $g$, respectively, and to take the last mode as the
output socket corresponding to $h$. The vector space
$\F^\mathbb{A}$ equipped with the convolution product $*$ is the
{\em group algebra} $\F[\mathbb{A}]$.

\subsection{Kruskal operator}
\label{sect:kruskal}

Let $n_1,n_2,\ldots,n_\ell$ and $r$ be positive integers.
Matrix multiplication generalizes naturally to the $\ell$-linear task
of multiplying $\ell$ matrices
$A^{(1)}\in\F^{[n_1]\times [r]}$,
$A^{(2)}\in\F^{[n_2]\times [r]}$,
$\ldots$,
$A^{(\ell)}\in\F^{[n_\ell]\times [r]}$
to obtain the order-$\ell$ tensor
$Y\in\F^{[n_1]\times [n_2]\times \cdots\times [n_\ell]}$
defined for all
$(i_1,i_2,\ldots,i_\ell)\in[n_1]\times[n_2]\times\cdots\times[n_\ell]$ by
\begin{equation}
\label{eq:kruskal}
Y_{i_1i_2\cdots i_\ell}
=\sum_{j\in [r]} A^{(1)}_{i_1j}A^{(2)}_{i_2j}\cdots A^{(\ell)}_{i_\ell j}\,.
\end{equation}
This operator is known as the {\em Kruskal operator}~\cite{Kruskal1977,Kolda2006}%
\footnote{Kolda~\cite{Kolda2006} calls this operator the Kruskal operator. Kruskal~\cite{Kruskal1977} studied the case $\ell=3$ in particular.}{}
of the matrices $A^{(1)},A^{(2)},\ldots,A^{(\ell)}$.

The map \eqref{eq:kruskal} is associated with the $3\ell$-tensor
\[
\hat{T}\in\F^{%
[n_1]\times [r]\times
[n_2]\times [r]\times
\cdots \times
[n_\ell]\times [r]\times
\cdots \times
[n_1]\times [n_2]\times\cdots\times [n_\ell]}
\]
defined for all
$i_1,i_1'\in [n_1]$,
$i_2,i_2'\in [n_2]$,
$\ldots$
$i_\ell,i_\ell'\in [n_\ell]$
and
$j_1,j_2,\ldots,j_\ell\in [r]$
by
\begin{equation}
\label{eq:kruskal-t}
\hat{T}_{i_1j_1i_2j_2\cdots i_\ell j_\ell i_1'i_2'\cdots i_\ell'}=
\begin{cases}
1 & \text{if $i_1=i_1'$, $i_2=i_2'$, $\ldots$, $i_\ell=i_\ell'$ and $j_1=j_2=\ldots=j_\ell$};\\
0 & \text{otherwise}.
\end{cases}
\end{equation}
A socketing of \eqref{eq:kruskal-t} that realizes \eqref{eq:kruskal}
is to take each of the $\ell$ pairs of modes
$[n_1]\times[r]$, $[n_2]\times[r]$, $\ldots$, $[n_\ell]\times [r]$
as an input socket and the final $\ell$ modes
$[n_1]\times [n_2]\times\cdots\times[n_\ell]$ as the output socket.

Let us write $\bra n_1,n_2,\ldots,n_\ell|r\ket$ for the tensor
\eqref{eq:kruskal-t}. Analogously to the closure property \eqref{eq:mm-kron}
for matrix multiplication tensors, we observe that
\begin{equation}
\label{eq:kruskal-kron}
\bra n_1,n_2,\ldots,n_\ell|r\ket
\otimes
\bra n_1',n_2',\ldots,n_\ell'|r'\ket
=
\bra n_1n_1',n_2n_2',\ldots,n_\ell n_\ell'|rr'\ket\,.
\end{equation}
That is, Kruskal operator tensors are closed under taking Kronecker products.
Furthermore, we observe that $\bra n,r,m\ket=\bra n,m|r\ket$. That is, the
matrix multiplication tensor \eqref{eq:mm-t} is the special case
of the Kruskal operator tensor \eqref{eq:kruskal-t} when $\ell=2$.

\subsection{Homomorphism-counting forms}

\label{sect:binom}

Further multilinear operators arise
naturally by algebraization of combinatorial problems. For example,
the linear form of the matrix multiplication map
\begin{equation}
\label{eq:3-choose-2}
\sum_{i,j,k\in [n]}A_{ij}B_{jk}C_{ki}
\end{equation}
can be used to count the number of triangles in graph, and the form
\begin{equation}
\label{eq:4-choose-2}
\sum_{i,j,k,\ell\in [n]}
A_{ij}B_{ik}C_{i\ell}D_{jk}E_{j\ell}F_{k\ell}
\end{equation}
can be used to count the number of occurrences of any 4-vertex subgraph in
a graph by varying the six $n\times n$ matrices
$A,B,C,D,E,F\in\F^{[n]\times [n]}$. A more complicated
variant takes as input four $3$-tensors
$A, B, C, D \in\F^{[n]\times [n]\times [n]}$
of shape $n\times n\times n$ and considers the linear form
\begin{equation}
\label{eq:4-choose-3}
\sum_{i,j,k,\ell\in [n]}
A_{ijk}B_{ik\ell}C_{ij\ell}D_{jk\ell}\,.
\end{equation}
An $n^{4-\delta}$ time algorithm for the associated trilinear map $(A, B, C) \mapsto D$ would imply an algorithm for the Max 3-Sat problem with running time $(2-\delta')^n$ \cite{Williams2005}.

The forms \eqref{eq:3-choose-2}, \eqref{eq:4-choose-2}, and \eqref{eq:4-choose-3} are all special cases of what we call a \emph{homomorphism-counting form}
defined by a hypergraph $P$, or, succinctly, a $P$-{\em linear form}.
In more precise terms, let $P$ be a $k$-uniform hypergraph on $v \ge k$ vertices and write $\binom{[v]}{k}$ for the set of $k$-element subsets of $[v]$.  For each hyperedge $S=\{i_1,i_2,\ldots,i_k\} \in E(P)\subseteq \binom{[v]}{k}$ of $P$, let $X^{(S)}$ be an input tensor of shape $[n]^S$.  The \emph{$P$-linear form} of {\em order $n$} is the form
\begin{equation} \label{eq:homomorphism_form}
\sum_{\sigma \in [n]^{V(P)}} \prod_{S \in E(P)} X^{(S)}_{\sigma|_S},
\end{equation}
where $\sigma|_S$ is the restriction of $\sigma$ to the $k$ indices in $S$.
For example, the forms \eqref{eq:3-choose-2}, \eqref{eq:4-choose-2}, and \eqref{eq:4-choose-3} are the $P$-linear forms corresponding to $P$ being a triangle, a $K_4$, and a complete $3$-uniform hypergraph on $4$ vertices, respectively.
It is immediate that \eqref{eq:homomorphism_form} is an $|E(P)|$-linear map.

The map~\eqref{eq:homomorphism_form} is associated with a tensor $\hat{T}$ of order $k \cdot |E(P)|$ whose modes are $M =
\{\,(S,i)\,|\,S \in E(P), i \in S \,\}$ with index sets
$J((S,i)) = [n]$. The entries of $\hat{T}$ are defined by
\begin{equation}
\label{eq:linear-forms-t}
\hat{T}_j=
\begin{cases}
1 & \text{if $\exists \sigma \in [n]^{V(P)}$ such that $j_{(S,i)}=\sigma_i$ for every $(S,i) \in M$};\\
0 & \text{otherwise}.
\end{cases}
\end{equation}
A socketing of \eqref{eq:linear-forms-t} that realizes \eqref{eq:homomorphism_form} is given by one input socket for each hyperedge $S \in E(P)$ such that
the socket contains the $k$ modes $(S,i)$ for $i \in S$.

Let us observe that the tensors of the forms \eqref{eq:4-choose-2} and
\eqref{eq:4-choose-3} are actually the same -- they are both of order
$12$, have volume $n^{12}$, and are in fact equal to the outer product
of the $n \times n \times n$ identity tensor with itself $4$ times
after renaming of modes.  However, due to the difference in
socketing, the forms are computationally very different.
We show in \cref{sec:socket-width lower bound} that while there
are non-trivial tensor network algorithms for evaluating
\eqref{eq:4-choose-2}, no such algorithms exist for
\eqref{eq:4-choose-3}.

Let us write $\bra n\ket_P$
for the tensor $\hat{T}$ as defined
in \eqref{eq:linear-forms-t}. Analogously to the closure
properties \eqref{eq:mm-kron} and \eqref{eq:kruskal-kron}, we observe the
closure property
\begin{equation}
  \label{eq:binom-kron}
  \bra n\ket_P \otimes \bra n'\ket_P = \bra nn'\ket_P\,.
\end{equation}

\section{Tensor networks}
\label{sec:tensor network defs}

\subsection{Networks}

A {\em network} (or {\em diagram})
consists of a finite set $V$ of {\em vertices},
a finite set $E$ of {\em hyperedges},
an {\em incidence relation} $I\subseteq V\times E$, and
a {\em boundary} $B\subseteq E$.
A network is {\em nondegenerate} if every hyperedge is incident to at least
one vertex. In what follows we assume that all networks are nondegenerate.
A hyperedge $e\in E$ is a {\em loop} if $e\notin B$ and $e$ is incident
to exactly one vertex.

For a vertex $v\in V$, let us write $I(v)=\{e\in E:(v,e)\in I\}$
for the set of hyperedges incident to $v$. Dually, for an hyperedge $e\in E$, let
us write $I(e)=\{v\in V:(v,e)\in I\}$ for the set of vertices incident to $e$.
For a network $D$, we write $V(D)$, $E(D)$, $I(D)$, and $B(D)$ to
refer to the vertices of $D$, the hyperedges of $D$, the incidence relation
of $D$, and the boundary of $D$, respectively.

\subparagraph*{Induced networks}
For a network $D$ and a nonempty subset $W\subseteq V(D)$,
the {\em induced} network $D[W]$ consists of the vertices in $W$ together with the hyperedges of
$D$ that are incident to at least one vertex in $W$; the boundary of
$D[W]$ consists of all hyperedges that are at the boundary of $D$
or incident to a vertex outside $W$.  Formally,
\begin{equation}
\label{eq:induced}
\begin{aligned}
  V(D[W])&=W\,,\\
  E(D[W])&=\{e\in E(D):\text{$\exists w \in W$ s.t.~$(w,e)\in I(D)$}\}\,,\\
  I(D[W])&=I(D)\cap\left(V(D[W])\times E(D[W])\right)\,,\\
  B(D[W])&=\left(B(D)\cap E(D[W])\right)\cup
  \{e\in E(D[W])\!:\text{$\exists v \in V(D) \!\setminus\! W$ s.t.~$(v,e)\in I(D)$}\}.
\end{aligned}
\end{equation}
For a vertex $v\in V$, we abbreviate $D[v]=D[\{v\}]$.
Note that the boundary of $D[v]$ consists of all non-loop
hyperedges incident to $v$ in $D$.

\subsection{Tensor networks}

Let $D$ be a network.
We \emph{index} $D$ by associating with each hyperedge $e \in E$ an \emph{index set} $J(e)$ of size $\ell(e)$.  Induced networks inherit indexing by restriction.
Next we associate with each vertex $v\in V$ a tensor
$T(v)\in\F^{J(I(v))}$. We say that $D$ equipped with the tensors
$(T(v))_{v \in V}$ is a {\em tensor network}.

The \emph{value} of a tensor network $D$, or the tensor \emph{represented by} $D$, is a tensor $T(D) \in \F^{J(B)}$, defined for all $i \in J(B)$ by
\begin{equation}
\label{eq:tensor-rep}
T(D)_i=
\sum_{j\in J(E(D)\setminus B)}
\prod_{v\in V}T(v)_{ij}\,.
\end{equation}
Observe that in \eqref{eq:tensor-rep}
the positions $i$ and $j$ together identify a unique entry of $T(v)$
by projection to $J(I(v))$.
We also observe that the value of a tensor network with an empty boundary is a scalar.

\subsection{Contracting tensors}

Let $D$ be a tensor network and let $W\subseteq V(D)$ be a nonempty set
of vertices. Let $w$ be a new element not in $V$.
We may {\em contract} $W$ in $D$ to obtain a tensor network $D/W$ by replacing the sub-network $D[W]$ in $D$ with the single vertex $w$ whose associated tensor $T(w)$ is the tensor represented by $D[W]$.  Formally,
\begin{equation}
\label{eq:contraction}
\begin{aligned}
V(D/W)&=(V(D)\setminus W)\cup\{w\}\,,\\
E(D/W)&=E(D)\setminus\left(E(D[W])\setminus B(D[W])\right)\,,\\
I(D/W)&=\left(I(D)\setminus I(D[W])\right)\cup\{(w,e):e\in B(D[W])\}\,,\\
B(D/W)&=B(D)\,,\\
T(w)&=T(D[W])\,.
\end{aligned}
\end{equation}
The {\em cost} of contracting $W$ in $D$ is
$c(D,W)=\prod_{e\in E(D[W])}|J(e)|$.
The value of a tensor network is invariant under contraction, i.e.,
for all nonempty $W\subseteq V(D)$ it holds that $T(D)=T(D/W)$ (see
\cref{lem:invariance} for a proof).

\subsection{Execution and cost of a tensor network}

To compute the tensor $T(D)$ from a given tensor network $D$,
we may proceed by a sequence of contractions on $D$. Such a process is called
executing $D$, and the cost of $D$ is the cost of a minimum-cost execution of $D$. We proceed with the details.

Let $D=D_0$ be a tensor network with at least one tensor.
For $k=1,2,\ldots,t$, select a nonempty
subset $W_{k-1}\subseteq V(D_{k-1})$ such that $W_{k-1}$ has at least two
tensors or consists of a single tensor with a loop.
Set $D_k=D_{k-1}/W_{k-1}$ and observe that the number of tensors and/or modes
decreases by at least one in the contraction.
Suppose that $D_t$ is loopless and consists of a single tensor.
We say that such a sequence of contractions is an {\em execution} of $D$
in $t$ {\em steps}.
The {\em cost} of the execution is
$\max_{k=1,2,\ldots,t} c(D_{k-1},W_{k-1})$.
The cost of an execution in zero steps is defined to be $0$.

It is immediate that $D$ has at least one execution and every
execution consists of at most $2|V(D)|-1$ steps.  By invariance under
contractions, we have $T(D_t) = T(D)$.
The {\em cost} $c(D)$ of $D$ is the cost of a minimum-cost execution of $D$.

An execution of $D$ of cost $c(D)$ immediately translates into an
algorithm that computes $T(D)$ using $O(c(D)|V(D)|)$ arithmetic
operations in $\F$, since the contraction step $D_{k} =
D_{k-1}/W_{k-1}$ takes $O(c(D_{k-1}, W_{k-1})) \le c(D)$ time to
evaluate, and there are $O(V(D))$ steps.

\begin{restatable}{Lem}{BinaryExecutionLemma}
\label{lem:binary}
Let $D$ be a tensor network.
There exists a minimum-cost execution of $D$ such that each contracted
set has size at most two. Furthermore, if $D$ is loopless,
we can assume that each contracted set has size exactly two.
\end{restatable}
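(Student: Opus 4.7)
The plan is to start from an arbitrary minimum-cost execution $(W_0, W_1, \ldots, W_{t-1})$ of $D$ and refine in place any step that contracts a set of size at least three into a sequence of pairwise contractions of no larger cost. Single-vertex steps are already of size at most two, so the only work concerns steps with $|W_{k-1}| \geq 3$; after all refinements are completed, every contracted set has size at most two.

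To refine such a step $D_k = D_{k-1}/W$ with $|W| \geq 3$, I would pick any rooted binary tree whose leaves are the vertices of $W$ and perform the $|W|-1$ pairwise contractions encoded by its internal nodes, inside $D_{k-1}$. Invariance of the value of a tensor network under contraction (the property stated just after \eqref{eq:contraction}) guarantees that the final state of this sequence matches $D_{k-1}/W = D_k$, so the tail of the execution is unaffected. The key cost estimate is that every intermediate binary merge has cost at most $c(D_{k-1}, W) = \prod_{e \in E(D_{k-1}[W])} |J(e)|$. Indeed, after a partial sequence of contractions within $W$, each surviving super-vertex corresponds to a subset $A \subseteq W$ whose incident modes are exactly $B(D_{k-1}[A])$; merging two such super-vertices $A$ and $B$ therefore has cost $\prod_{e \in B(D_{k-1}[A]) \cup B(D_{k-1}[B])} |J(e)|$, and since both boundary sets are subsets of $E(D_{k-1}[W])$ and all $|J(e)| \geq 1$, this product is bounded by $c(D_{k-1}, W)$. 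Consequently the refined execution has the same maximum step cost, so it is still a minimum-cost execution.

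For the ``exactly two'' assertion it suffices to show that if $D$ is loopless then so is every network $D_k$ appearing in an execution; then size-one contractions, which by definition require a loop, never occur, every $W_{k-1}$ has size at least two, and the refinement above forces the size to be exactly two. Looplessness is preserved by a single contraction $D \mapsto D/W$: inspecting \eqref{eq:induced} and \eqref{eq:contraction}, a hyperedge $e$ surviving the contraction is either (a) not in $E(D[W])$, hence incident in $D/W$ to the same vertices it was incident to in $D$, none of which are in $W$, or (b) a member of $B(D[W])$, which either places $e$ in $B(D) = B(D/W)$ or forces $e$ to be incident to some vertex outside $W$ in $D$ and hence to both the new vertex and that outside vertex in $D/W$. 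In every case $e$ is either a boundary edge of $D/W$ or incident to at least two vertices of $D/W$, so it is not a loop. Iterating gives looplessness along the whole execution.

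The main obstacle is the cost bookkeeping for the binary refinement: one must track how the set of incident modes of each super-vertex evolves during partial contractions inside $W$ and verify that it remains a subset of $E(D_{k-1}[W])$ throughout. Once this is in place, the remainder of the argument reduces to unpacking the definitions in \eqref{eq:induced} and \eqref{eq:contraction} and invoking invariance of the value under contraction.
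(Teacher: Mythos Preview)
Your proposal is correct and follows essentially the same approach as the paper: refine any contraction of size at least three into a sequence of pairwise contractions and observe that each pairwise step touches only modes already in $E(D_{k-1}[W])$, hence costs at most $c(D_{k-1},W)$. The paper peels off one vertex at a time (the left-comb binary tree) whereas you allow an arbitrary binary tree, and you spell out explicitly the looplessness-preservation argument that the paper uses implicitly for the ``exactly two'' clause; these are minor presentational differences, not a different route.
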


In what follows we restrict to consider loopless $D$ only.
Thus while a general execution may contract arbitrary vertex sets in
$D$ in each step, we may assume without loss of generality that
the minimum-cost execution has the structure of a rooted binary tree,
whose leaves are the vertices of the tensor network, and each internal
vertex is the tensor obtained by contracting its two children.

\subsection{Cost of a multilinear map}
\label{sec:map cost}

Let us now define the cost of a multilinear map via the minimum-cost
tensor networks (and socketing) for evaluating the map.
That is, the cost of a multilinear map is defined in
terms of the best tensor network that implements the map.
In more precise terms, let
\[
A:\F^{J(E_1)}\times\F^{J(E_2)}\times\cdots \times\F^{J(E_\ell)}
      \rightarrow \F^{J(E')}
\]
be an $\ell$-linear map. Consider the tensor $\hat{T}(A)$ of $A$ and the
associated input sockets $E_1,E_2,\ldots,E_\ell$ and the output socket
$E'$. Let $D^*$ be an arbitrary tensor network such that $T(D^*)=\hat{T}(A)$ and
the boundary satisfies $B(D^*) =E_1\cup E_2\cup\cdots\cup E_\ell\cup E'$.
Modify the network $D^*$ as follows.
For each $k=1,2,\ldots,\ell$, introduce a new vertex to $D^*$,
make the new vertex incident to each of the modes in the input socket $E_k$,
and associate the new vertex with a tensor $X^{(k)}\in\F^{J(E_k)}$.
Remove the modes $E_1\cup E_2\cup\cdots\cup E_\ell$ from the boundary of $D^*$.
Let us denote the resulting network by $D$ and call the introduced $\ell$
new vertices the {\em socket vertices} of $D$.
We observe that $B(D)=E'$ and $A(X^{(1)},X^{(2)},\ldots,X^{(\ell)})=T(D)$.
Furthermore, the cost $c(D)$ is independent of the value of
$X^{(k)}\in\F^{J(E_k)}$ for $k=1,2,\ldots,\ell$.
We say that $D$ is a \emph{realization} of $A$ if it can be obtained
from $A$ by this process, and write $\mathscr{D}(A)$ for the set of all
tensor network realizations $D$ of $A$.

The {\em cost} of the map $A$ is
$c(A)=\min_{D\in\mathscr{D}(A)} c(D)$.
In particular, we observe that the minimum exists since the cost of a
tensor network is a nonnegative integer and the family $\mathscr{D}(A)$
is nonempty.

\section{Upper bounds on cost}
\label{sec:upper bounds}

This section presents tensor-network algorithms for the maps
introduced in \S\ref{sect:examples}. We start with our key technical result
that cost is submultiplicative (Theorem~\ref{thm:general upper bound},
stated formally as Theorem~\ref{thm:submul-am}),
which then enables us to represent essentially the fastest known algorithms
using tensor networks, and, in the case of $P$-linear forms, also
improve on earlier work as reviewed in \S\ref{sect:our}.

\subsection{Submultiplicativity of cost}
\label{sec:submultiplicativity}

Let $E_1,E_2,\ldots,E_\ell,E'$ be pairwise disjoint sets of indexed modes
such that $E_1,E_2,\ldots,E_\ell$ are nonempty. Let
\[
A:\F^{J(E_1)}\times\F^{J(E_2)}\times\cdots \times\F^{J(E_\ell)}
      \rightarrow \F^{J(E')}
\]
be an $\ell$-linear map.
For a positive integer $k$, we
define the $\ell$-linear map $A^{\otimes k}$ such that its tensor satisfies $T(A^{\otimes k})=T(A)^{\otimes k}$.  Then
\[
A^{\otimes k}:\F^{J(E_1)^k}\times\F^{J(E_2)^k}\times\cdots \times\F^{J(E_\ell)^k}
      \rightarrow \F^{J(E')^k}.
\]
Note that $T(A^{\otimes k}) = T(A)^{\otimes k}$ is the $k$-fold
Kronecker product of $T(A)$ with itself -- that is, it has the same order,
but the index sets are larger -- whereas $\hat{T}(A^{\otimes k})$ is
the $k$-fold outer product of $\hat{T}(A)$ with itself -- that is, its
index sets have the same sizes, but its order is $k$ times larger.

Let $D$ be a diagram that realizes $A$ and let $\td$ be an execution
tree for $D$.
For each internal vertex $x$ in $\td$ (that is, a vertex obtained by contraction),
define the {\em amortized cost} of $x$ by splitting into the following
three cases:
\begin{enumerate}[(i)]
\item
if neither of the two subtrees of $x$ contains a socket vertex,
the amortized cost of $x$ is $1$;
\item
if exactly one of the subtrees of $x$,
say, the subtree rooted at $y$ (where $x$ and $y$ are adjacent in $\td$),
contains at least one socket vertex, the amortized cost of
$x$ is the maximum of the volume of the tensor at $x$ and the volume of
the tensor at $y$;\footnote{Here, it is crucial to note that the volume of the other subtree rooted at $x$, only containing non-socket vertices, does not contribute directly to the amortized cost of $x$.}
\item
if both of the subtrees of $x$ contain at
least one socket vertex, the amortized cost of $x$ is the
cost of the contraction to obtain $x$.
\end{enumerate}

The {\em amortized cost} $a(\td)$ of $\td$ is the maximum of the amortized
costs of the internal vertices of $\td$. Since the amortized cost of each
internal vertex of $\td$ is at most its cost, we have $a(\td)\leq c(\td)$.
Furthermore, we observe that the amortized cost of $x$ in case (ii) above
may be strictly less than the cost of the contraction to obtain $x$.
In particular, in (ii) the amortized cost is defined
{\em not by the cost of the contraction but rather by volume}.
This is because in a $k^{\text{th}}$ Kronecker power we can amortize
the cost of the aggregate transformation in case (ii) not with
a single contraction but with a sequence of $k$ contractions.
This observation will form the heart of the proof of
Theorem~\ref{thm:general upper bound}.

Before proceeding with the proof, let us illustrate the
key ideas in visual terms. Let us start with the three illustrations in
\eqref{fig:diag-base}.
\slimfigure{1.0}{figures/ub_tensor_rank}{diag-base}
Suppose the leftmost network in \eqref{fig:diag-base}
is socketed so that the two modes at the top form the output socket,
and the four modes at the bottom form two input sockets so that modes in the
same socket are incident to the same vertex.
In the middle in \eqref{fig:diag-base}, we have adjoined two socket
vertices to the input sockets to obtain
a realization $D$. On the right in \eqref{fig:diag-base}, we display an
execution tree $\td$ for $D$. Observe that the bottom-most internal vertices
of $\td$, and the top-most internal vertex of $\td$, have type (ii). The
internal vertex in the center has type (iii). (There are no internal vertices
of type (i).) Supposing that all the modes have length at least $2$,
we also observe that the vertices of type (ii) have
amortized cost strictly less than their contraction cost.

Let us now consider the $k^{\text{th}}$ power
of \eqref{fig:diag-base} visually, for $k=4$:

\slimfigure{0.44}{figures/ub_kronecker}{diag-base-kron}

The leftmost network in \eqref{fig:diag-base-kron} depicts the
$k$-fold outer product of the network on the left
in \eqref{fig:diag-base} with itself. Observe that we simply take $k$ copies of the
network, but that for the purposes of the visualization we have taken care to draw the $k$ copies of each mode together for the socketing. In the middle in \eqref{fig:diag-base-kron},
we have adjoined two socket vertices to the input sockets to obtain
a realization $D^{\otimes k}$ of $A^{\otimes k}$. On the right in \eqref{fig:diag-base-kron},
we display an execution tree $\tdk$ for $D^{\otimes k}$. Observe how each
of the internal vertices of type (ii) in $\td$ gets expanded to a sequence
of $k$ internal vertices in $\tdk$. This transformation from $\td$ to $\tdk$
is the gist of the following theorem.

\begin{Thm}[Formal statement of \cref{thm:general upper bound}]
\label{thm:submul-am}
Let $D$ be an arbitrary realization of $A$ and let $\td$ be an arbitrary
execution tree for $D$. For all positive integers $k$, we have
\begin{equation}
\label{eq:submul-am}
c(A^{\otimes k})\leq a(\td)^{k-1} c(\td)\,.
\end{equation}
Furthermore, this realization of $A^{\otimes k}$ consists of
at most $k \cdot |V(D)|$ vertices.
\end{Thm}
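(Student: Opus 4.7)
The plan is to explicitly construct a realization $D^{\otimes k}$ of $A^{\otimes k}$ together with an execution tree $\tdk$ whose cost satisfies the claimed bound. The realization consists of $k$ disjoint copies of the non-socket vertices of $D$, together with a single socket vertex per input socket that is made incident to the union (over the $k$ copies) of the modes in that socket; this gives at most $k|V(D)|$ vertices, and the outer-product structure of $\hat{T}(A^{\otimes k})$ together with multilinearity yields $T(D^{\otimes k}) = T(A^{\otimes k})$.

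The execution tree $\tdk$ is built recursively from $\td$, handling each internal vertex $x$ according to its type. For a type (i) vertex $x$ (no socket vertex in the subtree), duplicate the subtree in each of the $k$ copies and execute each independently. For a type (iii) vertex $x$ with socket-containing children $y_1$ and $y_2$, recursively build $k$-fold subtrees that produce tensors with the mode sets of $y_1$ and $y_2$ replicated $k$ times, and contract them in a single step. The main design choice is for a type (ii) vertex $x$ with socket-containing child $y$ and socket-free child $z$: recursively build the $k$-fold subtree rooted at $y$ (producing a single shared tensor of volume $|y|^k$), compute each copy $T_z^{(1)},\ldots,T_z^{(k)}$ by treating the $z$-subtree as type (i), and then merge them into the aggregate via a chain of $k$ contractions, where step $m$ combines the current aggregate with $T_z^{(m)}$.

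The cost analysis proceeds by type. For type (i), each contraction takes place inside a single copy and has the same cost as in $\td$, hence at most $c(\td) \leq a(\td)^{k-1} c(\td)$, since amortized costs are at least $1$. For type (iii), a single contraction of two tensors of volumes $|y_1|^k$ and $|y_2|^k$ whose shared modes have length-product $|y_1 \cap y_2|^k$ has cost $c_x^k = a_x^{k-1} c_x \leq a(\td)^{k-1} c(\td)$, using $a_x = c_x$ in this case. For type (ii), the aggregate before step $m$ has volume $|x|^{m-1}|y|^{k-m+1}$, representing $m-1$ completed and $k-m+1$ pending copies; step $m$ introduces only the non-shared modes of the $m$-th copy of $z$, incurring cost $|x|^{m-1}|y|^{k-m} \cdot c_x \leq \max(|x|,|y|)^{k-1} c_x = a_x^{k-1} c_x \leq a(\td)^{k-1} c(\td)$.

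The main obstacle is precisely the type (ii) step: a naive execution would pay the full $c_x$ in each of the $k$ chain steps, yet the claimed bound permits only a single factor of $c_x$, with the remaining $k-1$ factors coming from volumes rather than contraction costs. The chain construction achieves this by exploiting that the socket-free side of the contraction is available as $k$ independent copies, whereas the socket-containing side is shared across copies, so that the $y \cap z$ modes are ``amortized away'' one copy at a time. A secondary bookkeeping point is to verify inductively that each recursive subtree of $\tdk$ computes precisely the $k$-fold outer product of the corresponding tensor in $\td$, which follows from the shared-socket incidence structure together with multilinearity.
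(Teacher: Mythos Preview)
Your proposal is correct and follows essentially the same construction and case analysis as the paper's proof: the realization $D^{\otimes k}$ is built from $k$ copies of the non-socket subnetwork joined at shared socket vertices, and the execution tree $\tdk$ is obtained by handling each internal vertex of $\td$ according to its type, with the key type (ii) case implemented as a chain of $k$ contractions whose $m$-th step costs $|x|^{m-1}|y|^{k-m}c_x \le a(\td)^{k-1}c(\td)$. The only difference is presentational---you phrase the construction recursively while the paper builds it bottom-up from leaves to root---but the underlying argument is identical.
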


\begin{Proof}
Let $D^*$ be the subnetwork of $D$ with $T(D^*)=\hat{T}(A)$. That is,
$D^*$ is the network induced by all the non-socket vertices of $D$.
Taking $k$ disjoint copies of $D^*$, we obtain a network whose tensor is $\hat{T}(A^{\otimes k})$. Attaching the resulting network
to tensors at sockets gives a realization of $A^{\otimes k}$.
Let us write $D^{\otimes k}$ for this realization.

To establish \eqref{eq:submul-am}, it suffices to construct an execution
tree $\tdk$ for $D^{\otimes k}$ whose cost satisfies
$c(\tdk)\leq a(\td)^{k-1} c(\td)$. We construct $\tdk$
by rewriting $\td$ from leaves towards the root
to consider the $k$ copies of each vertex in $D^*$. We start with leaf vertices which are the vertices of $D^{\otimes k}$.
We split the process into cases (i), (ii), and (iii)
as in the definition of amortized cost.
Let $x$ be the internal vertex of $\td$ that we are currently considering.

In case (i), we perform the contraction indicated by $x$ in each of the
$k$ copies of $D^*$ in $D^{\otimes k}$ individually. This creates $k$ new internal vertices
in $\tdk$ that are all copies of $x$.  We set these $k$ vertices as the vertices
that correspond to $x$ in the subsequent steps. Each of these contractions in $\tdk$ has the
same cost as the contraction indicated by $x$ in $\td$. This cost is less or equal than $c(\td)$.

In case (ii), let $y$ be the child of $x$ in $\td$ such that the subtree rooted at $y$ contains a socket vertex, and
let $z$ be the other child of $x$ in $\td$.
There is a single vertex in
$\tdk$ corresponding to $y$ and $k$ identical vertices
in $\tdk$
corresponding to $z$. We contract these $k$ vertices individually each with
the vertex that corresponds to $y$. This creates $k$ new internal vertices
in $\tdk$, where we set the topmost vertex as the vertex that corresponds
to $x$ in the subsequent steps. After the $i$th step, the corresponding tensor has $i$ copies of modes of $x$ and $k-i$ copies of modes of $y$. The cost of the  contraction in the $i$th step is the cost of contracting $y$ and $z$ in $\td$ multiplied by the the volume of $y$ to the power $k-i$ and the volume of $x$ to the power $i-1$. Since the volumes of $x$ and $y$ are less than or equal to $a(\td)$, this cost is less than or equal to $a(\td)^{k-1}c(\td)$.

In case (iii), let $y$ and $z$ be the two child vertices of $x$ in $\td$.
By the structure of the earlier steps, we have that a single vertex in
$D^{\otimes k}$ corresponds to $y$,
and similarly for $z$. We contract these two vertices. This creates one new
internal vertex in $\tdk$, which we set as the vertex that corresponds
to $x$ in the subsequent steps. This tensor has $k$ copies of modes of $x$. The cost of this contraction in $\tdk$ is the cost of the corresponding contraction in $\td$ to the $k^\text{th}$ power, because both tensors have $k$ copies of all modes compared to $y$ and $z$. By definition, in case (iii) the amortized cost of contracting $y$ and $z$ is the same as the cost of contracting $y$ and $z$. Hence the cost of this contraction in $\tdk$ is less or equal than $a(\td)^k \leq a(\td)^{k-1} c(\td)$.

This rewriting process produces an execution tree $\tdk$
for $D^{\otimes k}$ with $c(\tdk)\leq a(\td)^{k-1} c(\td)$.
\end{Proof}

An immediate corollary is that tensor networks can use low rank
decompositions of $T(A)$ to efficiently evaluate $A^{\otimes k}$.

\begin{Cor}[Submultiplicativity of low-rank executions]
  \label{cor:low rank execution}
  Let $A: \F^{J(E_1)} \times \F^{J(E_2)} \times \cdots \times
  \F^{J(E_\ell)} \rightarrow \F^{J(E')}$ be a multilinear map.  Define
  $m \!=\! \max\{|J(E_1)|, |J(E_2)|, \ldots, |J(E_\ell)|, |J(E')|\}$ and
  $r = \rk T(A)$.  Then $c(A^{\otimes k}) \le \max(r, m)^{k} \min(r, m)$
\end{Cor}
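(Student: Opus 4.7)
The plan is to apply \cref{thm:submul-am} by exhibiting a realization $D\in\mathscr{D}(A)$ and an execution tree $\td$ for $D$ with $c(\td)\le rm$ and $a(\td)\le\max(r,m)$; the estimate $c(A^{\otimes k})\le a(\td)^{k-1}c(\td)\le\max(r,m)^{k-1}\cdot rm=\max(r,m)^{k}\min(r,m)$ then yields the corollary.

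I would start from a rank-$r$ decomposition $T(A)=\sum_{s=1}^r u^{(1,s)}\otimes u^{(2,s)}\otimes\cdots\otimes u^{(\ell+1,s)}$, with $u^{(k,s)}\in\F^{J(E_k)}$ for $k\le\ell$ and $u^{(\ell+1,s)}\in\F^{J(E')}$ (if $A$ is a form, the last factor is a scalar, and $U^{(\ell+1)}$ below degenerates to a vector). Encode this decomposition as a tensor network with $\ell+1$ non-socket vertices $U^{(1)},\ldots,U^{(\ell+1)}$ joined by a \emph{single hyperedge} $e$ of length $r$, where $U^{(k)}$ is additionally incident to each of the individual modes in $E_k$ (or $E'$ for $k=\ell+1$) and carries the tensor $U^{(k)}_{j,s}=u^{(k,s)}_j$. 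A direct computation gives $T(D^*)_j=\sum_s\prod_k u^{(k,s)}_{j|_{E_k}}=\hat{T}(A)_j$, so attaching input socket vertices $X^{(1)},\ldots,X^{(\ell)}$ yields a realization $D\in\mathscr{D}(A)$. The execution $\td$ proceeds in three phases: (1)~for $k=1,\ldots,\ell$, contract $\{X^{(k)},U^{(k)}\}$, involving the socket modes (product of lengths $|J(E_k)|\le m$) and the hyperedge $e$, at cost $|J(E_k)|\cdot r\le rm$, producing a length-$r$ vector $V_k$ incident only to $e$; (2)~merge $V_1,\ldots,V_\ell$ pairwise, each such contraction involving only $e$ at cost $r$; (3)~contract the resulting length-$r$ vector with $U^{(\ell+1)}$ at cost $r\cdot|J(E')|\le rm$.

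For the amortized cost, the phase-(1) and phase-(3) contractions are case~(ii): each has exactly one child whose subtree contains a socket vertex, and the two relevant tensor volumes are bounded by $r$ and $m$, giving amortized cost $\le\max(r,m)$. Every phase-(2) contraction is case~(iii), with amortized cost equal to its cost~$r$. Case~(i) never arises, because the only subtree without a socket vertex at any stage is the leaf $U^{(\ell+1)}$, which participates in exactly one contraction. Hence $c(\td)\le rm=\max(r,m)\min(r,m)$ and $a(\td)\le\max(r,m)$, and \cref{thm:submul-am} finishes the proof. The key design choice -- and the only place any subtlety appears -- is to use a single shared hyperedge of length $r$ rather than an order-$(\ell+1)$ central diagonal tensor connected by $\ell+1$ ordinary edges: both encode the same sum over rank-one terms, but the latter would force every contraction touching the center to pay the prohibitive cost $r^{\ell+1}$, whereas the hypergraph formulation keeps every contraction at cost at most~$rm$.
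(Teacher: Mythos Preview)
Your proposal is correct and follows essentially the same star-like network approach as the paper; the paper's own proof is a one-sentence pointer to the star topology in \eqref{fig:diag-base}, and you have carefully supplied the details (the single shared hyperedge of length $r$, the three-phase execution, and the case-(ii)/(iii) amortized-cost bookkeeping) that the paper leaves implicit. Your closing remark on why a shared hyperedge is preferable to an explicit central diagonal tensor is a useful clarification not spelled out in the paper.
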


\begin{Proof}
  By taking a star-like network topology (as in \eqref{fig:diag-base})
  we get an execution with $a(\td) = \max(r, m)$ and cost $c(\td) = m \cdot
  r$.
\end{Proof}

\subsection{Fast matrix multiplication}
\label{sec:matrix mul upper bound}

Let us now illustrate the use of Theorem~\ref{thm:submul-am} by capturing
fast matrix multiplication with tensor networks. We start by considering square
matrix multiplication and then proceed to rectangular matrix multiplication.

We recall that the matrix multiplication exponent $\omega$
(cf.~\S\ref{sect:our}) satisfies \cite[Proposition~15.1]{Buergisser1997}
\begin{equation}
\label{eq:omega-inf}
\omega = \inf \!\left\{ \frac{\log \rk\,\bra n, n, n \ket_3}{\log n} \,:\, n \ge 2 \right\}\,.
\end{equation}
Above in \eqref{eq:omega-inf} we write $\bra n, n, n \ket_3$ for the
3-tensor obtained from the 6-tensor $\bra n, n, n \ket$ by flattening
the two modes corresponding the rows and columns of each of the three matrices.
That is, $\bra n, n, n \ket_3$ has shape $n^2\times n^2\times n^2$.

\begin{Lem}
\label{lem:mm}
For all $\epsilon>0$ it holds that two $n\times n$ matrices may be
multiplied in $O(n^{\omega+\epsilon})$ operations by executing a
tensor network.
\end{Lem}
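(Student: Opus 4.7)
The plan is to combine three ingredients: the infimum characterization \eqref{eq:omega-inf} of $\omega$, the Kronecker closure property of matrix multiplication tensors \eqref{eq:mm-kron}, and the submultiplicativity corollary \cref{cor:low rank execution}.

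First I would fix $\epsilon>0$ and invoke \eqref{eq:omega-inf} to choose a constant integer $n_0=n_0(\epsilon)$ such that $r:=\rk\,\bra n_0,n_0,n_0\ket_3$ satisfies $r\leq n_0^{\omega+\epsilon/2}$. Let $A$ denote the $n_0\times n_0$ matrix multiplication map. Its tensor $T(A)$ is exactly $\bra n_0,n_0,n_0\ket_3$, a $3$-tensor of shape $n_0^2\times n_0^2\times n_0^2$, so $\rk T(A)=r$ and $m:=\max\{|J(E_1)|,|J(E_2)|,|J(E')|\}=n_0^2$. Since $\omega\geq 2$, we have $m=n_0^2\leq n_0^{\omega+\epsilon/2}$, hence $\max(r,m)\leq n_0^{\omega+\epsilon/2}$.

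Next I would apply \cref{cor:low rank execution} to the map $A$: for every positive integer $k$,
\[
c(A^{\otimes k})\leq \max(r,m)^k\cdot\min(r,m)\leq n_0^{(\omega+\epsilon/2)(k+1)}.
\]
By the Kronecker closure property \eqref{eq:mm-kron}, the map $A^{\otimes k}$ is precisely the $n_0^k\times n_0^k$ matrix multiplication map. Setting $N=n_0^k$, the realization has cost bounded by $n_0^{\omega+\epsilon/2}\cdot N^{\omega+\epsilon/2}=O(N^{\omega+\epsilon/2})$, where the implicit constant depends only on $n_0$ and hence on $\epsilon$.

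Finally, for an arbitrary $n$, I would pad the inputs with zeros to the next power of $n_0$, i.e.\ to $N=n_0^{\lceil\log_{n_0}n\rceil}\leq n_0\cdot n$, incurring only a constant factor in $N$, which yields cost $O(n^{\omega+\epsilon/2})$. Converting the cost of the tensor network execution into the actual number of arithmetic operations introduces a factor of $|V(D^{\otimes k})|$, which by \cref{thm:submul-am} is at most $k\cdot|V(D)|=O(\log n)$, giving the total operation count $O(n^{\omega+\epsilon/2}\log n)=O(n^{\omega+\epsilon})$. The main step is the direct invocation of \cref{cor:low rank execution}; the only real bookkeeping is (i) handling non-powers of $n_0$ by padding and (ii) absorbing the logarithmic vertex-count factor and the $n_0$-dependent constants into the $\epsilon$ slack via the initial choice $\epsilon/2$ rather than $\epsilon$.
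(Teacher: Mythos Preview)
Your proof is correct and follows essentially the same approach as the paper's: both fix a constant base size from \eqref{eq:omega-inf}, exploit the Kronecker closure \eqref{eq:mm-kron}, and apply submultiplicativity (\cref{thm:submul-am}) to a star-like rank decomposition, then pad and absorb the $O(k)$ vertex-count factor into the $\epsilon$ slack. The only cosmetic difference is that the paper unrolls the star-like network explicitly with tensors $\alpha,\beta,\gamma$ and invokes \cref{thm:submul-am} directly, whereas you invoke its packaged form \cref{cor:low rank execution}.
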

\begin{Proof}
Fix an $\epsilon>0$. By \eqref{eq:omega-inf} we can
let $\alpha,\beta,\gamma$ be three $3$-tensors of
shape $(c\times c)\times d$ for positive integer constants $c\geq 2$ and $d$
with $c^2\leq d\leq c^{\omega+\epsilon/2}$
such that $\alpha,\beta,\gamma$ decompose the matrix multiplication
tensor $\bra c,c,c\ket$ defined by \eqref{eq:mm-t} as depicted below; the
indices $i_1,k_1,k_2,j_1,i_2,j_2$ refer to the tensor \eqref{eq:mm-t}.
The mode shared by $\alpha,\beta,\gamma$ in \eqref{fig:ccc} has length $d$,
all the other modes have length $c$.
\slimfigure{1.0}{figures/ub_matrixmul1}{ccc}
For example, Strassen's decomposition~\cite{Strassen1969} as depicted
in \eqref{fig:strassen-2-2-7} below realizes \eqref{fig:ccc} with $c=2$ and $d=7$.
We use the numbering in magenta to indicate correspondence between
modes in \eqref{fig:ccc} and \eqref{fig:strassen-2-2-7}.
\slimfigure{1.0}{figures/ub_matrixmul2}{strassen-2-2-7}

Let us next set up an application of Theorem~\ref{thm:submul-am}.
Let
$A:\F^{[c]\times [c]}\times \F^{[c]\times [c]}\rightarrow \F^{[c]\times [c]}$
be the bilinear multiplication map for two $c\times c$ matrices.
Observe that the tensor of $A$ is $\hat{T}(A)=\bra c,c,c\ket$.
To realize $A$, define two input sockets in \eqref{fig:ccc},
namely $\{i_1,k_1\}$ and $\{k_2,j_1\}$
to obtain a realization $D$ and an execution tree $\td$ as follows:
\slimfigure{1.0}{figures/ub_tensor_rank}{diag-base-2}
Since $c^2\leq d$, the amortized cost of $\td$ satisfies
$a(\td)=d$. The cost is $c(\td)=c^2d$.

Let the matrices $X,Y\in\F^{[n]\times[n]}$ be given. We construct
a tensor network that multiplies $X$ and $Y$.
We may assume that $n\geq c$. Let $k$ be the unique integer with
$c^{k-1}<n\leq c^k$. Extend the matrices $X,Y$ to
$X,Y\in\F^{[c^k]\times [c^k]}$ by inserting rows and columns with
zero-entries.

Since $\bra c,c,c\ket^{\otimes k}=\bra c^k,c^k,c^k\ket$ by
\eqref{eq:mm-kron}, we have that $A^{\otimes k}$ is the multiplication
map for two $c^k\times c^k$ matrices.
Using Theorem~\ref{thm:submul-am} with $D$ and $\td$, we obtain
$c(A^{\otimes k})\leq a(\td)^{k-1}c(\td)=d^{k}c^2$.
Moreover, the realization $D^{\otimes k}$ of $A^{\otimes k}$
given by Theorem~\ref{thm:submul-am} consists of
$|V(D^{\otimes k})|=O(k)$ vertices.
We can now associate $X$ and $Y$ with the
two socket vertices of $D^{\otimes k}$, taking care to
associate $X$ with the left socket (originating from $\{i_1,k_1\}$ and $\alpha$)
and $Y$ with the right socket (originating from $\{k_1,j_1\}$ and $\beta$).
(Cf.~\eqref{fig:diag-base-kron} for an illustration how $D$ and $\td$
in \eqref{fig:diag-base-2} yield $D^{\otimes k}$ and $\tdk$.)
Executing $D^{\otimes k}$ then results in the product matrix
$A^{\otimes k}(X,Y)=XY$ in
$d^{k}c^2|V(D^{\otimes k})|\leq
 c^{k(\omega+\epsilon/2)+2}|V(D^{\otimes k})|\leq
 c^{k(\omega+\epsilon)}=O(n^{\omega+\epsilon})$
operations for all large enough $n$.
\end{Proof}

Let us next proceed to rectangular matrix multiplication, where our
strategy is to reduce to square matrix multiplication. Also observe that
in the case $\omega=2$ the upper bounds in the following theorem
are optimal up to the choice of $\epsilon>0$ because of the size of
the input/output.

\begin{Lem}
\label{lem:rmm}
For all $\epsilon>0$ it holds that we may multiply an $n\times r$ matrix
with an $r\times n$ matrix by executing a tensor network in
\begin{enumerate}[(i)]
\item
$O(n^{\omega+\epsilon-1}r)$ operations when $r\geq n$, and
\item
$O(n^2r^{\omega+\epsilon-2})$ operations when $r\leq n$.
\end{enumerate}
\end{Lem}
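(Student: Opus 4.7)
The plan is to reduce rectangular matrix multiplication to a collection of square matrix multiplications of dimension $\min(n,r)$, invoke \cref{lem:mm} on each, and assemble the operations count from the per-square contributions.

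For case (i), where $r \geq n$, I would split the shared dimension of length $r$ into blocks of length $n$. Set $s = \lceil r/n \rceil$ and pad $X$ and $Y$ with zero columns and zero rows, respectively, so that the shared dimension becomes exactly $ns$; this changes neither the product $XY$ nor the asymptotic operations count. A fixed bijection $[ns] \leftrightarrow [n] \times [s]$ partitions the columns of $X$ into submatrices $X_1, \ldots, X_s \in \F^{[n] \times [n]}$ and the rows of $Y$ into submatrices $Y_1, \ldots, Y_s \in \F^{[n] \times [n]}$, so that $XY = \sum_{b=1}^s X_b Y_b$. The network I would build consists of $s$ copies of the square matrix multiplication subnetwork of \cref{lem:mm}, with the $b$-th copy fed by $X_b$ and $Y_b$, and all $s$ partial products aggregated into a common $n \times n$ output. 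Each copy uses $O(n^{\omega+\epsilon})$ operations, yielding a total of $O(s \cdot n^{\omega+\epsilon}) = O(n^{\omega+\epsilon-1} r)$.

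For case (ii), where $r \leq n$, the construction is dual: split the two non-shared dimensions of length $n$ into blocks of length $r$. Set $s = \lceil n/r \rceil$ and pad so that both non-shared dimensions have length $sr$. Fixed bijections $[sr] \leftrightarrow [s] \times [r]$ partition the rows of $X$ into blocks $X_1, \ldots, X_s \in \F^{[r] \times [r]}$ and the columns of $Y$ into blocks $Y_1, \ldots, Y_s \in \F^{[r] \times [r]}$; the product $XY$ is then the $sr \times sr$ block matrix whose $(a,b)$-block equals $X_a Y_b$. The network uses $s^2$ copies of the square matrix multiplication subnetwork of \cref{lem:mm}, one per block pair $(a, b)$, each placing its $r \times r$ result in the corresponding output block. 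The total operations count is $O(s^2 \cdot r^{\omega+\epsilon}) = O(n^2 r^{\omega+\epsilon-2})$.

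The main obstacle to tidy up is verifying that each of these constructions really is a single tensor network realization in the sense of \cref{sec:tensor network defs}. This reduces to the essentially notational observation that the socket for $X$ (and similarly for $Y$) may be represented with a refined mode structure: e.g., in case (i), a socket with modes of lengths $(n, n, s)$ is equivalent, via the chosen bijection $[ns] \leftrightarrow [n] \times [s]$, to a socket with modes of lengths $(n, ns)$. This reindexing is purely combinatorial and carries no arithmetic cost. Once it is in place, the $s$ (respectively $s^2$) copies of the square subnetwork can be stitched together sharing (respectively tiling) the output boundary to give the required realizations, and the operations counts follow.
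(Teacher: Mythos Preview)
Your high-level reduction---chop the rectangular product into square products of size $\min(n,r)$---is exactly the paper's idea, and the operation counts you derive are the right ones. The gap is in the tensor-network construction itself, which you flag as ``the main obstacle'' but then underestimate.

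Taken literally, ``$s$ copies of the square subnetwork, with the $b$-th copy fed by $X_b$ and $Y_b$'' does not yield a valid realization in the sense of \cref{sec:map cost}. There is exactly one socket vertex for $X$ and one for $Y$, so you need a mechanism inside the network to route the slice $X_b$ to copy $b$; the mode refinement you describe only reindexes the socket, it does not perform any routing. Worse, your aggregation step fails: if the $s$ copies literally share the output modes $(i',j')$, then contracting two of their results produces the \emph{pointwise product}, not the sum, because boundary modes are not summed over in \eqref{eq:tensor-rep}. Summation in a tensor network is expressed by a shared \emph{non-boundary} mode, which pushes you back to a single subnetwork carrying an extra index---not to $s$ separate copies. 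The same objection applies to the $s^2$-copy tiling in case (ii).

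The paper's construction resolves this cleanly: it uses \emph{one} copy of the $\langle c^k,c^k,c^k\rangle$ subnetwork and lets the extra block-index modes (of total length $c^{t-k}=r/n$ in case (i), or two bundles of total length $n/r$ each in case (ii)) ride alongside $X$ and $Y$, connected through identity matrices $I_c$ either to each other (case (i), so they get summed out) or to the boundary (case (ii), so they tile the output). The extra factor you want then appears automatically as the product of these extra mode lengths in the cost of each contraction of the square subnetwork. This is the construction you should aim for; once you see it, the verification really is routine.
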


\begin{Proof}
Fix an $\epsilon>0$ and let $\alpha,\beta,\gamma$ be three 3-tensors
of shape $(c\times c)\times d$ for constants $c$ and $d$
as in the proof of Lemma~\ref{lem:mm}.
Let the matrices $X\in\F^{[n]\times [r]}$ and $Y\in\F^{[r]\times[n]}$
be given. We construct tensor networks that compute the product $XY$.

To establish (i), first pad $X$ and $Y$ using rows and columns
of zero-entries so that both $n$ and $r$
become positive integer powers of $c$ and $n$ divides $r$.
This increases $n$ and $r$ by at most a multiplicative factor $c$.
We now have $n=c^k$ and $r=c^t$ for positive integers $t\geq k$.

Observe that we can compute the $n\times n$ product matrix $XY$ by
taking the sum of $\frac{r}{n}=c^{t-k}$ matrix products of size $n\times n$.

Let us implement this computation with a tensor network.
Reshape
$X$ to a $(k+t)$-tensor whose all modes have length $c$. The first $k$ modes
index the rows, the last $t$ modes index the columns.
Reshape
$Y$ to a $(t+k)$-tensor whose all modes have length $c$. The first $t$ modes
index the rows, the last $k$ modes index the columns.

Connect $X$ and $Y$ into a network as displayed in \eqref{fig:ccc-r-geq-n}
on the right.
\slimfigure{1.0}{figures/ub_matrixmul3}{ccc-r-geq-n}
That is,
we join $t-k$ column modes of $X$ with the matching $t-k$ row modes of $Y$
using $t-k$ identity matrices $I_c$
(to avoid degeneracy of the network if $X$ and $Y$ are removed).
Then we connect the remaining modes of $X$ and $Y$
to the left and right sockets of a matrix multiplication network
for $c^k\times c^k$ matrices as depicted by $\bra c^k,c^k,c^k\ket$
in \eqref{fig:ccc-r-geq-n}; this matrix multiplication network is obtained
as in the proof of Lemma~\ref{lem:mm}. The result is a multiplication
network $\bra n,r,n\ket$ as depicted in \eqref{fig:ccc-r-geq-n} on the left.
We execute this network using the structure on the right in
\eqref{fig:ccc-r-geq-n} by first
contracting (with zero cost)
the identity matrices $I_c$ with $Y$. We then execute the
remaining network as in the proof of Lemma~\ref{lem:mm}.
For large enough $n$, the cost of the execution is at most
$c^{k(\omega+\epsilon/2)+2}c^{t-k+1}$, which translates to
$O(n^{\omega+\epsilon}r)$ operations since the network
has $O(k)$ vertices whose contractions have nonzero cost.

To establish (ii), first pad $X$ and $Y$ using rows and columns
of zero-entries so that both $n$ and $r$
become positive integer powers of $c$ and $r$ divides $n$.
This increases $r$ and $n$ by at most a multiplicative factor $c$.
We now have $r=c^k$ and $n=c^t$ for positive integers $t\geq k$.

Observe that we can compute the $n\times n$ product matrix $XY$ by
taking $(\frac{n}{r})^2=c^{2(t-k)}$ matrix products of
size $r\times r$.

Let us implement this computation with a tensor network.
Reshape
$X$ to a $(t+k)$-tensor whose all modes have length $c$.
The first $t$ modes index the rows, the last $k$ modes index the columns.
Reshape
$Y$ to a $(k+t)$-tensor whose all modes have length $c$. The first $k$ modes
index the rows, the last $t$ modes index the columns.
Connect $X$ and $Y$ into a network as displayed in \eqref{fig:ccc-r-geq-n}
on the right.
\slimfigure{1.0}{figures/ub_matrixmul4}{ccc-r-leq-n}
That is, we join $t-k$ row modes of $X$ each to an identity matrix $I_c$
whose other mode is at the boundary. Similarly, we join matching $t-k$
column modes of $Y$ each to an identity matrix $I_c$ whose other mode
is at the boundary.
(This is to avoid degeneracy of the network if $X$ and $Y$ are removed.)
Then connect the remaining modes of $X$ and $Y$
to the left and right sockets of a matrix multiplication network
for $c^k\times c^k$ matrices as depicted by $\bra c^k,c^k,c^k\ket$
in \eqref{fig:ccc-r-leq-n}; this matrix multiplication network is obtained
as in the proof of Lemma~\ref{lem:mm}. The result is a multiplication
network $\bra n,r,n\ket$ as depicted in \eqref{fig:ccc-r-leq-n} on the left.
We execute this network using the structure on the right in
\eqref{fig:ccc-r-leq-n} by first
contracting (with zero cost)
the identity matrices $I_c$ with $X$ and $Y$, respectively.
We then execute the remaining network as in the proof
of Lemma~\ref{lem:mm}.
For large enough $r$, the cost of the execution is at most
$c^{k(\omega+\epsilon/2)+2}c^{2(t-k)+1}$, which translates to
$O(n^2r^{\omega-1+\epsilon})$ operations since the network
has $O(k)$ vertices whose contractions have nonzero cost.
\end{Proof}

Let us conclude this subsection with a well-known lemma on rectangular
matrix multiplication that we can also capture with tensor networks.

\begin{Lem}
  \label{lem:general rectangular mul}
  For all non-negative integers $a,b,c$ and $\epsilon > 0$
  it holds that we we may multiply an $n^a \times n^b$ matrix by an
  $n^b\times n^c$ matrix by a tensor network in
  $O(n^{\max(a+b,b+c,a+c) (\omega + \epsilon) / 2})$ operations.

\end{Lem}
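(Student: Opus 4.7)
The plan is to reduce the general rectangular case to Lemma~\ref{lem:rmm} by combining Kronecker-product closure of matrix-multiplication tensors with a block decomposition along the larger output dimension. By transposing the product (equivalently, computing $(Y^\top X^\top)^\top$), which corresponds to swapping the two input sockets of the realization, I may assume $a \le c$ without loss of generality. Using the Kronecker-closure property~\eqref{eq:mm-kron}, I decompose
\[
\bra n^a, n^b, n^c \ket = \bra n^a, n^b, n^a \ket \otimes \bra 1, 1, n^{c-a} \ket\,.
\]
The second factor is realized by a trivial gadget linking the column dimension of the second input to the column dimension of the output by an identity of size $n^{c-a}$, adding no nonzero-cost contractions. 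Hence a realization of $\bra n^a, n^b, n^c \ket$ is obtained from the realization of $\bra n^a, n^b, n^a \ket$ given by Lemma~\ref{lem:rmm} by composing with this gadget, and the resulting execution cost is at most $n^{c-a}$ times the cost of the $\bra n^a, n^b, n^a \ket$ execution.

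I then apply Lemma~\ref{lem:rmm} with $n' = n^a$ and $r' = n^b$, splitting into two subcases depending on the sign of $b-a$. If $b \ge a$, part~(i) gives per-block cost $O(n^{a(\omega+\epsilon'-1)+b})$ and total cost $O(n^{b + c + a(\omega-2) + a\epsilon'})$. If $b \le a$, part~(ii) gives per-block cost $O(n^{2a + b(\omega+\epsilon'-2)})$ and total cost $O(n^{a + c + b(\omega-2) + b\epsilon'})$. Here $\epsilon' > 0$ is the constant supplied to Lemma~\ref{lem:rmm}, which will be chosen as a small enough multiple of the target $\epsilon$.

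To match the claimed bound $O(n^{M(\omega+\epsilon)/2})$ with $M = \max(a+b, b+c, a+c)$, I verify the exponent in each subcase. If $a \le \min(b,c)$ then $M = b + c$, and $b + c + a(\omega - 2) \le (b + c)\omega/2$ reduces to $(\omega - 2)(b + c - 2a) \ge 0$, which holds since $\omega \ge 2$ and $a \le (b+c)/2$. If $b \le a \le c$ then $M = a + c$, and $a + c + b(\omega - 2) \le (a + c)\omega/2$ reduces to $(\omega - 2)(a + c - 2b) \ge 0$, which again holds since $b \le (a+c)/2$. The residual $\epsilon'$-terms contribute at most $M\epsilon'/2$ to the exponent, so picking $\epsilon' \le \epsilon$ suffices.

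The main obstacle is not conceptual: it is the careful case split on the sign of $b-a$ and the two algebraic inequality checks above. Everything else, including the tensor-network construction, is automatic from Kronecker closure~\eqref{eq:mm-kron} and Lemma~\ref{lem:rmm}.
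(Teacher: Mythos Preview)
Your proof is correct and follows the same overall strategy as the paper: assume $a\le c$ by symmetry, reduce to Lemma~\ref{lem:rmm}, and verify the exponent inequalities. The organization differs slightly. The paper splits into three cases according to the full ordering of $a,b,c$, in each case using a different reduction (blocking down to square $n^b\times n^b$ products; padding up to $\langle n^c,n^b,n^c\rangle$; an ``easy modification'' of Lemma~\ref{lem:rmm}(ii)). You instead uniformly factor $\langle n^a,n^b,n^c\rangle=\langle n^a,n^b,n^a\rangle\otimes\langle 1,1,n^{c-a}\rangle$, which always lands directly in the hypothesis of Lemma~\ref{lem:rmm} and leaves only the two subcases $b\ge a$ and $b\le a$. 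The resulting exponent checks $(\omega-2)(b+c-2a)\ge 0$ and $(\omega-2)(a+c-2b)\ge 0$ are exactly the inequalities the paper verifies, so nothing is lost; your route is a modest streamlining of the same argument.
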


\begin{proof}
Let $\epsilon>0$ be given.
By symmetry we may assume that $a\leq c$. Thus there are three cases
to consider, namely (i) $a\leq b\leq c$, (ii) $a\leq c\leq b$, and
(iii) $b\leq a\leq c$.

When $a\leq b\leq c$, we need to achieve $O(n^{(b+c)(\omega+\epsilon)/2})$
operations. Toward this end, it suffices to multiply an $n^b\times n^b$ matrix
with an $n^b\times n^c$ matrix, which can be implemented as $n^{c-b}$
multiplications of two square matrices of size $n^b\times n^b$.
Proceeding analogously as in \cref{lem:rmm}, we obtain a network that
can be executed in $O(n^{c-b}n^{b(\omega+\epsilon/2)})$ operations.
We have $c-b+b(\omega+\epsilon)\leq (b+c)(\omega+\epsilon)/2$ as desired
since $c\geq b$, $\omega\geq 2$, and $\epsilon\geq 0$.

When $a\leq c\leq b$, we need to achieve $O(n^{(b+c)(\omega+\epsilon)/2})$
operations. Toward this end, it suffices to multiply an $n^c\times n^b$ matrix
with an $n^b\times n^c$ matrix, and apply part (i) of \cref{lem:rmm} to obtain
a network that can be executed in $O(n^{b+c(\omega+\epsilon-1)})$ operations.
We have $b+c(\omega+\epsilon-1)\leq (b+c)(\omega+\epsilon)/2$ as desired
since $b\geq c$, $\omega\geq 2$, and $\epsilon\geq 0$.

When $b\leq a\leq c$, we need to achieve $O(n^{(a+c)(\omega+\epsilon)/2})$
operations. Toward this end, it suffices to multiply an $n^a\times n^b$ matrix
with an $n^b\times n^c$ matrix. An easy modification of part (ii) of
\cref{lem:rmm} gives a network that can be executed in
$O(n^{a+c+b(\omega+\epsilon-2)})$ operations.
We have $a+c+b(\omega+\epsilon-2)\leq (a+c)(\omega+\epsilon)/2$ as desired
since $b\leq a$, $b\leq c$, $\omega\geq 2$, and $\epsilon\geq 0$.
\end{proof}

\subsection{Homomorphism-counting for pattern graphs of small branchwidth}

The following upper bound for $P$-linear forms when $P$ is a graph
of small branchwidth is our main result in this section.

\begin{Lem}
  \label{lem:branchwidth ub}
  For any fixed pattern graph $P$ and every $\epsilon > 0$, there is a tensor network that evaluates the $P$-linear form of order $n$
  in $O(n^2 + n^{\bw(P) (\omega + \epsilon) / 2})$ operations.
\end{Lem}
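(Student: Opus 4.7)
The plan is to build a tensor network whose topology mirrors an optimal branch decomposition $(T,\pi)$ of $P$, and to execute it bottom-up so that each contraction becomes a rectangular matrix multiplication to which \cref{lem:general rectangular mul} can be applied. Root $T$ at an arbitrary internal node, making it a rooted binary tree with the edges of $P$ as leaves. For each node $v$ of $T$, let $E_v \subseteq E(P)$ be the edges at leaves below $v$, and let $B_v \subseteq V(P)$ be the ``middle set'' of vertices incident both to some edge in $E_v$ and to some edge outside $E_v$. By the definition of branchwidth we have $|B_v| \le \bw(P)$ for every $v$. The plan is to maintain at each node $v$ a tensor $M_v$ of shape $[n]^{B_v}$ whose entry at $x\colon B_v \to [n]$ is obtained by summing $\prod_{e \in E_v} X^{(e)}_{\sigma|_e}$ over all extensions $\sigma$ of $x$ to the vertices used only in $E_v$. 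At leaves $M_v$ is simply the input matrix $X^{(e)}$, and at the root $M_v$ is a scalar equal to the $P$-linear form.

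Next I would analyze a single internal contraction. Let $v$ have children $c_1, c_2$ with middle sets $B_1, B_2$, and parent middle set $B = B_v$, all of size at most $w := \bw(P)$. A short case analysis (a vertex in $B_1 \setminus B_2$ must either sit in $B$, or else be entirely internal to $E_{c_1}$ and hence already summed out) shows that we can partition the vertices appearing at $v$ into four disjoint sets $L, R, C_{\mathrm{keep}}, C_{\mathrm{sum}}$ such that
\begin{equation*}
B_1 = L \cup C_{\mathrm{keep}} \cup C_{\mathrm{sum}}, \quad
B_2 = R \cup C_{\mathrm{keep}} \cup C_{\mathrm{sum}}, \quad
B = L \cup R \cup C_{\mathrm{keep}}.
\end{equation*}
The required operation
\begin{equation*}
M_v[x_L, x_R, x_{C_{\mathrm{keep}}}] = \sum_{x_{C_{\mathrm{sum}}} \in [n]^{C_{\mathrm{sum}}}} M_{c_1}[x_L, x_{C_{\mathrm{keep}}}, x_{C_{\mathrm{sum}}}] \cdot M_{c_2}[x_R, x_{C_{\mathrm{keep}}}, x_{C_{\mathrm{sum}}}]
\end{equation*}
is, for each fixed value of $x_{C_{\mathrm{keep}}}$, an $n^{|L|} \times n^{|C_{\mathrm{sum}}|}$ by $n^{|C_{\mathrm{sum}}|} \times n^{|R|}$ rectangular matrix multiplication, repeated $n^{|C_{\mathrm{keep}}|}$ times. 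By \cref{lem:general rectangular mul}, the whole step costs
\begin{equation*}
O\!\left( n^{|C_{\mathrm{keep}}|} \cdot n^{\max(|L|+|C_{\mathrm{sum}}|,\, |C_{\mathrm{sum}}|+|R|,\, |L|+|R|)(\omega+\epsilon)/2}\right).
\end{equation*}

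The final step is to bound this by $O(n^{w(\omega+\epsilon)/2})$. The three quantities under the max are exactly $|B_1| - |C_{\mathrm{keep}}|$, $|B_2|- |C_{\mathrm{keep}}|$, and $|B|-|C_{\mathrm{keep}}|$, all at most $w - |C_{\mathrm{keep}}|$, so the exponent is
\begin{equation*}
|C_{\mathrm{keep}}| + (w-|C_{\mathrm{keep}}|)\tfrac{\omega+\epsilon}{2}
= \tfrac{(\omega+\epsilon)w}{2} - |C_{\mathrm{keep}}|\!\left(\tfrac{\omega+\epsilon}{2}-1\right) \le \tfrac{(\omega+\epsilon)w}{2},
\end{equation*}
using $\omega \ge 2$. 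Summing over the constantly many internal nodes of $T$ and adding the $O(n^2)$ cost of reading the input matrices (which dominates the trivial degenerate cases, e.g.~$\bw(P)\le 2$, and handles the base edge-tensors) yields the claimed bound.

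The main obstacle I expect is the case analysis establishing $L,R \subseteq B$ and the resulting clean four-set partition of the vertices active at each internal node; this is essential for turning an abstract branch decomposition into an honest rectangular matrix multiplication and for ensuring the amortization $|C_{\mathrm{keep}}|(\tfrac{\omega+\epsilon}{2}-1)\ge 0$ kicks in. A secondary technical point is that one has to verify that this sequence of contractions actually corresponds to a valid tensor-network execution in the sense of \cref{sec:tensor network defs}: the underlying network is obtained by placing $|E(P)|$ socket vertices (one per input matrix) connected by ``identity'' hyperedges according to the vertex-incidence structure of $P$, and then the execution tree is just $T$ itself with the socket vertices merged into their leaf.
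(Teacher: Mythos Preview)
Your proposal is correct and follows essentially the same route as the paper: root an optimal branch decomposition, maintain at each node a tensor indexed by the middle set, and realise each internal contraction as a batch of rectangular matrix multiplications via \cref{lem:general rectangular mul}. Your four-set partition $(L,R,C_{\mathrm{keep}},C_{\mathrm{sum}})$ is exactly the paper's $(C_{xy},C_{xz},C_{xyz},C_{yz})$, and your exponent bound $|C_{\mathrm{keep}}| + (w-|C_{\mathrm{keep}}|)\tfrac{\omega+\epsilon}{2}\le \tfrac{(\omega+\epsilon)w}{2}$ is the same calculation; the only place the paper is slightly more explicit is at leaves where an endpoint has degree~1 (so $B_v\subsetneq e$ and $M_v$ is a row/column/total sum of $X^{(e)}$ rather than $X^{(e)}$ itself), which is precisely where the additive $O(n^2)$ term comes from.
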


\begin{proof}
  Let $\bw(P) = w$ and consider any branch decomposition of $P$ of
  width $w$.  Rooting this decomposition arbitrarily by subdividing
  any edge and taking the newly added vertex as root, we obtain a
  binary rooted tree $T$ with $|E(P)|$ leaves where the leaves are
  identified by the edges of $P$.  Let $r$ be the root of $T$ and
  for each vertex $u$ of $T$, let:
  \begin{enumerate}
  \item $C_u \subseteq V(P)$ (mnemonic: $C$ for ``crossing'' or ``cut'') be the set of
    all vertices of $P$ that appear both in some leaf in the subtree
    rooted at $u$, and in some leaf outside the subtree.  By
    definition $|C_u| \le w$ for all $u$, and $C_r = \emptyset$.

  \item $D_u \subseteq V(P)$ (mnemonic: $D$ for ``done'') be the set of all vertices of $P$ that
    appear only in leaves in the subtree rooted at $u$, and not
    outside.  Note that $C_u$ and $D_u$ form a partition of the set of
    all vertices appearing in some leaf in the subtree rooted at $u$,
    and that $D_r = V(P)$.

  \item $E_u \subseteq E(P)$ be the set of all leaves of the subtree
    of $T$ rooted at $u$ (recall that each leaf of $T$ corresponds to
    an edge).  Note that $E_r = E(P)$.

  \item $A_u \in \F^{[n]^{C_u}}$ be the following order $|C_u|$ tensor
    of shape $n \times n \times \cdots \times n$ defined for a position $i \in [n]^{C_u}$ by
    \[
    (A_u)_i = \sum_{j \in [n]^{D_u}} \prod_{S \in E_u} X^{(S)}_{(ij)_{|S}},
    \]
    where $ij \in [n]^{C_u \cup D_u}$ is the Cartesian product of $i$
    and $j$.
  \end{enumerate}

  Note that $A_r$ equals the value of the $P$-linear
  form \eqref{eq:homomorphism_form}.
  Furthermore, for a leaf $u$ of $T$ corresponding to an edge $\{x_1, x_2\}$ of
  $P$, $A_u$ is easy to compute: if both $x_1$ and $x_2$ have degree
  at least $2$, then $A_u$ simply equals the input tensor
  $X^{\{x_1,x_2\}}$.  On the other hand if $x_1$ and/or $x_2$ has
  degree $1$, then $A_u$ is either a vector or a scalar, being either
  the row sums, column sums, or sum of all entries of
  $X^{\{x_1,x_2\}}$.  Each of these is readily computed in $O(n^2)$
  time by a contraction of $X^{\{x_1,x_2\}}$ with an appropriate
  tensor of ones.

  As we shall see below, for every non-leaf vertex $x$ of $T$ with
  children $y$ and $z$, the tensor $A_x$ equals the contraction of
  $A_y$ and $A_z$.
  Thus, the desired result would follow if we can show that the
  contraction of two siblings $A_y$ and $A_z$ can be computed by a
  tensor network in $O(n^{w (\omega+\epsilon)}/2)$ operations.  We
  now proceed to establish this.

    Partition $C_x$ into $C_{xy} \cup C_{xz} \cup C_{xyz}$ where
    $C_{xyz} = C_x \cap C_y \cap C_z$, $C_{xy} = C_x \setminus C_z$
    and $C_{xz} = C_x \setminus C_y$ (note that every element of $C_x$
    must appear in exactly one of $C_y$ and $C_z$ so these three sets
    indeed partition $C_x$).  Symmetrically partition $C_y$ into
    $C_{xy} \cup C_{yz} \cup C_{xyz}$ and $C_z$ into
    $C_{xz} \cup C_{yz} \cup C_{xyz}$.  Note that $C_{yz} \subseteq D_x$ and that the contraction of $A_y$ and $A_z$ at position $(i,j,k)$ for $i \in [n]^{C_{xy}}$, $j \in [n]^{C_{xz}}$, $k \in [n]^{C_{xyz}}$ is exactly
    \[
    \sum_{\ell \in [n]^{C_{yz}}} A_y(i,k,\ell) A_z(j,k,\ell) =  A_x(i,j,k)
    \]
    Let $a = |C_{xy}|$, $b = |C_{yz}|$, and $c = |C_{xz}|$.  Split
    each mode in $C_{xy}$ into two separate modes $C_{xy}^{(x)}$ and
    $C_{xy}^{(y)}$ where the former is used to index $A_x$ and the
    latter is used to index $A_y$.  Similarly split the modes in $C_{xz}$ and
    $C_{yz}$, but \emph{not} the modes in $C_{xyz}$.

    The contraction of $A_y$ and $A_z$ can then be evaluated using
    rectangular matrix multiplication with the following tensor
    network.
    \slimfigure{1.0}{branchwidth-sketch}{branchwidth}

    By \cref{lem:general rectangular mul}, this network can be
    evaluated in
    $O(n^{|C_{xyz}|} n^{\max(a+b, b+c, a+c) (\omega+\epsilon)/2})$
    operations.  Since $a + b + |C_{xyz}| = |C_y| \le w$,
    $b+c + |C_{xyz}| = |C_z| \le w$, and
    $a+c + |C_{xyz}| = |C_x| \le w$, the number of operations used to
    contract $A_y$ and $A_z$ is bounded by
    $O(n^{w (\omega+\epsilon)/2})$.
\end{proof}

\subsection{Fast Fourier transforms and fast convolution}

\label{sect:fft}

Let us next express fast Fourier transforms as tensor networks and their
executions.
We start by observing that the Cooley--Tukey~\cite{CooleyT1965} fast
Fourier transform on $\Z_{2^k}$ can be implemented as a tensor network.
We assume $\rho$ is a primitive $(2^k)^\text{th}$ root of unity in
the field $\F$.

\begin{Lem}
\label{lem:fft}
The discrete Fourier transform for the Abelian group $\Z_{2^k}$ can
be computed by executing a tensor network in $O(2^kk)$ operations.
\end{Lem}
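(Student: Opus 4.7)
The plan is to construct a tensor network realizing $\Phi_{2^k}$ that mimics the Cooley--Tukey FFT, and to bound its execution cost by induction on $k$.

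For the base case $k=1$, the DFT is the $2\times 2$ matrix $\Phi_2$. The single-tensor network with $\Phi_2$ at one vertex (input and output modes of length $2$ on the boundary), after attaching the input vector at the socket, can be executed in one contraction of cost $O(1)$.

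For the inductive step $k\ge 2$, I would use the standard Cooley--Tukey identity
\begin{equation*}
  \Phi_{2^k} \;=\; (\Phi_2 \otimes I_{2^{k-1}})\cdot D_k \cdot(I_2 \otimes \Phi_{2^{k-1}})\cdot S_k,
\end{equation*}
where $S_k$ is the perfect-shuffle permutation on $[2^k]$, $D_k$ is the $2^k\times 2^k$ diagonal matrix of twiddle factors, $I_2\otimes \Phi_{2^{k-1}}$ applies $\Phi_{2^{k-1}}$ to each of two length-$2^{k-1}$ halves of the shuffled vector, and $\Phi_2\otimes I_{2^{k-1}}$ combines the halves by $2^{k-1}$ independent size-$2$ butterflies. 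Reshape the length-$2^k$ input socket as a $2\times 2^{k-1}$ tensor, splitting the parity bit from the remaining $k-1$ bits; this makes $S_k$ a free relabelling. The network then consists of a sub-network $N_{k-1}$ realizing $\Phi_{2^{k-1}}$ on the length-$2^{k-1}$ mode (by induction), a single tensor of volume $2^k$ encoding the twiddle diagonal $D_k$ attached to the intermediate state through copy/hyperedge incidences so that the multiplication is entrywise rather than summed, and a single $\Phi_2$ tensor acting on the length-$2$ parity mode.

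For the execution, I would first run $N_{k-1}$ once for each value of the length-$2$ parity mode, which by the inductive hypothesis costs $2\cdot O((k-1)\,2^{k-1}) = O((k-1)\,2^k)$. Next, contract the twiddle tensor into the intermediate state: since both tensors have volume $2^k$ and the contraction is pointwise, this step costs $O(2^k)$. Finally, the outer $\Phi_2$ contracts against the length-$2$ parity mode with the remaining $k-1$ modes of total length $2^{k-1}$ untouched, at cost $O(2^k)$. Summing, $O((k-1)\,2^k) + O(2^k) = O(k\,2^k)$, closing the induction.

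The main obstacle is a careful representation of the diagonal twiddle factor $D_k$ within the hypergraph tensor-network framework of this paper: a naive matrix encoding has volume $2^{2k}$ and would destroy the budget. The fix is to encode $D_k$ as a single tensor of volume $2^k$ whose shared modes with the intermediate state are ``forked'' using copy-tensor/$\delta$ incidences, so that the contraction realises entrywise multiplication in $O(2^k)$ operations rather than a summation-type matrix-vector product. Once this step is in place, the rest of the induction is essentially mechanical, and exactly the same template will accommodate a subsequent $\Z_2^k$ version of the lemma by replacing the Cooley--Tukey recursion with an iterated application of \cref{thm:submul-am} to the DFT for $\Z_2$.
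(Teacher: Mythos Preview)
Your approach is essentially the same as the paper's: both use the Cooley--Tukey decomposition, with the paper presenting the fully unrolled factorization
\[
\Phi = P^{(k)} B^{(k,k-1)} T^{(k,k-2)} B^{(k,k-2)} \cdots T^{(k,0)} B^{(k,0)}
\]
directly as a single tensor network with all modes of length $2$, whereas you organize the same construction as an induction on $k$. Unrolling your recursion yields exactly the paper's network.

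Two minor remarks on presentation. First, the phrase ``run $N_{k-1}$ once for each value of the length-$2$ parity mode'' is operationally misleading in this model: a tensor-network execution does not iterate over index values. What actually happens is that the length-$2$ parity mode is carried along as a boundary mode through every contraction in $N_{k-1}$, multiplying each step's cost by $2$; your arithmetic is right but the justification should be stated this way. Second, your ``copy-tensor/$\delta$ incidences'' for the twiddle are, in the paper's hypergraph formalism, simply hyperedges incident to three tensors (the current state, the twiddle vector $R^{(k,j)}$ of volume $2^k$, and the next butterfly), so that contracting state with twiddle does not sum over those modes. No auxiliary copy tensors are needed; the hyperedge structure already gives the Hadamard product.
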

\begin{Proof}
The case $k=0$ is immediate so let us assume that $k\geq 1$.
We construct a tensor
network whose execution multiplies a vector $x\in\F^{[2^k]}$ with the DFT
matrix $\Phi$ in \cref{sect:dft} for $n=2^k$
to yield the result $\Phi x\in\F^{[2^k]}$.
Toward this end, let us write $I_m$ for an
$m\times m$ identity matrix,
\begin{equation}
\label{eq:h2}
H_2=\left[\begin{array}{rr}1&1\\1&-1\end{array}\right]
\end{equation}
for the $2\times 2$ Hadamard-Walsh matrix, and
$R^{(k,j)}\in\F^{[2^k]}$ for the vector obtained by concatenating
$2^j$ copies of the vector
\[
[\underbrace{1,1,\ldots,1}_{2^{k-1-j}},
 \underbrace{\rho^{2^j\cdot 0},\rho^{2^j\cdot 1},\ldots,\rho^{2^j\cdot(2^{k-1-j}-1)}}_{2^{k-1-j}}]
\]
for $j=0,1,\ldots,k-2$.
We can (e.g.~\cite{VanLoan1992}) decompose $\Phi$ into a sequence
of $2^k\times 2^k$ matrices
\begin{equation}
\label{eq:cooley-tukey}
\Phi=P^{(k)}B^{(k,k-1)}T^{(k,k-2)}B^{(k,k-2)}\cdots T^{(k,1)}B^{(k,1)}T^{(k,0)}B^{(k,0)}
\end{equation}
consisting of alternating butterfly matrices
\[
B^{(k,j)}=
\underbrace{I_2\otimes I_2\otimes \cdots\otimes I_2}_{j}
\otimes H_2 \otimes
\underbrace{I_2\otimes I_2\otimes \cdots\otimes I_2}_{k-j-1}
\]
and diagonal twiddle matrices
\[
T^{(k,j)}=\text{diag}(R^{(k,j)})
\]
followed by a permutation matrix $P^{(k)}$ that permutes the indices
in $[2^k]$ viewed as $k$-bit strings by reversing the bit-order.
Since only the $(j+1)^\text{th}$ Kronecker component in the butterfly
matrix $B^{(k,j)}$ is a nonidentity matrix, and multiplication of
a vector with the diagonal twiddle matrix $T^{(k,j)}$
corresponds to pointwise (Hadamard) multiplication with the
vector $R^{(k,j)}$ on the diagonal,
we observe that the sequence \eqref{eq:cooley-tukey}
can be represented as a tensor network $D^*$ as depicted below (for $k=5$)
so that all the modes have length $2$.
\slimfigure{0.62}{figures/ub_fft_product}{diag-phi-5}
We can now connect the network \eqref{fig:diag-phi-5} to a data vector
$x\in\F^{[2^k]}$ to obtain the network $D$ below:
\slimfigure{0.62}{figures/ub_fft_with_input}{diag-phi-x-5}
Below we depict in red an execution tree $\td$ with cost $2^{k+1}$
for the network \eqref{fig:diag-phi-x-5}. Observe that the mode
permutation (multiplication with the matrix $P^{(5)}$) is not part of
the execution since the permutation amounts to merely rearranging
the modes.
\slimfigure{0.62}{figures/ub_fft_with_exec}{diag-phi-x-5-exec}
Since the network has $O(k)$ tensors, the execution
\eqref{fig:diag-phi-x-5-exec} can be carried out in $O(2^kk)$ operations in $\F$.
\end{Proof}

\begin{Lem}
\label{lem:fwht}
The discrete Fourier transform for the elementary Abelian group $\Z_{2}^k$ can
be computed by executing a tensor network in $O(2^kk)$ operations.
\end{Lem}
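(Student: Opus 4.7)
The plan is to mimic the proof of \cref{lem:fft} but in an even simpler form, exploiting the well-known factorization of the Walsh--Hadamard transform.

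First I would observe that for the elementary Abelian group $\Z_2^k$, the character table is exactly the $k$-fold Kronecker product $H_2^{\otimes k}$ of the Hadamard matrix $H_2$ defined in \eqref{eq:h2}, so the DFT matrix is $\Phi = H_2^{\otimes k}$. Equivalently, if we reshape the input vector $x \in \F^{[2^k]}$ as an order-$k$ tensor $\hat{x}$ of shape $\underbrace{2\times 2\times\cdots\times 2}_{k}$, then the DFT of $x$ is obtained by contracting each of the $k$ modes of $\hat{x}$ with a separate copy of $H_2$. This factorization is the analogue of \eqref{eq:cooley-tukey} with only the butterfly matrices $B^{(k,j)}$ present and no twiddle matrices $T^{(k,j)}$ or bit-reversal permutation $P^{(k)}$.

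Next I would describe the tensor network $D$: one central vertex carrying the data tensor $\hat{x}$, together with $k$ vertices each carrying one copy of $H_2$, where the $j$\textsuperscript{th} copy of $H_2$ is incident via one of its two modes to the $j$\textsuperscript{th} mode of $\hat{x}$, and via its other mode to the boundary. Every mode in $D$ has length $2$, and there are $k+1$ tensors.

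For the execution tree $\td$, I would contract the $k$ copies of $H_2$ into $\hat{x}$ one at a time. At each step we contract a tensor of shape $2 \times 2$ with the current running tensor, whose modes all have length~$2$; the contraction involves $k+1$ modes of length $2$ and therefore has cost $2^{k+1}$. Since there are $k$ such contractions, the total cost of executing $D$ is $O(k \cdot 2^k)$. The output is the order-$k$ tensor whose reshape into a vector of length $2^k$ is exactly $H_2^{\otimes k} x = \Phi x$, which completes the proof.

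The only mild subtlety is checking that no bit-reversal (unlike in the $\Z_{2^k}$ case) is required; this follows because for $\Z_2^k$ the characters are already naturally indexed by $k$-bit strings in the same order as the standard basis of the $k$-tensor, so no mode permutation is needed and the contraction simply yields $\Phi x$ directly.
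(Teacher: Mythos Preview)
Your proposal is correct and is essentially the same approach as the paper's: the paper's proof simply says to repeat the construction of \cref{lem:fft} while omitting the twiddle matrices $T^{(k,j)}$ and the bit-reversal permutation $P^{(k)}$, which leaves precisely the network you describe---the data tensor $\hat{x}$ of shape $2\times\cdots\times 2$ with one copy of $H_2$ attached to each mode, executed by contracting the $H_2$'s one at a time at cost $2^{k+1}$ each. Your write-up is in fact a more explicit rendering of what the paper leaves implicit.
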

\begin{Proof}
This proof is analogous to the proof of Lemma~\ref{lem:fft} but omits the twiddle matrices
$T^{(k,j)}$ and the final permutation matrix $P^{(k)}$ from the
decomposition of the tensor network.
\end{Proof}

The following corollary is immediate from Lemma~\ref{lem:fft}.

\begin{Lem}
\label{lem:conv}
The group algebra products on $\F[\Z_{2^k}]$ and $\F[\Z_2^k]$
can be computed by executing a tensor network in $O(2^kk)$ operations whenever
$2$ is a unit in $\F$.
\end{Lem}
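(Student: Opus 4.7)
The plan is to apply the convolution theorem. For both $\mathbb{A} = \Z_{2^k}$ and $\mathbb{A} = \Z_2^k$ we have $f * g = \Phi^{-1}\bigl((\Phi f) \odot (\Phi g)\bigr)$, where $\Phi$ is the DFT matrix of $\mathbb{A}$ and $\odot$ is the Hadamard product. The inverse DFT equals $\tfrac{1}{2^k}\bar\Phi$ for $\Z_{2^k}$ and $\tfrac{1}{2^k}\Phi$ for $\Z_2^k$, so the hypothesis that $2$ is a unit in $\F$ makes the scalar $\tfrac{1}{2^k}$ well-defined. Moreover, $\Phi^{-1}$ has the same network structure as the forward DFT (with $\rho$ replaced by $\rho^{-1}$ in the $\Z_{2^k}$ case and an absorbed constant $\tfrac{1}{2^k}$ in either case), so \cref{lem:fft} and \cref{lem:fwht} also furnish an $O(2^k k)$-operation tensor network for $\Phi^{-1}$.

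I would build a realization of the convolution map $(f,g) \mapsto f*g$ by placing a forward-DFT subnetwork on each of $f$ and $g$ and an inverse-DFT subnetwork producing the output, joined as follows: for each $l \in [k]$, the $l$-th output mode of $\Phi f$, the $l$-th output mode of $\Phi g$, and the $l$-th input mode of $\Phi^{-1}$ are all identified into a single hyperedge $r_l$ of length $2$. Thus, once the two forward DFTs have been contracted into single tensors, each $r_l$ is incident to exactly three vertices.

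To execute, first run the two forward DFTs in $O(2^k k)$ operations each by \cref{lem:fft} (or \cref{lem:fwht}), producing tensors $a = \Phi f$ and $b = \Phi g$ that share the $k$ hyperedges $r_1, \ldots, r_k$ with each other and with the inverse-DFT gates. Next, perform the pointwise multiplication as a single contraction of the two-vertex set $\{a, b\}$: the induced subnetwork $D[\{a,b\}]$ has hyperedges exactly $r_1, \ldots, r_k$, so the cost is $\prod_l |J(r_l)| = 2^k$. Crucially, each $r_l$ remains a boundary mode of $D[\{a,b\}]$ because it is still incident to an inverse-DFT vertex outside $\{a,b\}$; hence these indices are not summed out and the resulting tensor is precisely $a \odot b$. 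Finally, execute the inverse DFT in $O(2^k k)$ operations. The total cost is $O(2^k k)$, as required.

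The main conceptual hurdle is realizing the Hadamard product as a single cheap contraction: a direct realization using a diagonal $2^k \times 2^k \times 2^k$ tensor would cost $\Theta(2^{3k})$, and a mode-by-mode realization using $k$ copies of the $2 \times 2 \times 2$ identity tensor, contracted sequentially, would still cost $\Theta(2^{2k})$. The hypergraph structure of the tensor network model (cf.~\cref{sec:tensor network defs}) is essential here: a hyperedge incident to three or more vertices remains a boundary mode when only two of its endpoints are contracted and is therefore not summed out, making the Hadamard product achievable at cost $2^k$ in a single step.
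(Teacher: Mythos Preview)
Your proposal is correct and takes essentially the same approach as the paper: apply the convolution theorem, realize the two forward DFTs and the inverse DFT via \cref{lem:fft}/\cref{lem:fwht}, and implement the Hadamard product by letting the $k$ output modes of the two forward transforms coincide with the $k$ input modes of the inverse transform as length-$2$ hyperedges shared by three vertices. Your explicit justification that contracting $\{a,b\}$ costs $2^k$ and yields $a\odot b$ because every $r_l$ stays in the boundary is exactly the mechanism the paper's diagram encodes; the paper simply draws the network and notes the resulting execution has cost $2^{k+1}$ (dominated by the FFT butterfly steps) without spelling out the hyperedge argument in words.
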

\begin{Proof}
We start with $\F[\Z_{2^k}]$.
Let $f,g\in\F^{[2^k]}$ be two vectors given as input. Our task is to
compute the $\Z_{2^k}$-convolution $f*g\in\F^{[2^k]}$.
The case $k=0$ is immediate so suppose that $k\geq 1$.
Recalling that $f*g=\Phi^{-1}((\Phi f)\cdot (\Phi g))$, where
``$\cdot$'' denotes an elementwise (Hadamard) product of two vectors of
length $2^k$, let us construct a tensor network as follows.
First take the FFT of $f$ and $g$ using Lemma~\ref{lem:fft},
then multiply the resulting vectors elementwise, and
finally take the inverse FFT by
replacing $\rho$ with $\rho^{-1}$ in Lemma~\ref{lem:fft} and
multiplying with the diagonal matrix $\frac{1}{2^k}I_{2^k}$.
Below we display (for $k=5$) the resulting network $D$ that executes to $f*g$.
\slimfigure{0.28}{figures/ub_convolution}{diag-conv}
The execution of the network proceeds from right to left analogously to
\eqref{fig:diag-phi-x-5-exec}.
\slimfigure{0.28}{figures/ub_convolution_with_exec}{diag-conv-exec}
The cost of this execution is $2^{k+1}$.
Since the network has $O(k)$ tensors, the execution
\eqref{fig:diag-conv-exec} can be carried out in $O(2^kk)$ operations in $\F$.

The case of $\F[\Z_2^k]$ is analogous but replacing Lemma~\ref{lem:fft}
with Lemma~\ref{lem:fwht} and modifying the diagrams accordingly.
\end{Proof}

\subsection{Yates' algorithm}

A particularly simple use case for Theorem~\ref{thm:submul-am}
occurs when $A:\F^{[s]}\rightarrow\F^{[t]}$ is a linear map.
It is immediate that we can realize $A$ with a two-vertex
network and an execution tree that has amortized cost $\max(s,t)$
and cost $st$.

Then, Theorem~\ref{thm:submul-am} immediately implies that we can evaluate
$A^{\otimes k}:\F^{[s]^k}\rightarrow\F^{[t]}$ using a tensor network
with cost $\max(s^{k+2},t^{k+2})$ and $O(k)$ vertices. In particular, the
network can be executed in $O(\max(s^{k+2},t^{k+2})k)$ operations in $\F$.
This network in essence realizes Yates' algorithm~\cite{Yates1937}
for multiplying an $s^k$-length vector with the $k^\text{th}$
Kronecker power of an $s\times t$ matrix to obtain $t^k$-length vector.

Applying the previous observation to $A=H_2$ in \eqref{eq:h2} with $s=t=2$,
we obtain Lemma~\ref{lem:fwht} as an immediate corollary. Similarly,
other choices of $2\times 2$ matrices yield the algebraic core of currently
the fastest known algorithms for problems such as graph coloring
and its generalizations such as computing the Tutte polynomial of
a graph~\cite{BjorklundHKK2007,BjorklundHKK2008,BjorklundHK2009}.
In particular, the pair of mutually inverse $2\times 2$ matrices
\[
Z_2=\left[\begin{array}{rr}
1&1\\
0&1
\end{array}\right]\,,\qquad
M_2=\left[\begin{array}{rr}
1&-1\\
0&1
\end{array}\right]
\]
yield, as $Z_2^{\otimes k}$ and $M_2^{\otimes k}$, the zeta and
M\"obius tranforms for the lattice $(\{0,1\}^k,\subseteq,\cap,\cup)$
of all subsets of a $k$-element set, partially ordered by subset inclusion.
Theorem~\ref{thm:submul-am} yields immediately the standard algorithms
(normally developed via Yates's algorithm)
for the fast zeta and the fast M\"obius transforms via tensor networks.
These networks can then be combined as in the proof of
Lemma~\ref{lem:conv} to yield the associated bilinear convolution maps
(multiplication maps in the semigroup algebra
$\F[(\{0,1\}^k,\subseteq,\cap,\cup)]$) to realize these maps in
$O(2^kk)$ operations. We omit the details due to similarity
with Lemma~\ref{lem:conv}.

\subsection{Kruskal operator}

We proceed to implement the Kruskal operator by Kroneckering and reduction
to fast rectangular matrix multiplication.

\begin{Lem}
For all constants $\epsilon>0$ and $\ell=1,2,\ldots$
it holds that we may evaluate the Kruskal operator
of $\ell$ matrices of shape $n\times r$ by executing a tensor network in
\begin{enumerate}[(i)]
\item
$O(n^{\lceil\ell/2\rceil(\omega+\epsilon-1)}r)$ operations when $r\geq n^{\lceil \ell/2\rceil}$, and
\item
$O(n^{2\lceil\ell/2\rceil}r^{\omega+\epsilon-2})$ operations when $r\leq n^{\lceil \ell/2\rceil}$.
\end{enumerate}
\end{Lem}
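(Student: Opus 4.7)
The plan is to reduce the evaluation of the Kruskal operator to a single rectangular matrix multiplication and then apply \cref{lem:rmm}. Set $p=\lceil\ell/2\rceil$ and $q=\lfloor\ell/2\rfloor$, and split the $\ell$ input matrices into groups of sizes $p$ and $q$. The algebraic identity I want to exploit is
\[
Y_{i_1 i_2\cdots i_\ell}\;=\;\sum_{j\in[r]}B_{(i_1\cdots i_p),\,j}\,C_{(i_{p+1}\cdots i_\ell),\,j}\,,
\]
where $B_{(i_1\cdots i_p),j}=\prod_{k=1}^{p}A^{(k)}_{i_k j}$ and $C_{(i_{p+1}\cdots i_\ell),j}=\prod_{k=p+1}^{\ell}A^{(k)}_{i_k j}$ are matrices of shapes $n^p\times r$ and $n^q\times r$, so that, after reshaping, $Y$ is exactly the rectangular matrix product $BC^T$.

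Concretely, I would build the tensor network with $\ell$ matrix vertices, one per $A^{(k)}$, each carrying its own boundary $i_k$-mode of length $n$, and all incident to a single shared $j$-hyperedge of length $r$. The execution then proceeds in three phases. First, incrementally contract $A^{(1)},\ldots,A^{(p)}$ in sequence; at each such step the $j$-hyperedge is still incident to the yet-unabsorbed matrices, so $j$ remains in the boundary of the induced subnetwork and the step merely produces the pointwise product along $j$. The $k$-th such contraction has cost $n^{k+1}r$, summing to $O(n^{p}r)$ across the whole phase. The second phase symmetrically builds $C$ in $O(n^{p}r)$ operations. In the third phase, after padding $C$ with zero rows to shape $n^{p}\times r$ when $p\neq q$ (at only constant overhead), the two remaining tensors are exactly the inputs of a rectangular matrix multiplication, to which \cref{lem:rmm} applies: part~(i) gives cost $O(n^{p(\omega+\epsilon-1)}r)$ when $r\geq n^{p}$, and part~(ii) gives cost $O(n^{2p}r^{\omega+\epsilon-2})$ when $r\leq n^{p}$.

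It then remains to verify that the third-phase cost dominates. In case~(i) we have $n^{p(\omega+\epsilon-1)}r\geq n^{p}r$ since $\omega+\epsilon-1\geq 1$, and in case~(ii) we have $n^{2p}r^{\omega+\epsilon-2}\geq n^{p}r$ because $r^{3-\omega-\epsilon}\leq r\leq n^{p}$ (using $\omega\geq 2$). Since $\ell$ is a fixed constant, the $O(\cdot)$ absorbs the bounded number of incremental contractions and the claimed bounds follow.

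The only subtle point is verifying that the $j$-hyperedge is \emph{not} summed out during the first two phases; this is automatic from the definition of network contraction, since the boundary of an induced subnetwork includes every hyperedge incident to a vertex outside the subnetwork. With that bookkeeping in hand, the remainder is a direct invocation of \cref{lem:rmm}, in the same spirit as the rectangular-matrix-multiplication arguments already developed in its proof.
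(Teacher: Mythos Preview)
Your high-level plan---form $B$ and $C$ by column-wise (Khatri--Rao) products and then compute $BC^T$ via fast rectangular matrix multiplication---is exactly the reduction the paper uses. The gap is in the tensor-network realisation, not in the algorithmic idea.

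The network you actually specify has only the $\ell$ socket vertices $A^{(1)},\ldots,A^{(\ell)}$ joined by a single $j$-hyperedge. In that network, once phases~1 and~2 have produced $B$ (modes $i_1,\ldots,i_p,j$) and $C$ (modes $i_{p+1},\ldots,i_\ell,j$), the only remaining step of the execution is to contract $B$ with $C$, and by definition the cost of that contraction is the product of all incident mode lengths, namely $n^{\ell}r$. You cannot, mid-execution, replace this contraction by ``applying \cref{lem:rmm}'': \cref{lem:rmm} furnishes a \emph{different} network for matrix multiplication, not a cheaper way to perform a single contraction in an existing network. (A second, smaller issue: each socket $E_k$ carries its own $j_k$-mode, so a bare shared $j$-hyperedge among the sockets is not a realisation in $\mathscr{D}(A)$ at all; and padding $C$ mid-execution is likewise outside the model.)

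The fix is to bake the $\langle n^{\lceil\ell/2\rceil},r,n^{\lceil\ell/2\rceil}\rangle$ subnetwork from \cref{lem:rmm} into $D^*$ from the outset, with the inner $r$-mode of that subnetwork serving as the common hyperedge and with identity connectors between the $A^{(k)}$ and the outer modes to keep the network nondegenerate. This is precisely what the paper does: it makes $\ell$ even by appending an all-ones matrix if needed, inserts $I_c$'s to separate socket modes from internal modes, and then executes by first absorbing the identities, then contracting each half of the $A^{(k)}$ into the two ``matrix'' inputs, and finally running the \cref{lem:rmm} subnetwork. Your phases~1--3 then become the execution of this single network, and the cost analysis you wrote goes through verbatim.
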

\begin{Proof}
Fix an $\epsilon>0$ and let $\alpha,\beta,\gamma$ be three 3-tensors
of shape $(c\times c)\times d$ for constants $c$ and $d$
as in the proof of Lemma~\ref{lem:mm}.
Let $A^{(1)},A^{(2)},\ldots,A^{(\ell)}\in\F^{[n]\times[r]}$ be given as input.
We construct a tensor network that computes the output $Y$ of the
Kruskal operator \eqref{eq:kruskal}.

Without loss of generality we may assume that $\ell$ is even by
introducing a matrix $A^{(\ell+1)}\in\F^{[n]\times[r]}$ filled with
$1$-entries and setting $\ell\leftarrow\ell+1$. By inserting
rows and columns with zero-entries as necessary,
we can assume that both $n$ and $r$ are positive integer powers of $c$.
The key idea is now to Kronecker the matrices
$A^{(1)},A^{(2)},\ldots,A^{(\ell/2)}$ and
$A^{(\ell/2+1)},A^{(\ell/2+2)},\ldots,A^{(\ell)}$
{\em in the vertical dimension only} to obtain two matrices
$B\in\F^{[n^{\ell/2}]\times [r]}$
and
$C\in\F^{[n^{\ell/2}]\times [r]}$,
respectively. We then multiply $B$ and the transpose of $C$ using fast
rectangular matrix multiplication from Lemma~\ref{lem:rmm} to obtain
$Y\in\F^{[n^{\ell/2}]\times [n^{\ell/2}]}$. It is immediate from
Lemma~\ref{lem:rmm} that this
results in the operation counts claimed in (i) and (ii) as long as we
can realize the idea using tensor networks.
Kroneckering in the vertical dimension only can be realized by joining
all the {\em horizontal} dimensions to a common mode, which becomes the
inner mode for matrix multiplication. The resulting network is
depicted below (for $\ell=6$), where either \eqref{fig:ccc-r-geq-n}
or \eqref{fig:ccc-r-leq-n} is used for the
subnetwork indicated with $\bra n^{\ell/2},r,n^{\ell/2}\ket$
depending on whether (i) or (ii) holds.
\slimfigure{1.0}{figures/ub_kruskal}{kruskal-diag}
In drawing \eqref{fig:kruskal-diag} we have made two abstractions. First,
each drawn mode in fact is a bundle of modes, each of length $c$.
Second, each mode in a bundle that is incident to one of the input matrices
$A^{(1)},A^{(2)},\ldots,A^{(\ell)}$ is in fact subdivided by inserting
an identity matrix $I_c$ just before the incidence to the input matrix.
The network \eqref{fig:kruskal-diag} is executed first by contracting
the (zero-cost) identity matrices $I_c$, then contracting
$A^{(1)},A^{(2)},\ldots,A^{(\ell/2)}$ and
$A^{(\ell/2+1)},A^{(\ell/2+2)},\ldots,A^{(\ell)}$, and finally proceeding
to execute the subnetwork $\bra n^{\ell/2},r,n^{\ell/2}\ket$ as
in Lemma~\ref{lem:rmm}.
\end{Proof}

\subsection{Clique-counting forms}

For counting $v$-cliques, the general upper bound for $P$-linear forms
(\cref{lem:branchwidth ub}) with $P=K_v$ gives a running time of
$O(n^{(\omega+\epsilon)\lceil 2v/3 \rceil /2}) = O(n^{(\omega + \epsilon)\lfloor v/3
  \rfloor + \tfrac{(\omega+\epsilon)}{2}(v \bmod 3)})$.
In this section we give a slightly improved tensor-network algorithm,
matching the running time of Nešetřil and Poljak \cite{Nesetril1985}.
For brevity, throughout this section we refer to the $P$-linear form
for $P=K_v$ as the $\binom{v}{2}$-linear form.

\begin{Lem}
\label{lem:v choose 2 linear form upper bound}
For all constants $\epsilon>0$ and $v=3,4,\ldots$
it holds that we may evaluate the $\binom{v}{2}$-linear form of order $n$
by executing a tensor network in
$O(n^{(\omega+\epsilon)\lfloor v/3 \rfloor + (v \bmod 3)})$ operations.
\end{Lem}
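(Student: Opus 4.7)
The plan is to mimic the Ne{\v s}et{\v r}il--Poljak construction~\cite{Nesetril1985} within the tensor-network framework, reducing to a single batched instance of $n^t \times n^t$ matrix multiplication handled by \cref{lem:mm}. Write $v = 3t + r$ with $t = \lfloor v/3 \rfloor$ and $r = v \bmod 3 \in \{0,1,2\}$, and fix a partition $V(K_v) = V_0 \cup V_1 \cup V_2 \cup V_3$ with $|V_0| = r$ and $|V_1| = |V_2| = |V_3| = t$. First I would distribute the $\binom{v}{2}$ edges of $K_v$ into three groups: $M_{12}$ collects the intra-$V_0$, intra-$V_1$, $V_0$--$V_1$, and $V_1$--$V_2$ edges; $M_{23}$ collects the intra-$V_2$, $V_0$--$V_2$, and $V_2$--$V_3$ edges; and $M_{31}$ collects the intra-$V_3$, $V_0$--$V_3$, and $V_3$--$V_1$ edges. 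Each edge of $K_v$ lies in exactly one group, and multiplying out the relevant input tensors produces three tensors $M_{12}, M_{23}, M_{31}$ each of shape $n^r \times n^t \times n^t$. A direct expansion shows that the $\binom{v}{2}$-linear form equals
\begin{equation*}
\sum_{\sigma_0 \in [n]^{V_0}} \sum_{a,b,c} M_{12}(\sigma_0,a,b)\,M_{23}(\sigma_0,b,c)\,M_{31}(\sigma_0,c,a)
= \sum_{\sigma_0} \tr\bigl(M_{12}^{(\sigma_0)} M_{23}^{(\sigma_0)} M_{31}^{(\sigma_0)}\bigr),
\end{equation*}
which is an $n^r$-fold batched triple-matrix-product with a final trace.

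Next I would build a tensor network realizing this decomposition in two phases. In the first phase, each $M_{ij}$ is assembled by connecting its constituent input tensors through shared hyperedges that enforce consistency on shared endpoints in $K_v$; each such assembly runs in $O(n^{r+2t})$ operations, which will be subsumed by the main phase. In the second phase, I would embed the fast matrix-multiplication network of \cref{lem:mm}, instantiated at size $n^t \times n^t$ and applied sequentially twice (first to form $M_{12}^{(\sigma_0)} M_{23}^{(\sigma_0)}$, then to contract with $M_{31}^{(\sigma_0)}$ along the trace), but augmented by a single extra hyperedge of length $n^r$ carrying $\sigma_0$ that is made incident to $M_{12}, M_{23}, M_{31}$ as well as to every auxiliary tensor in the matrix-multiplication subnetwork.

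The main obstacle I expect is verifying that each intermediate contraction of this augmented network costs exactly $n^r$ times the cost of the corresponding unbatched contraction in \cref{lem:mm}. This should follow because the extra $\sigma_0$-hyperedge appears as a single length-$n^r$ factor on every intermediate tensor throughout the matrix-multiplication phase and is only contracted (i.e.\ summed over) at the very last step, so every intermediate contraction inherits a clean multiplicative overhead of $n^r$. Granting this, the batched execution costs $O(n^{r} \cdot n^{t(\omega+\epsilon)})$ for any $\epsilon>0$, which combined with the dominated $O(n^{r+2t})$ assembly cost (dominated since $\omega \ge 2$) yields the desired bound $O(n^{(\omega+\epsilon)\lfloor v/3\rfloor + (v \bmod 3)})$. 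For $r = 0$ this reduces to the unbatched triple-matrix product of three $n^t \times n^t$ matrices, while for $r \in \{1,2\}$ the batching over $\sigma_0 \in [n]^{V_0}$ contributes precisely the additive factor $r$ in the target exponent.
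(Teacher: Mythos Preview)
Your approach is correct and realizes the same Ne\v{s}et\v{r}il--Poljak reduction as the paper, but the technical route differs. The paper builds a \emph{constant-size} base network for inputs of size $c$ (with $s=\lfloor v/3\rfloor$ copies of the rank-$d$ matrix-multiplication gadget $\alpha,\beta,\gamma$ and $u=v\bmod 3$ extra shared modes), computes its amortized cost $a(\td)=d^s c^u$ and cost $c(\td)\le d^{s+1}c^{u+2}$, and then invokes \cref{thm:submul-am} to scale the whole construction from $c$ to $n=c^k$. You instead treat the $n^t\times n^t$ matrix-multiplication network of \cref{lem:mm} as a black box and thread a batching hyperedge of length $n^r$ through it. Both give the same exponent; the paper's route keeps everything inside the amortized-cost machinery that is the paper's main engine, while your route is more self-contained but requires re-verifying that the batching mode cleanly multiplies every contraction cost by $n^r$.

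One imprecision to fix: the $\sigma_0$ hyperedge should be incident \emph{only} to the three assembled data tensors $M_{12},M_{23},M_{31}$, not to the constant auxiliary tensors $\alpha,\beta,\gamma$ inside the matrix-multiplication subnetwork (attaching it there would force you to specify values for those tensors along the new mode and would not be the network you want). What you actually need---and what your cost analysis correctly uses---is that during execution the $\sigma_0$ mode propagates to every \emph{intermediate} tensor descended from $M_{12},M_{23},M_{31}$, so each contraction involving such a tensor picks up exactly one factor of $n^r$. With that clarification, the argument goes through: assembly costs $O(n^{r+2t})$, the batched product costs $O(n^{r+t(\omega+\epsilon)})$, and the final contraction with $M_{31}$ is just an inner product of cost $O(n^{r+2t})$, all dominated by the middle term since $\omega\ge 2$.
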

\begin{Proof}
Fix an $\epsilon>0$ and let $\alpha,\beta,\gamma$ be three 3-tensors
of shape $(c\times c)\times d$ with constants $c$ and $d$
as in the proof of Lemma~\ref{lem:mm}.

Let $u=v\bmod 3$ and $s=(v-u)/3$. That is, $s$ and $u$ are positive
integers that satisfy $v=3s+u$. Let us write $[a,b]=[b]\setminus [a-1]$
and set
\[
\begin{split}
P_1&=[1,2s]\cup [3s+1,3s+u]\,,\\
P_2&=[1,s]\cup [2s+1,3s]\cup [3s+1,3s+u]\,,\\
P_3&=[s+1,3s]\,.
\end{split}
\]
Now partition the subsets in $\binom{[v]}{2}$ into three groups $G_1,G_2,G_3\subseteq\binom{[v]}{2}$ such that
(a) $G_1$ consists of all subsets $\{x,y\}\in\binom{[v]}{2}$
with $\{x,y\}\subseteq P_1$,
(b) $G_2$ consists of all subsets $\{x,y\}\in\binom{[v]}{2}\setminus G_1$
with $\{x,y\}\subseteq P_2$, and
(c) $G_3$ consists of all subsets $\{x,y\}\in\binom{[v]}{2}\setminus (G_1\cup G_2)$. In particular, we observe that for all
$\{x,y\}\in G_3$ we have $\{x,y\}\subseteq P_3$.
Also observe that $|P_1|=|P_2|=2s+u$ and $|P_3|=2s$.

Let the $\binom{v}{2}$ matrices $A^{\{x,y\}}\in\F^{[n]\times [n]}$ for
$\{x,y\}\in\binom{[v]}{2}$ be given as input. By inserting rows and
columns with zero-entries as appropriate, we may assume that $n=c^k$ for
a positive integer $k$. The key idea is to start with a base construction
for an input of size $c$ and then scale the construction up
to an input of size $n$ using Theorem~\ref{thm:submul-am}.

Let us now describe the base construction. First, we
introduce a subnetwork that forces consistency of modes for the inputs indexed
by $G_1$, $G_2$, and $G_3$, respectively.
In essence, for each mode
$z\in P_1$, introduce a mode for each occurrence of $z$ in $G_1$.
Similarly, for each $z\in P_2$, introduce a mode for each occurrence
of $z$ in $G_2$. Finally, for each $z\in P_3$, introduce a mode for
each occurrence of $z$ in $G_3$. Below we show an illustration for
$v=8$ and thus $s=2$, $u=2$.
\slimfigure{0.50}{figures/ub_binomial_first_lev}{v-2-subsume}
In the drawing \eqref{fig:v-2-subsume} all modes have length $c$, and
each occurrence of $z\in [v]$ at the top is tacitly assumed to be subdivided
by an identity matrix $I_c$. (Otherwise the network would be degenerate.)
Next we introduce $s$ copies of $\alpha,\beta,\gamma$ to force consistency
for the modes in $[1,3s]$ between $G_1,G_2,G_3$. We also join individually
each of the $u$ modes in $[3s+1,3s+u]$ between $G_1$ and $G_2$.
Below we show an illustration for $v=8$. Observe from \eqref{fig:ccc}
and the numbering in magenta that each introduced copy
$i=1,2,\ldots,s$ of $\alpha,\beta,\gamma$ forces consistency between
the modes $i,s+i,2s+i$.
\slimfigure{0.49}{figures/ub_binomial}{v-2-abc}
The base network $D$ realizing the $c$-uniform $\binom{v}{2}$-linear form
is now as depicted in \eqref{fig:v-2-abc}.
We execute the network $D$ using the following execution tree $\td$,
where the identity matrices $I_c$ immediately incident to the
$\binom{v}{2}$ socket vertices are contracted tacitly and not drawn
in \eqref{fig:v-2-exec}.
\slimfigure{0.49}{figures/ub_binomial_with_exec}{v-2-exec}
The amortized cost of the execution in \eqref{fig:v-2-exec} is
$a(\td)=d^sc^u$ and the cost is $c(\td)\leq d^{s+1}c^{u+2}$.
By Theorem~\ref{thm:submul-am}, this translates to cost
at most $a(\td)^{k-1}c(\td)=d^{sk+1}c^{uk+2}$ for the
$n$-uniform $\binom{v}{2}$-linear form. Since the network given
by Theorem~\ref{thm:submul-am} has $O(sk)$ tensors
and $d\leq c^{\omega+\epsilon/2}$, we conclude
that the number of operations is $O(n^{(\omega+\epsilon)s+u})$.
The lemma follows.
\end{Proof}

\subsection{The permanent}

The following lemma observes that essentially the fastest known algorithm
for the permanent, namely Ryser's algorithm~\cite{Ryser1963}, can be
realized as a tensor network. Here by essentially fastest known we
mean the base of the exponential running time. Sub-exponential speed-ups
to Ryser's algorithm are known, see e.g.~Bj\"orklund~\cite{Bjorklund2016}.

\begin{Lem}
The permanent of an $n\times n$ matrix can be computed by executing
a tensor network in $O(2^nn)$ operations.
\end{Lem}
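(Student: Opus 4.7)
The plan is to realize Ryser's formula
\[
\per(A) = (-1)^n \sum_{x \in \{0,1\}^n} (-1)^{|x|} \prod_{i=1}^n \sum_{j=1}^n x_j A_{ij}
\]
as an execution of a tensor network. I would construct a realization of the $n$-linear form $(R_1,\ldots,R_n) \mapsto \per(A)$ (in the rows of $A$) consisting of $n$ row sockets, $n$ internal length-$2$ modes $x_1,\ldots,x_n$, and, for each row $i$, a dedicated subnetwork that yields the tensor $u_i \in \F^{\{0,1\}^n}$ with $u_i[x] = \sum_j x_j R_i[j]$ once the socket is filled. The outputs $u_1,\ldots,u_n$ are then combined into their Hadamard product $w[x] = \prod_i u_i[x]$ via arity-$3$ copy tensors on each binary mode, each $x_j$ boundary is contracted against $(1,-1) \in \F^2$, and the overall scalar $(-1)^n$ is folded in.

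The crux is to design the per-row subnetwork so that its execution costs only $O(2^n)$, rather than the naive $O(n\,2^n)$ that arises from contracting $R_i$ against a single $n \times 2^n$ ``selector'' tensor. For this I would use the affine doubling recursion
\[
u_i^{(k)}[x_{<k},0] = u_i^{(k-1)}[x_{<k}], \qquad u_i^{(k)}[x_{<k},1] = u_i^{(k-1)}[x_{<k}] + R_i[k],
\]
linearized by augmenting the state with a constant coordinate. Stage $k$ then becomes a contraction of the current state (of volume $O(2^k)$) against a small $2 \times 2 \times 2$ propagator $M_k^{(i)} = E + R_i[k]\,F$ with $E,F$ fixed, costing $O(2^k)$ per stage and thus $O(2^n)$ per row.

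The main obstacle is that tensor networks are purely multilinear and do not natively support an ``$E + R_i[k]\,F$'' operation. I would resolve this by introducing an auxiliary length-$2$ mode together with a fixed unit-valued scalar vertex: the two summands are stacked along the two slices of the auxiliary mode (the constant part via the unit scalar, the linear part via contracting $R_i$ against the fixed standard basis vector $e_k \in \F^n$), and the auxiliary mode is finally contracted against $(1,1)$. None of this exceeds the $O(2^k)$ budget of the stage itself. Summing everything up, the $n$ row subnetworks contribute $\sum_i \sum_k O(2^k) = O(n\,2^n)$, the $n-1$ Hadamard-product contractions via copy tensors on the $x_j$ modes each cost $O(2^n)$ and contribute another $O(n\,2^n)$, and the final sign contraction costs $O(2^n)$, for a total of $O(2^n n)$ arithmetic operations, as required.
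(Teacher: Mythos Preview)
The paper's proof is much simpler than your proposal: it observes directly that Ryser's formula $\per A = \sum_{S \subseteq [n]} (-1)^{n-|S|} \prod_i \sum_{j \in S} a_{ij}$ is a star-shaped tensor network---$n$ fixed $2^n \times n$ matrices (rows indexed by subsets $S$, columns by $[n]$, the $(S,j)$-entry being the indicator $[j \in S]$, with the signs folded into one of them), all sharing a single hyperedge of length $2^n$, and each carrying one boundary mode of length $n$ to receive a row of $A$. Execution: perform the $n$ matrix--vector products, then contract the resulting $2^n$-vectors pairwise along the common hyperedge.

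Your construction has a genuine gap. You plan to build, for each row $i$, a chain of $n$ propagators $M_k^{(i)} = E + R_i[k]\,F$, obtaining the scalar $R_i[k]$ by ``contracting $R_i$ against the fixed standard basis vector $e_k$.'' But in the model of \cref{sec:map cost}, the input row $R_i$ is a \emph{single} socket vertex incident to a single length-$n$ mode $\mu_i$; it appears exactly once as a leaf of the execution tree. You cannot contract it with $e_1$ to extract $R_i[1]$ and then separately contract it with $e_2$ to extract $R_i[2]$---after the first such contraction the socket is consumed. If instead you make $\mu_i$ a hyperedge incident to $R_i$ and to all of $e_1,\ldots,e_n$ simultaneously, then contracting $\{R_i,e_k\}$ does not yield the scalar $R_i[k]$ but the Hadamard product $R_i \cdot e_k$ (since $\mu_i$ remains a boundary mode of the contracted pair), and the individual entries still cannot be routed to distinct stages. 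In short, the per-stage access to individual coordinates of a socket that your affine doubling recursion needs is not available in this framework; the only way to couple $R_i$ to the rest of the network is through a single contraction over $\mu_i$, which is exactly what the paper's $2^n \times n$ selector matrix accomplishes in one shot.
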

\begin{Proof}
We observe that Ryser's algorithm~\cite{Ryser1963} for the
permanent \eqref{eq:det and per}, namely the inclusion--exclusion expression
\[
\per A=\sum_{S\subseteq [n]}(-1)^{n-|S|}\prod_{i\in [n]}\sum_{j\in S}a_{ij}
\]
is implementable with a star-shaped tensor network consisting of
$n$ matrices of shape $2^n\times n$ joined together by a common mode (of length $2^n$), with the $n$ modes of length $n$ being the boundary of the network.
Each of the $n$ matrices consists of the $\{0,1\}$-valued incidence vectors of
the $2^n$ subsets $S\subseteq [n]$, with one of the matrices arbitrarily
selected to contain signed rows determined by $(-1)^{n-|S|}$.
The input to the network consists of $n$ vectors of length $n$,
namely the rows of the $n\times n$ input matrix $A$. The network is executed
by first executing the $n$ matrix-vector multiplications, and then contracting
the resulting $n$ vectors of length $2^n$ until the scalar $\per A$ remains.
\end{Proof}

\section{A lower bound for the cost of a multilinear map}
\label{sec:general lower bound}

In this section, we prove a general lower bound on the cost of
evaluating a multilinear map using tensor networks, as defined in
Section~\ref{sec:map cost}.  The lower bound is expressed in terms of the \emph{socket-width} of a multilinear map, which we
now proceed to define.

Let $A:\F^{J(E_1)}\times\F^{J(E_2)}\times\cdots \times\F^{J(E_\ell)}
\rightarrow \F^{J(E')}$ be an $\ell$-linear map.  A {\em socket-tree}
of $A$ is a tree $\ts$ whose $\ell+1$ leaf vertices are the sockets
$E_1,E_2,\ldots,E_\ell, E'$ of $A$ and whose internal vertices all have
degree exactly $3$.  Associate with each edge $e = \{x_R, x_C\}$ of $\ts$
the two subtrees $\ts(x_R, e)$ and $\ts(x_C, e)$ obtained by
removing $e$, where $\ts(x_R, e)$ is the subtree containing $x_R$ and $\ts(x_C,
e)$ is the subtree containing $x_C$.  Let $L(x_R, e)$
be the set of leaves in $\ts(x_R, e)$ and let $L(x_C, e)$
be the set of leaves in $\ts(x_C, e)$.

The sets $L(x_R, e)$ and $L(x_C,e)$ are both nonempty and together
partition the set of sockets.  Consider the flattening $M(\ts,e)$ of the tensor $T(A)$ such that the modes in
$L(x_R,e)$ index the rows and the modes in  $L(x_C,e)$ index
the columns of $M(\ts,e)$.  The \emph{width} of $\ts$ at $e$ is the
rank of $M(\ts, e)$, and the \emph{width} of $\ts$ is
$w(\ts)=\max_{e\in E(\ts)} \rk(M(\ts,e))$.

Let us write $\mathscr{S}(A)$ for the set of all socket-trees of the
multilinear form $A$.  We define the \emph{socket-width} of $A$ to be
$w(A)=\min_{\ts\in\mathscr{S}(A)} w(\ts)$.

The rest of this section is devoted to proving \cref{thm:general lower bound}:

\GeneralLowerBoundTheorem*

First, we prove that without loss of generality, we may restrict
attention to forms rather than general maps.

\begin{Claim}
For any multilinear map $A$, it holds that $c(A) \ge c(F(A))$.
\end{Claim}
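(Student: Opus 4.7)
The plan is to show that any realization of $A$ can be converted, at no increase in cost, into a realization of $F(A)$. If $A$ is already a form then $F(A) = A$ and the statement is trivial, so assume the output socket $E'$ is nonempty. Pick any realization $D \in \mathscr{D}(A)$; the goal is to exhibit $D' \in \mathscr{D}(F(A))$ with $c(D') \leq c(D)$, after which minimising over $D$ gives the conclusion $c(F(A)) \leq c(A)$.

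I would build $D'$ by adjoining one new vertex $v'$ incident to every mode of $E'$, assigning $v'$ a free tensor $X^{(\ell+1)} \in \F^{J(E')}$, and removing $E'$ from the boundary of $D$. This is precisely the socket-vertex procedure of \cref{sec:map cost} applied to the additional input socket of $F(A)$, so indeed $D' \in \mathscr{D}(F(A))$. To execute $D'$, I would run an optimal execution of $D$ verbatim, which culminates in a single residual tensor whose modes are $B(D) = E'$, and then perform one additional contraction against $v'$ to produce the scalar output. The only new step has cost $\prod_{e \in E'}|J(e)| = |J(E')|$.

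The main (and essentially only) thing to check is that this extra cost is already dominated by the existing execution of $D$. This follows directly from the formula $c(D_{t-1},W) = \prod_{e \in E(D_{t-1}[W])}|J(e)|$: the final contraction step of $D$ produces a single tensor with mode set $E'$, so its cost contains a factor $|J(e)|$ for every $e \in E'$ and is in particular at least $|J(E')|$. A realization of any $\ell$-linear map contains the $\ell$ socket vertices together with the non-socket content of $D^*$, so it has at least two vertices and its executions have at least one contraction step, guaranteeing that such a ``last step'' exists. Combining the two bounds, the appended contraction cannot raise the maximum step cost, so $c(D') \leq c(D)$, which finishes the proof.
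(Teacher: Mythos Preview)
Your proof is correct and follows essentially the same approach as the paper: adjoin a new socket vertex for $E'$ to turn a realization of $A$ into one of $F(A)$, execute the original network first, then contract with the new vertex, and bound the cost of that extra contraction by the cost of the final step of the original execution. Your added care in noting that a realization always has at least two vertices (hence at least one contraction step) is a small elaboration the paper leaves implicit.
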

\begin{Proof}
We observe that $A$ and $F(A)$ satisfy $\hat{T}(A)=\hat{T}(F(A))$. Any network
$D\in \mathscr{D}(A)$ can be modified to a network
$D'\in \mathscr{D}(F(A))$ by attaching a tensor $X' \in \F^{J(E')}$ to the boundary
of $D$. Let $D\in \mathscr{D}(A)$ be such that
$c(D)=c(A)$. The minimum-cost execution of $D$, followed
by contracting $T(D)$ and $X'$, is an execution of $D'$. Its
cost is $c(A)$, since the cost of contracting of $T(D)$ and $X'$ is $\prod_{e \in B(D)} |J(e)|$ and $\prod_{e \in B(D)} |J(e)| \leq c(A)$, because the last step of the minimum-cost execution of $D$ contracted a set $W$ with all modes $e \in B(D)$ incident to $W$. Thus, $c(A)\ge c(F(A))$.
\end{Proof}

Furthermore, $w(A) = w(F(A))$ for every multilinear map $A$, since
$w(A)$ only depends on the tensor $T(A)$, but not on which of its coordinates (if any) is the output.
Thus it suffices to prove \cref{thm:general lower bound} for
multilinear forms, which we now proceed to do.

\begin{Lem}
  For any multilinear form $F$, it holds that $c(F) \ge w(F)$.
\end{Lem}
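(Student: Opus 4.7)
The plan is to fix an arbitrary realization $D \in \mathscr{D}(F)$ together with a minimum-cost execution of $D$, and to distill from it a socket tree $\ts$ of $F$ satisfying $w(\ts) \le c(D)$; minimizing over $D$ then gives $w(F) \le c(F)$.

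By \cref{lem:binary} I may take the execution to be a binary tree $\td$ with $c(\td) = c(D)$, whose leaves are the vertices of $D$---namely the $\ell$ socket vertices, together with the non-socket vertices of $D$ that carry no input. From $\td$ I construct $\ts$ in two steps: (a) take the minimal subtree of $\td$ spanning the socket vertices (the Steiner subtree), then (b) iteratively suppress every internal vertex of degree~$2$. Since every internal vertex of $\td$ has degree at most $3$, this yields a tree whose leaves correspond exactly to the sockets $E_1, \ldots, E_\ell$ and whose internal vertices all have degree~$3$, i.e., a valid socket tree of $F$.

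The key technical step is to show that for every edge $e$ of $\ts$ one has $\rk M(\ts, e) \le c(\td)$. I fix an edge $e^* = \{u, v\}$ of $\td$ on the path collapsed to $e$ (with $v$ the child of $u$); by construction, $L(x_C, e)$ consists of exactly those sockets whose corresponding socket vertex of $D$ lies in the subtree of $\td$ rooted at $v$. Let $T_v \in \F^{J(I(v))}$ denote the tensor obtained at $v$ during execution, where $I(v)$ is the set of still-open modes at $v$ immediately before the step that produces $u$. By invariance of tensor value under contraction (\cref{lem:invariance}), $T_v$ is a multilinear function of the input tensors $X^{(i)}$ with $E_i \in L(x_C, e)$ alone, and executing the remainder of $\td$ amounts to contracting $T_v$ against a tensor determined solely by the inputs $X^{(i)}$ with $E_i \in L(x_R, e)$. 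Consequently the bilinear pairing underlying $M(\ts, e)$ factors through the linear map sending the below-inputs to $T_v$, giving
\[
\rk M(\ts, e) \;\le\; \prod_{f \in I(v)} |J(f)| \;\le\; \prod_{f \in I(v) \cup I(w)} |J(f)| \;\le\; c(\td),
\]
where $w$ is the sibling of $v$ in $\td$ and the middle quantity is precisely the cost of the contraction step that produces $u$.

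Taking the maximum over $e$ gives $w(\ts) \le c(\td) = c(D)$, and minimizing over $D$ yields $w(F) \le c(F)$. The main delicate point I expect is the factorization claim in the previous paragraph: one must verify carefully that $T_v$ indeed depends only on the below-socket inputs (non-socket tensors in the subtree at $v$ contributing only fixed constants), and that the contraction above $e^*$ composed with the below-to-$T_v$ map reproduces $F$ exactly. Both assertions follow from iterated application of \cref{lem:invariance} together with multilinearity of contraction, but tracking which modes are still open at $v$ versus at the root of $\td$ is the only nontrivial bookkeeping needed.
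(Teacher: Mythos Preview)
Your proposal is correct and follows essentially the same approach as the paper's proof: extract a socket tree from a binary execution tree as a topological minor, pick for each socket-tree edge a corresponding execution-tree edge, and bound the rank of the flattening by the volume of the intermediate tensor on the child side. The paper carries out the factorization step more explicitly---partitioning $V(D)$ into four blocks $S_R, S_C, W_R, W_C$ and writing $F = X_R \cdot A \cdot B \cdot X_C$ with $A \cdot B = M(\ts,e)$---whereas you summarize this as ``the bilinear pairing factors through $T_v$''; but the underlying argument and the bookkeeping you flag as delicate are the same.
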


\begin{Proof}
  Let $D\in\mathscr{D}(F)$ be such that $c(D)=c(F)$. It is a tensor
  network with empty boundary and a socket vertex $S_i \in V(D)$ for
  each input socket $E_i$, where $i=1,2,\ldots,\ell$. Its tensor is
  $T(D) = F(X^{(1)}, X^{(2)},\ldots, X^{(\ell)})$ where $X^{(i)} =
  T(S_i)$ for $i=1,2,\ldots,\ell$.

  By Lemma~\ref{lem:binary}, a minimum-cost execution of $D$
  can be represented by a rooted binary tree $\td$,
  where the set of leaves of $\td$ are $V(D)$ and each inner
  vertex represents the vertex obtained by contracting  its two children. Let $\ts$ be the unique socket-tree of $F$ that is obtained as a topological
  minor of $\td$. Slightly abusing the notation, we assume that the leaves of $\ts$ are the socket vertices  $S_1,S_2,\ldots,S_\ell$ instead of the sockets $E_1,E_2,\ldots,E_\ell$.
  To establish the lemma, it suffices to show that
  $\td$ has cost at least $w(\ts)$, since $w(\ts) \geq w(F)$.

  Let $e = \{x_R, x_C\} \in E(\ts)$ be an edge of the socket tree
  $\ts$ with $\rk(M(\ts, e)) = w(\ts)$, and let $\widetilde{e}$ be an edge of
  the execution tree $\td$ in the subdivision of $e$ appearing in
  $\td$. Without loss of generality we may assume that $\widetilde{e}$ is directed from the part of $\td$ corresponding to $x_R$ towards the part corresponding to $x_C$ (if not, simply switch names of $x_R$ and $x_C$).  Define  $S_R = L(x_R, e)$ and
  $S_C = L(x_C, e)$.  Let $W_R \subseteq V(D)$ be the set of
  non-socket vertices of $D$ that appear on the same side of $\widetilde{e}$ in $\td$ with socket vertices $S_R$ and let
  $W_C$ be the set of remaining non-socket vertices of $D$.  See Figure~\ref{fig:lower bound proof} for an illustration of all these definitions.
  Finally, let $D' = D / S_R / S_C / W_R / W_C$ be the
  result of contracting each of these four sets of vertices of
  $D$.
  For notational convenience, we identify the four vertices of
  the new network with the four subsets $S_R, S_C, W_R, W_C$.

  \begin{figure}
    \begin{subfigure}[b]{0.52\textwidth}
      \centering
      \includegraphics[scale=0.86]{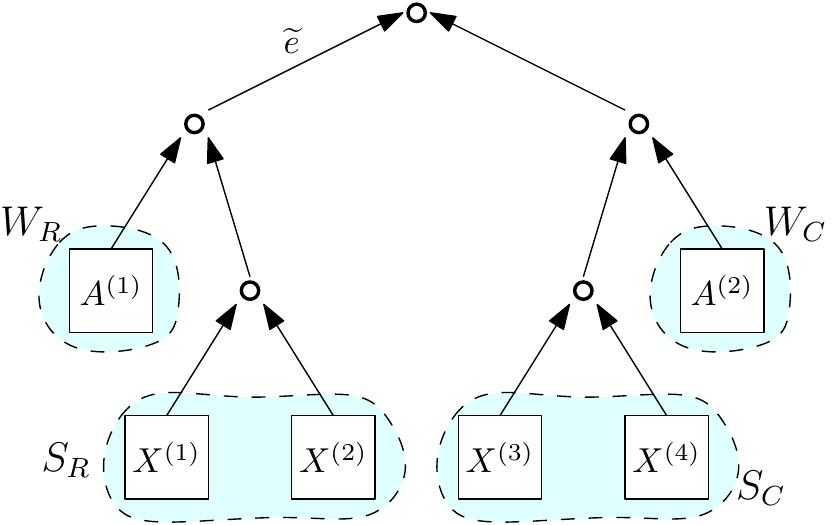}
      \caption{Example of a possible execution tree $\td$.  Given the
        choice of $e$ in the corresponding socket tree $\ts$ shown on
        the right there are four possible choices of $\widetilde{e}$.}
    \end{subfigure}
    \qquad
    \begin{subfigure}[b]{0.4\textwidth}
      \centering
      \includegraphics{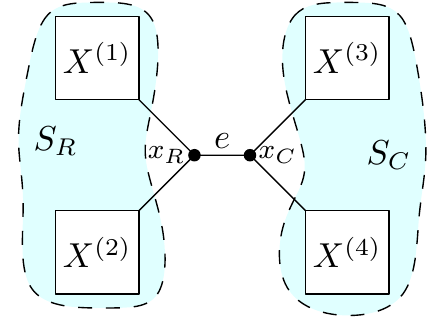}
      \caption{The corresponding socket tree $\ts$. The exact choice of $\widetilde{e}$ in $\td$ determines which part of the
        cut is the $x_R$ part, and which is the $x_C$ part.}
      \label{fig:lb proof b}
    \end{subfigure}
    \caption{Illustration of the notation used for the execution and socket trees.
    }
    \label{fig:lower bound proof}
  \end{figure}

  Now, the tensor $P = T(D'[W_R\cup S_R])$ appears as an intermediate result in
  the execution $\td$,\footnote{Note that the same is not true for the tensor $T(D'[W_C \cup S_C])$.} hence the volume of $P$ is a lower bound on the cost of $\td$.

  We group the modes of $D'$ incident on $S_R$ or $W_R$ as shown in
  Figure~\ref{fig:lower bound proof 2}: $E_{SW}$ are all modes in $D'$
  incident exactly upon $S_R$ and $W_R$, $E_{WC}$ are all modes
  incident on $W_R$ but not on $S_R$, $E_{SC}$ are all modes incident
  on $S_R$ but not $W_R$, and finally $E_{SWC}$ are all modes incident
  upon $S_R$, $W_R$, and at least one of $S_C$ or $W_C$.  Write
  $E_{S} = E_{SW} \cup E_{SC} \cup E_{SWC}$ for the modes incident on
  $S_R$, and similarly $E_{C} = E_{WC} \cup E_{SC} \cup E_{SWC}$ for
  all modes incident upon at least one of $S_R$/$W_R$ and at least one
  of $S_C$/$W_C$.  Note that $|J(E_C)|$ is precisely the volume of $P$
  which we aim to lower bound.

  \begin{figure}
    \centering
    \includegraphics[scale=1.1]{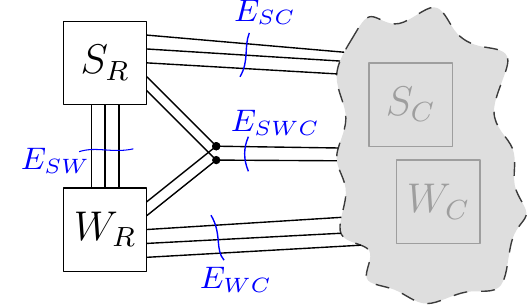}
    \caption{Illustration of $D'$.  We group the modes of $D'$ based on how they connect $S_R$, $S_C$, and the ``$C$ part'' of $D'$.}
    \label{fig:lower bound proof 2}
  \end{figure}

Define a matrix $A \in \F^{J(E_S)} \times \F^{J(E_{C})}$ as follows.  We identify its row indices $i \in J(E_S)$ as being triples $i = (i_{SW},i_{SC},i_{SWC}) \in J(E_{SW}) \times J(E_{SC}) \times J(E_{SWC})$ and similarly its column indices $j \in J(E_{C})$ are triples $j = (j_{SC},j_{WC},j_{SWC}) \in J(E_{SC}) \times J(E_{WC}) \times J(E_{SWC})$.  Then the entries of $A$ are
  \begin{equation*}
    A_{(i_{SW},i_{SC},i_{SWC}),(j_{SC},j_{WC},j_{SWC})} =
  \begin{cases}
    T(D'[W_R])_{i_{SW},j_{WC},j_{SWC}} & \!\!\text{if } i_{SC} = j_{SC} \wedge i_{SWC}=j_{SWC},\\
    0 & \!\!\text{otherwise},
  \end{cases}
  \end{equation*}
  In the case when $E_S = E_{SW}$ (i.e., all modes incident on $S_R$ connect only to $W_R$), $A$ is simply a flattening of $T(D'[W_R])$. Recall that $T(D'[S_R]) \in \prod_{e \in E_S}\F^{J(e)}$. Then for every $j = (j_{SC},j_{WC},j_{SWC}) \in J(E_C)$, we have
  \begin{align*}
    \sum_{i \in J(E_S)} A_{i,j} T(D'[S_R])_{i} &= \sum_{i_{SW} \in J(E_{SW})} A_{(i_{SW},j_{SC},j_{SWC}),j} T(D'[S_R])_{i_{SW},j_{SC},j_{SWC}} \\
    &= \sum_{i_{SW}} T(D'[W_R])_{i_{SW},j_{WC},j_{SWC}} T(D'[S_R])_{i_{SW},j_{SC},j_{SWC}}\\
    &= P_{j_{SC},j_{WC},j_{SWC}} = P_j
  \end{align*}
  (recall that $P$ is the contraction of $T(D'[W_R])$ and $T(D'[S_R])$).
  Viewing $T(D'[S_R])$ as a row vector in $\F^{J(E_S)}$ we see that
  $P$ is simply the
  vector-matrix product $P = T(D'[S_R]) \cdot A \in \F^{J(E_C)}$.

  Symmetrically, for the other half of $D'$, we can write $Q =
  T(D'[W_C\cup S_C])$ as a matrix-vector product $Q = B \cdot T(D'[S_C]) \in \F^{J(E_{C})}$ where $B$ is a matrix corresponding to $T(D'[W_S])$ analogously to how $A$ corresponds to $T(D'[W_R])$.

  Thus we have $T(D) = T(D'[S_R]) \cdot A \cdot
  B \cdot T(D'[S_C])$.  Recall that for each
  socket vertex $S_i$ in the original network $D$, we have $T(S_i) = X^{(i)}$.  Denoting $X_R = T(D'[S_R])$ and $X_C = T(D'[S_C])$, we get $X_R = \bigotimes_{S_i \in S_R} X^{(i)}$ and $X_C = \bigotimes_{S_i \in S_C} X^{(i)}$.\footnote{These identities use the fact that $D$ is derived from a \emph{non-degenerate} network $D^*$ for $\hat{T}(F)$.  In particular, every mode in the network $D$ is incident upon at least one non-socket vertex, hence all modes incident upon $S_R$ are boundary modes in $D'[S_R]$, and similarly for $S_C$.}
    Hence
  \[
  F(X^{(1)}, X^{(2)}, \ldots, X^{(\ell)}) = X_R \cdot A \cdot B \cdot X_C.
  \]
  It follows that $A \cdot B$ is the flattening of $T(F)$
  to a matrix with rows indexed by the sockets in $S_R$ and columns
  indexed by the sockets in $S_C$.  But this flattening is precisely
  the matrix $M(\ts, e)$, implying that $|J(E_{C)}| \ge \rk(M(\ts, e)) =
  w(\ts)$, as desired.
\end{Proof}

\section{Lower bounds for socket-width}
\label{sec:socket-width lower bound}

In this section we establish lower bounds on socket-width for concrete
maps.

\subsection{Determinant and permanent}
\label{sec:det and per socket width lower bounds}
\label{sec:permanent barrier}

We now prove lower bounds for the socket width of the determinant and
permanent.
Let us start with the following trivial observation.

\begin{Claim}
  \label{claim:balanced edge}
  For any socket tree $\ts$ with $n \ge 2$ leaves, there is an edge $e = \{x_R, x_C\}$
  such that $n/3 \le |L(x_R, e)| < 2n/3$.
\end{Claim}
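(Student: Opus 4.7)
The plan is to apply a centroid-style descent on $\ts$. I would root $\ts$ at an arbitrary leaf $r$ and, for every non-root vertex $v$, let $s(v)$ denote the number of leaves of $\ts$ that lie in the subtree rooted at $v$. Because every internal vertex of $\ts$ has degree exactly $3$ and exactly one of its neighbours is its parent in the rooted tree, each non-root internal vertex has precisely two children, while a non-root leaf has $s(v)=1$. In particular, the unique child $v_0$ of $r$ satisfies $s(v_0)=n-1\ge n/3$ for every $n\ge 2$, so the set of non-root vertices with $s\ge n/3$ is nonempty.

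The key step is to choose a non-root vertex $v^{*}$ that minimizes $s(v^{*})$ subject to $s(v^{*})\ge n/3$. I would then show $s(v^{*})<2n/3$ by a brief case split. If $v^{*}$ is internal, its two children $a,b$ satisfy $s(a),s(b)<n/3$ by minimality of $v^{*}$, hence $s(v^{*})=s(a)+s(b)<2n/3$. If $v^{*}$ is a leaf, then $s(v^{*})=1\ge n/3$ forces $n\le 3$, and in that case $2n/3\ge 2>1=s(v^{*})$.

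To finish, I would set $e$ to be the edge between $v^{*}$ and its parent, take $x_{R}=v^{*}$, and let $x_{C}$ be the parent. Then $L(x_{R},e)$ is precisely the set of leaves in the subtree rooted at $v^{*}$, so $|L(x_{R},e)|=s(v^{*})\in[n/3,2n/3)$, as required.

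I do not expect any real obstacle here: the argument is the standard centroid descent specialised to trees whose internal vertices have small degree. The mildest subtlety is the corner range $n\in\{2,3\}$ in which the minimizer $v^{*}$ can be a leaf, and this is dispatched by the bound $n\le 3$ already extracted from $s(v^{*})\ge n/3$. Ties in the choice of $v^{*}$ cause no issue, since any minimizer works for the conclusion.
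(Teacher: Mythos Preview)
Your argument is correct and follows essentially the same centroid-descent idea as the paper's proof: pick a subtree with at least $n/3$ leaves that is minimal with this property, and use that its two child subtrees each have fewer than $n/3$ leaves to bound it strictly below $2n/3$. One minor slip: in the leaf case you write ``$2n/3 \ge 2 > 1$'', which fails for $n=2$; the intended bound is simply $2n/3 \ge 4/3 > 1 = s(v^{*})$ for all $n \ge 2$, so the conclusion still stands.
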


\begin{Proof}
  Consider all edges $e = \{x_R, x_C\}$ such that $|L(x_R, e)| \ge n/3$.  At
  least one such edge certainly exists since $n \ge 2$. Indeed, an edge
  $e = \{x_R, x_C\}$ incident to a leaf $x_C$ has $|L(x_R, e)| = n-1$
  leaves.  Among these, choose an edge such that $\ts(x_R, e)$ is of minimal size.
  Assume for contradiction that $|L(x_R, e)| \ge 2n/3$.  But then one of
  the two subtrees of $x_R$ in $\ts(x_R, e)$ must have at least $n/3$
  leaves, and since they are smaller than $\ts(x_R, e)$, this contradicts
  the minimality of $\ts(x_R, e)$.
\end{Proof}

\begin{Lem}
For every positive integer $n\geq 2$, the socket-width of both the determinant and the permanent of an $n\times n$ matrix is at least $\binom{n}{\lceil n/3 \rceil}$.
\end{Lem}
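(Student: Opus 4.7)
My plan is to apply \cref{claim:balanced edge} to extract a ``balanced'' edge in any socket tree and then lower-bound the rank of the corresponding flattening by explicitly computing its block structure.

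Fix a socket-tree $\ts$ of the determinant or permanent. The $n$ leaves correspond to the $n$ column sockets, each of which is a mode of length $n$. By \cref{claim:balanced edge}, there is an edge $e = \{x_R, x_C\}$ such that the set $S = L(x_R,e)$ has size $k$ with $n/3 \le k < 2n/3$. The flattening $M(\ts,e)$ is an $n^k \times n^{n-k}$ matrix whose rows are indexed by tuples $i \in [n]^S$ and columns by tuples $j \in [n]^{[n]\setminus S}$, with entry $\pm 1$ if the concatenation $(i,j)$ is a permutation in $S_n$ and $0$ otherwise (the sign is trivially $+1$ in the permanent case). Our goal is to show $\rk M(\ts,e) \ge \binom{n}{k}$.

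Next I would exhibit an explicit block-diagonal structure. For a row $i$ that is a non-injection, the entire row is zero, so we may restrict to injective $i$, which we group by image $T = i(S) \subseteq [n]$ of size $k$. Analogously we group columns by the image $j([n]\setminus S)$, which for a nonzero entry must equal $[n] \setminus T$. Thus after reordering, $M(\ts,e)$ is block-diagonal with one $k! \times (n-k)!$ block $B_T$ per subset $T \in \binom{[n]}{k}$, and it suffices to show each $B_T$ has rank~$1$. For the permanent this is immediate: $B_T$ is the all-ones matrix. For the determinant, fix reference bijections $i_0 \colon S \to T$ and $j_0 \colon [n]\setminus S \to [n]\setminus T$; any injection $i$ with image $T$ factors as $i = i_0 \circ \pi_i$ for a unique $\pi_i \in \Sym(S)$, and similarly $j = j_0 \circ \pi_j$, and the sign of the resulting permutation factors multiplicatively as $\sgn(\pi_i)\sgn(\pi_j)$ times a fixed sign depending only on $(i_0,j_0)$. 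Hence $B_T$ is the outer product of two $\pm 1$ vectors, so $\rk B_T = 1$.

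Summing across blocks gives $\rk M(\ts,e) \ge \binom{n}{k}$. To conclude I would observe that $\binom{n}{k}$ is unimodal and symmetric about $n/2$, and an elementary case check (for $n \bmod 3 \in \{0,1,2\}$) shows that $n/3 \le k < 2n/3$ implies $\lceil n/3 \rceil \le k \le n - \lceil n/3 \rceil$, hence $\binom{n}{k} \ge \binom{n}{\lceil n/3\rceil}$. Since this bound holds for every socket tree, $w(A) \ge \binom{n}{\lceil n/3 \rceil}$ for both $A = \det$ and $A = \per$.

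The only substantive obstacle is the sign analysis for the determinant block, i.e.\ checking that the parity of the concatenated permutation splits multiplicatively across $S$ and $[n]\setminus S$; everything else (the block-diagonalization and the binomial inequality) is bookkeeping.
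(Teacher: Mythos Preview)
Your proof is correct, and the overall strategy matches the paper's: pick a balanced edge via \cref{claim:balanced edge}, then show $\rk M(\ts,e)\ge\binom{n}{k}$. The difference is in how the rank bound is established. The paper simply selects, for each $U\in\binom{[n]}{k}$, one canonical row (the elements of $U$ in increasing order) and one canonical column (the elements of $[n]\setminus U$ in increasing order), and observes that the resulting $\binom{n}{k}\times\binom{n}{k}$ submatrix is diagonal with $\pm1$ entries. This sidesteps any sign analysis: the off-diagonal entries vanish because the concatenation repeats an element, and the diagonal entries are nonzero regardless of sign. Your block-diagonalization is more thorough and in fact pins down the rank exactly as $\binom{n}{k}$, but it requires the multiplicative sign-splitting argument for the determinant (which is correct, via $\sigma=\sigma_0\circ(\pi_i\oplus\pi_j)$). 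So the paper's route is shorter and avoids the one step you flagged as the substantive obstacle; your route gives a tighter statement at the cost of that extra computation.
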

\begin{Proof}
  Let $\ts$ be a socket tree for the permanent, let $e = \{x_R,
  x_C\}$ be an arbitrary edge in $\ts$, and let $k = |L(x_R, e)|$ be the number of
  leaves in $\ts(x_R, e)$.  We now show that $\rk(M(\ts, e)) \ge
  \binom{n}{k}$.

  Recall that the sockets of $\ts$ are the $n$ rows of the input
  matrix.  Without loss of generality, number the rows so that the
  leaves of $\ts(x_R, e)$ are rows $1$ to $k$, and the leaves of $\ts(x_C,
  e)$ are rows $k+1$ to $n$.  The rows of $M(\ts, e)$ are indexed by
  a tuple $j_R = (i_1, \ldots, i_k) \in [n]^k$, and the columns are indexed by a
  tuple $j_C = (i_{k+1}, \ldots, i_n) \in [n]^{n-k}$.  The entry of
  $M(\ts, e)$ at position $(j_R, j_C)$ is $1$ if $(i_1, i_2, \ldots,
  i_n)$ is a permutation.  For any set $U \in \binom{[n]}{k}$, let $u_1 <
  u_2 < \ldots < u_k$ be the elements of $U$ in ascending order and
  define $j_R(U) = (u_1, \ldots, u_k)$; let $u_{k+1} < u_{k+2} < \ldots < u_{n}$ be the elements of $[n]
  \setminus U$ in ascending order and define $j_C(U) = (u_{k+1}, u_{k+2}, \ldots,
  u_n)$.

  For $U_1, U_2 \in \binom{[n]}{k}$, the entry of $M(\ts, e)$ at position $(j_R(U_1),
  j_C(U_2))$ equals $1$ if $U_1 = U_2$, and $0$ otherwise. This
  induces a $\binom{n}{k} \times \binom{n}{k}$ identity submatrix
  of $M(\ts, e)$, implying $\rk(M(\ts, e)) \ge \binom{n}{k}$.

  By Claim~\ref{claim:balanced edge}, $\ts$ has an edge $e$ with $n/3
  \le k < 2n/3$.  For that edge, $\rk(M(\ts, e)) \ge \binom{n}{k} \ge \binom{n}{\lceil n/3 \rceil}$, which completes the proof for the permanent.

  For the determinant, the only change is that some
  entries of $M(\ts, e)$ become $-1$ instead of $1$, but this only changes the
  identified submatrix from an identity matrix to a diagonal matrix and
  in particular does not change its rank.
\end{Proof}

The preceding proof is similar to a lower bound by Nisan
(\cite{Nisan91}, Lemma 2) used to obtain lower bounds for algebraic
branching programs.  But the lower bounds obtained there can be made
as sharp as $\Omega(2^n)$ whereas in our setting, we cannot rule out
the possibility of a tensor network that avoids splitting the $n$
variables in two approximately equal size parts.  This means that the best we can
obtain with our current method is $\binom{n}{\lceil n/3 \rceil}$
instead of $\binom{n}{n/2}$.

\subsection{$P$-linear forms}

Suppose $\ts$ is a socket tree for a $P$-linear form for a $k$-uniform
hypergraph $P$ on $v$ vertices.
Recall that the sockets of this form correspond to the elements
of $E(P) \subseteq \binom{[v]}{k}$.
Given an edge $e \in E(\ts)$ and a vertex $x
\in V(\ts)$, we write \[U(x, e)=\bigcup_{S \in L(x,e)} S \subseteq [v].\]

\begin{Claim}
  \label{claim:binom form sockwidth lb 1}
  Let $\ts$ be a socket tree for the $P$-linear form of order $n$.
  Let $e =
  \{x_R, x_C\} \in E(\ts)$ and suppose $|U(x_C, e) \cap U(x_R, e)|= u$.
  Then the socket width of $\ts$ is at least $w(\ts) \ge n^{u}$.

\end{Claim}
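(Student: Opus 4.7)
The plan is to exhibit a sufficiently large identity submatrix inside the flattening $M(\ts,e)$. Set $U_R = U(x_R,e)$, $U_C = U(x_C,e)$, and $U_{RC} = U_R \cap U_C$, so $|U_{RC}|=u$. Recall that a row index of $M(\ts,e)$ is a tuple $i_R$ that assigns a value in $[n]$ to each mode $(S,v)$ with $S \in L(x_R,e)$ and $v \in S$, and similarly for column indices $i_C$ with $S \in L(x_C,e)$. The entry $M(\ts,e)_{i_R,i_C}$ is $1$ precisely when there exists $\sigma \in [n]^{V(P)}$ such that $(i_R)_{(S,v)} = \sigma_v$ for every $(S,v)$ indexing $i_R$, and $(i_C)_{(S,v)} = \sigma_v$ for every $(S,v)$ indexing $i_C$; otherwise it is $0$.

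For each $\alpha \in [n]^{U_{RC}}$ I would build a single row index $r(\alpha)$ and a single column index $c(\alpha)$ as follows. For a mode $(S,v)$ indexing $r(\alpha)$ (so $S \in L(x_R,e)$, $v \in S \subseteq U_R$), put $(r(\alpha))_{(S,v)} = \alpha_v$ if $v \in U_{RC}$, and otherwise pick any fixed value in $[n]$, say $1$. Define $c(\alpha)$ symmetrically on the column side. This gives a family of $n^u$ rows and $n^u$ columns indexed by $\alpha \in [n]^{U_{RC}}$.

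Now I would verify that the $n^u \times n^u$ submatrix extracted at these rows and columns is the identity matrix. The entry at $(r(\alpha), c(\beta))$ is $1$ iff some $\sigma$ is consistent with both $r(\alpha)$ and $c(\beta)$. For $v \in U_{RC}$ the two indices force $\sigma_v = \alpha_v = \beta_v$, so consistency requires $\alpha = \beta$. For $v \in U_R \setminus U_{RC}$ only $r(\alpha)$ constrains $\sigma_v$ (to the fixed value $1$), and similarly on the $C$-side; for $v \in V(P)\setminus(U_R\cup U_C)$, $\sigma_v$ is unconstrained. Hence when $\alpha = \beta$ a valid $\sigma$ exists, and when $\alpha \ne \beta$ no valid $\sigma$ exists. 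The extracted submatrix is therefore the $n^u \times n^u$ identity, so $\rk(M(\ts,e)) \ge n^u$, and consequently $w(\ts) \ge n^u$.

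The main subtlety I expect is the bookkeeping that the chosen row indices $r(\alpha)$ are well-defined and consistent: a vertex $v \in U_{RC}$ may appear in several hyperedges $S \in L(x_R,e)$, and the definition must give the same value $\alpha_v$ to every mode $(S,v)$ with $v \in S$. That is exactly what my construction does, and it is what guarantees the existence of $\sigma$ in the ``$\alpha=\beta$'' case. Beyond this, the argument is a direct unpacking of the definition of $\hat{T}$ from~\eqref{eq:linear-forms-t} combined with the definition of $U(x,e)$; no deeper tool is needed.
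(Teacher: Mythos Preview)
Your proof is correct and follows essentially the same approach as the paper: both identify an $n^u \times n^u$ identity submatrix of $M(\ts,e)$ by fixing, for each $\alpha \in [n]^{U_{RC}}$, the modes over shared vertices to $\alpha$ and all remaining modes to a constant (say~$1$). Your write-up is in fact more explicit than the paper's in verifying both the diagonal (existence of a consistent $\sigma$) and off-diagonal (a conflict at some $v\in U_{RC}$) cases.
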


\begin{proof}
  Let $m = n^u$. We show that $\rk(M(\ts, e)) \ge
  m$ by identifying an $m \times m$ identity submatrix of $M(\ts, e)$.

  Define $E_R = \{\,(S, i)\,|\,S \in L(x_R, e), i \in S\,\}$
  to be the modes contained in the sockets on the $x_R$ side of $e$, and analogously $E_C
  = \{\,(S, i)\,|\,S \in L(x_C, e), i \in S\,\}$.

  The rows of $M(\ts, e)$ are indexed by $J(E_R)$ and the columns are indexed by $J(E_C)$. We will consider an $m \times m$ submatrix of $M(\ts, e)$ whose rows and columns are indexed by the elements $\sigma \in \prod_{i \in A} [n_i]$. More specifically, each row of the submatrix is indexed by $j_R$ where $j_R|_{(S,i)}=\sigma|_i$ for $i \in A$, and $j_R|_{(S,i)} = 1$ for $i \not \in A$. Each column is indexed by $j_C$ where $j_C|_{(S,i)}=\sigma|_i$ for $i \in A$ and $j_C|_{(S,i)}=1$  for $i \not \in A$.

The value of the submatrix at position $(j_{R}, j_{C})$ is $1$ if there exists $\sigma \in \prod_{i \in A} [n_i]$ such that $j_R|_{(S,i)}=\sigma|_i$ and $j_C|_{(S,i)}=\sigma|_i$ for all $i \in A$ and $0$ otherwise. We obtain
  an $m \times m$ identity matrix as desired.
\end{proof}

From this claim, the branchwidth-based lower bound is immediate.

\begin{Lem}
  \label{lem:lb-binom-2}
  For any hypergraph $P$, the socket width of the $P$-linear form
  of order $n$ is at least $n^{\bw(P)}$.
\end{Lem}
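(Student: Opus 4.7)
The plan is to observe that a socket tree for the $P$-linear form is, by definition, essentially the same combinatorial object as a branch decomposition of $P$: both are trees whose leaves are in bijection with $E(P)$ and whose internal vertices have degree exactly $3$. Under this identification, the width of a branch decomposition $\ts$ at an edge $e = \{x_R, x_C\}$---namely the number of vertices of $P$ that are incident to at least one hyperedge in $L(x_R,e)$ and at least one hyperedge in $L(x_C,e)$---is exactly $|U(x_R,e) \cap U(x_C,e)|$.

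Given this identification, I would invoke \cref{claim:binom form sockwidth lb 1} which already yields $\rk(M(\ts,e)) \ge n^{|U(x_R,e) \cap U(x_C,e)|}$ for every edge $e$ of $\ts$. Maximizing over edges, the socket-tree width satisfies
\[
w(\ts) \;=\; \max_{e \in E(\ts)} \rk(M(\ts,e)) \;\ge\; \max_{e \in E(\ts)} n^{|U(x_R,e)\cap U(x_C,e)|} \;=\; n^{w_{\mathrm{br}}(\ts)},
\]
where $w_{\mathrm{br}}(\ts)$ denotes the width of $\ts$ viewed as a branch decomposition of $P$.

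Finally, taking the minimum over all socket trees, which by the identification is the same as ranging over all branch decompositions of $P$, gives
\[
w(A) \;=\; \min_{\ts \in \mathscr{S}(A)} w(\ts) \;\ge\; \min_{\ts} n^{w_{\mathrm{br}}(\ts)} \;=\; n^{\bw(P)}.
\]

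Since \cref{claim:binom form sockwidth lb 1} already does all of the tensor-rank work, the only remaining step is the conceptual one of matching socket trees with branch decompositions and matching the respective notions of width. There is essentially no obstacle here --- the main thing to verify carefully is that ``vertex of $P$ incident to a hyperedge on both sides'' is literally the definition of membership in $U(x_R,e) \cap U(x_C,e)$, so the two width notions coincide edge-by-edge and the proof reduces to a one-line application of the claim followed by min/max exchanges.
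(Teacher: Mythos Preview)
Your proposal is correct and follows essentially the same approach as the paper: identify socket trees for the $P$-linear form with branch decompositions of $P$, observe that the branch-decomposition width at an edge $e$ is exactly $|U(x_R,e)\cap U(x_C,e)|$, and then invoke \cref{claim:binom form sockwidth lb 1}. The paper's proof is the terse two-sentence version of precisely this argument; your version just spells out the min/max bookkeeping more explicitly.
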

\begin{Proof}
  Any socket tree for a $P$-linear form can be directly
  viewed as a branch decomposition of $P$.  Thus, by definition every
  socket tree for the form has an edge $e = \{x_R, x_C\}$ where
  $|U(x_C,e) \cap U(x_R,e)| \ge \bw(P)$, and the lemma now follows
  from \cref{claim:binom form sockwidth lb 1}.
\end{Proof}

For counting homomorphisms of hypergraph cliques, that is,
the $P$-linear form for the complete $k$-uniform hypergraph
$P=\binom{[v]}{k}$ on $v$ vertices, we need the
following simple lower bound on the branchwidth of complete
hypergraphs (which is most likely known, but we are not aware of a
reference).

\begin{Lem}
  For $v > k \ge 3$, the complete $k$-uniform hypergraph on $v$ vertices has branchwidth $v$.
\end{Lem}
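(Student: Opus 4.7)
The upper bound $\bw(K_v^{(k)})\le v$ is trivial because any branch decomposition has every edge's width bounded by $|V(P)|=v$. For the matching lower bound I argue by contradiction: assume $(\ts,\pi)$ is a branch decomposition in which every tree edge has width strictly less than $v$. For each tree edge $e$, let $L_1(e),L_2(e)$ be the hyperedges appearing on the two sides and set $V_i(e)=\bigcup L_i(e)\subseteq[v]$; the width of $e$ equals $|V_1(e)\cap V_2(e)|$, so ``width $<v$'' forces $V_1(e)\ne[v]$ or $V_2(e)\ne[v]$. I orient $e$ out of a side whenever that side's union misses a vertex (breaking ties arbitrarily when both sides miss). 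Every leaf of $\ts$ becomes a source: a leaf labelled by hyperedge $S$ has $V_1=S$ of size $k<v$ on its side, so its unique incident edge points away from the leaf.

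The crux is to control in-degrees at internal vertices. Fix an internal vertex $u$ of $\ts$; removing $u$ splits the leaves into three nonempty classes $L^1,L^2,L^3$ with unions $V^a=\bigcup L^a$. I will show that neither the configuration ``$V^1=V^2=V^3=[v]$'' nor ``$V^a\ne[v]$ for all $a$'' is consistent with the standing assumption. The first possibility would immediately give width $v$ on each of the three edges incident to $u$, a contradiction. For the second, pick $i_a\in[v]\setminus V^a$; by construction, every hyperedge containing $i_a$ must lie in $L^b\cup L^c$ for $\{a,b,c\}=\{1,2,3\}$. If $i_1,i_2,i_3$ are distinct I use $k\ge 3$ and $v\ge k$ to choose any $k$-subset $S\supseteq\{i_1,i_2,i_3\}$; then the three constraints force $S\notin L^1\cup L^2\cup L^3$, contradicting that $S$ is a leaf. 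If two of them coincide, say $i_1=i_2=i$, the constraints pin every hyperedge containing $i$ to $L^3$; this clashes either with $i_3=i$ (which would leave no hyperedge containing $i$ at all) or, for $i_3\ne i$, with any hyperedge through $\{i,i_3\}$ (which exists since $v>k\ge 2$). Consequently each internal $u$ has in-degree exactly $1$ or $2$ in the orientation.

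I then conclude by a global counting step. With $N=\binom{v}{k}$ leaves, the tree $\ts$ has $N-2$ internal vertices and $2N-3$ edges. Summing in-degrees over all vertices must give $2N-3$ (the edge count), yet is at most $0\cdot N+2\cdot(N-2)=2N-4$ by the vertex analysis, a contradiction that forces $\bw(K_v^{(k)})\ge v$. The decisive step, and the main obstacle, is the impossibility of ``$V^a\ne[v]$ for all $a$'', where both hypotheses $k\ge 3$ and $v>k$ are essential: the former provides room for three distinct vertices inside a $k$-subset, and the latter ensures such a subset exists within $[v]$. Note that the argument correctly breaks down for $k=2$, consistent with the known value $\bw(K_v^{(2)})=\lceil 2v/3\rceil<v$.
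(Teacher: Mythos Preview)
Your argument is correct and shares its heart with the paper's: both orient the tree edges according to which side's union of hyperedges fails to cover $[v]$, and both pivot on the impossibility (for $k\ge 3$) of all three branches at an internal vertex simultaneously missing a vertex. Where you finish with a global degree count ($N$ leaves of in-degree $0$, $N-2$ internals of in-degree at most $2$, totalling at most $2N-4<2N-3$ edges), the paper instead observes directly that any orientation of a tree has a sink, shows the sink cannot be a leaf (since $k<v$), and then obtains all three $V^a\ne[v]$ at that sink---the same contradiction, reached by locating a single bad vertex rather than by counting. Your case analysis for possibly repeated $i_a$ is also more explicit than the paper's, which simply invokes a hyperedge $S\supseteq\{i_1,i_2,i_3\}$ without comment.

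One small gap: when you assert that every leaf is a source, you only note that the leaf side has union of size $k<v$; under your own tie-breaking rule this is not sufficient, since if the \emph{other} side also missed a vertex the edge could be oriented either way and the leaf could acquire in-degree $1$, wrecking the count. You need the easy observation that the complement of any single hyperedge still covers $[v]$ (each $i\in[v]$ lies in $\binom{v-1}{k-1}\ge k\ge 3$ hyperedges). In fact this observation, applied to an arbitrary tree edge rather than just a leaf edge, is exactly the paper's opening step and shows your tie-breaking clause is never actually invoked.
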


\begin{Proof}
  Let $T$ be an arbitrary branch decomposition of the hypergraph.

  First, we note that for every edge $e = \{x_R, x_C\}$, either
  $U(x_R, e) = [v]$ or $U(x_C, e) = [v]$.  Indeed, suppose for contradiction
  that there is an edge $e = \{x_R, x_C\}$ with
  $i_1 \not\in U(x_R, e)$ and $i_2 \not\in U(x_C, e)$ for some not
  necessarily distinct $i_1, i_2 \in [v]$.  Since $k \ge 2$, there exists
  an edge $S \supseteq \{i_1, i_2\}$, and that edge must appear in
  either $\ts(x_R, e)$  or $\ts(x_C,  e)$.

  Now, if $U(x_R, e) = U(x_C, e) = [v]$ for some $e = \{x_R, x_C\}$,
  then we are done, so suppose for contradiction that this is not the case.
  Direct each edge towards the subtree which covers $[v]$.  For
  example, if $U(x_R, e) = [v]$ and $U(x_C, e) \ne [v]$ then $e$ is
  directed towards $x_R$.

  Since a tree is acyclic, there must be some vertex $x^*$ such that
  all edges incident to $x^*$ are directed towards $x^*$.  The vertex
  $x^*$ cannot be a leaf, because the subtree consisting of a leaf
  contains a single edge, which covers $k < v$ elements.  Thus $x^*$
  has degree $3$.  Let the three edges incident to $x^*$ be $e_1 =
  \{x^*, y_1\}$, $e_2 = \{x^*, y_2\}$ and $e_3 = \{x^*, y_3\}$.  Since
  all the edges are directed towards $x^*$ there are $i_1, i_2, i_3
  \in [v]$ such that $i_1 \not\in S(y_1, e_1)$, $i_2 \not\in S(y_2,
  e_2)$ and $i_3 \not\in S(y_3, e_3)$.  Since $k \ge 3$ there exists a
  edge $S \supseteq \{i_1,i_2,i_3\}$.  But now the edge $S$ cannot appear in any of the subtrees rooted at $y_1$, $y_2$, or $y_3$.
  Since these three subtrees together cover all leaves, this yields
  the desired contradiction and we conclude that there must exist an
  edge $e$ such that $U(x_R, e) = U(x_C, e) = [v]$ and therefore the branchwidth is $v$.
\end{Proof}

\subsection{Kruskal operator}

We say that an $\ell$-linear Kruskal operator is {\em $n$-uniform} if
the lengths of the modes satisfy $n=n_1=n_2=\cdots=n_\ell$.

\begin{Lem}
For positive integers $n$ and $r$, the socket-width of an $n$-uniform $\ell$-linear Kruskal operator is at least $\max\left(n^{\ell}, n^{\lceil\ell/2\rceil}r\right)$.
\end{Lem}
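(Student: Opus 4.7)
The strategy is to fix an arbitrary socket tree $\ts$ for the Kruskal operator $A$ and to exhibit two edges of $\ts$ whose corresponding flattening ranks witness the two bounds $n^{\ell}$ and $n^{\lceil \ell/2 \rceil}r$ individually; since $w(\ts)$ is the maximum rank over all edges, this establishes $w(\ts)\geq\max(n^\ell,n^{\lceil \ell/2\rceil}r)$.

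The central step is to compute, for any edge $e=\{x_R,x_C\}\in E(\ts)$, the rank of the flattening $M(\ts,e)$ in closed form. The cleanest way is to factor the Kruskal tensor as
\[
\hat{T} \;=\; M_i \otimes M_j, \qquad M_i=\prod_{m=1}^{\ell}\delta_{i_m,i'_m}, \qquad M_j=\mathbbm{1}[j_1=j_2=\cdots=j_\ell],
\]
where $M_i$ lives on the modes $\{i_m,i'_m\}$ and $M_j$ lives on $\{j_m\}$. Since the two sets of modes are disjoint, any bipartition of the modes induces the same bipartition on each factor, the flattening of $\hat{T}$ is the Kronecker product of the two flattenings, and $\rk(M(\ts,e)) = \rk(M_i^{\mathrm{flat}})\cdot\rk(M_j^{\mathrm{flat}})$. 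The output socket sits entirely on one side of $e$; call the other side $\bar S$ and let $k_{\bar S}$ be the number of input sockets on $\bar S$. Each factor $\delta_{i_m,i'_m}$ in $M_i$ contributes rank $n$ to $M_i^{\mathrm{flat}}$ if $i_m$ and $i'_m$ lie on opposite sides (which happens exactly for the $k_{\bar S}$ inputs on $\bar S$) and rank $1$ otherwise, giving $\rk(M_i^{\mathrm{flat}})=n^{k_{\bar S}}$. For $M_j^{\mathrm{flat}}$, the $r$ non-zero rows (one per common value $j_0\in[r]$) and analogous $r$ non-zero columns form an $r\times r$ identity submatrix whenever both sides contain at least one $j_m$-mode; otherwise one side has no $j$-modes and $\rk(M_j^{\mathrm{flat}})=1$. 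Hence
\[
\rk(M(\ts,e)) \;=\;
\begin{cases}
n^{k_{\bar S}}\,r & \text{if both sides of $e$ contain an input socket,}\\
n^{\ell} & \text{if one side of $e$ is just the output socket.}
\end{cases}
\]

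With this rank formula in hand, the two witnessing edges are easy to produce. For the bound $n^\ell$, take the edge $e_0$ of $\ts$ incident to the output-socket leaf: one side consists solely of the output, so $k_{\bar S}=\ell$ and the rank equals $n^\ell$. For the bound $n^{\lceil\ell/2\rceil}r$, root $\ts$ at the output-socket leaf $v^*$; since every internal vertex of $\ts$ has degree $3$, the resulting rooted tree $T$ is binary with $\ell$ input-socket leaves, and (assuming $\ell\geq 2$) the root of $T$ has two children. Let $v$ be a child with the larger subtree; then $s(v)\geq \lceil\ell/2\rceil$, and the other child's subtree contains at least one leaf so $s(v)\leq \ell-1$. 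The edge $e_1$ from $v$ to its parent has $k_{\bar S}=s(v)\in[\lceil\ell/2\rceil,\ell-1]$ and both sides contain an input socket, so its rank is $n^{s(v)}r\geq n^{\lceil\ell/2\rceil}r$.

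The main obstacle is the rank computation: one must verify that the tensor-product factorization $\hat{T}=M_i\otimes M_j$ genuinely distributes over an arbitrary mode bipartition and apply $\rk(A\otimes B)=\rk(A)\rk(B)$ to convert the combinatorial structure of $e$ into the clean formula above. Once that is done, locating the two edges $e_0$ and $e_1$ is a short tree-topology argument, and combining the two rank lower bounds immediately yields $w(A)\geq\max(n^\ell,n^{\lceil\ell/2\rceil}r)$.
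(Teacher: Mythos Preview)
Your proof is correct and follows essentially the same high-level route as the paper: for an arbitrary socket tree, inspect the edge incident to the output-socket leaf to obtain the $n^\ell$ bound, and inspect one of the two edges just past the output socket's internal neighbor (choosing the side with at least $\lceil\ell/2\rceil$ input sockets) to obtain the $n^{\lceil\ell/2\rceil}r$ bound. The paper establishes the rank lower bounds at these two edges by writing down explicit $n^u r\times n^u r$ and $n^\ell\times n^\ell$ identity submatrices of the respective flattenings; you instead factor $\hat{T}=M_i\otimes M_j$ over the disjoint $i$- and $j$-mode groups and invoke $\rk(A\otimes B)=\rk(A)\rk(B)$ to compute the exact flattening rank in closed form. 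Your method is a bit slicker and yields equalities rather than inequalities, at the price of relying on the Kronecker rank identity and on the observation that any socket-level bipartition respects the $i$/$j$ mode split (since each input socket carries one $i$-mode and one $j$-mode, and the output socket carries only $i'$-modes). Two small caveats worth noting: the flattening $M(\ts,e)$ is by definition a flattening of $T(A)$ rather than $\hat{T}(A)$, so you should remark that socket-level flattenings of $T(A)$ coincide with the corresponding mode-level flattenings of $\hat{T}(A)$; and your argument for the $n^{\lceil\ell/2\rceil}r$ bound tacitly assumes $\ell\geq 2$ (so that the output socket's neighbor is an internal vertex with two further children), which you do flag but should make explicit given that for $\ell=1$ the stated bound $n r$ need not hold.
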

\begin{Proof}
  Let $\ts$ be an arbitrary socket tree for the
  operator.  One of the leaves is the output socket representing the
  tensor $B$ of shape $n \times n \times \cdots \times n$ ($\ell$
  times) that is obtained by applying the Kruskal operator to the input matrices
  $A^{(i)}$ of shape $n \times r$, where $1 \le i \le \ell$.

  Consider the neighbor $x$ of the output socket.  It has three
  neighbors: the output socket leaf, and two other vertices $x_R$ and
  $x_C$.  Either the subtree rooted at $x_R$ or the one rooted at $x_C$
  must contain $u \geq \lceil \ell/2 \rceil$ input sockets (since
  together they contain all $\ell$ input sockets).  Let $e$ be the
  edge leading to that subtree.

  We claim that $\rk(M(\ts, e)) \ge n^u r$.  Suppose without loss of
  generality that the input sockets $A^{(1)}, \ldots, A^{(u)}$ are in
  the subtree rooted at $x_R$, and that the input sockets $A^{(u+1)},
  \ldots, A^{(\ell)}$ are in the subtree rooted at $x$ together with
  the output socket.

  Each row of $M(\ts, e)$ is indexed by sequences of $2u$ indices $(i_1, j_1, \ldots, i_u, j_u)$ where each
  $i \in [n]$ and each $j \in [r]$. Each column is indexed by
  a sequence of $\ell + 2(n-u)$ indices $(i_1', \ldots, i_\ell';$ $
  i_{u+1}, j_{u+1}, \ldots, i_{\ell}, j_{\ell})$ where each $i, i' \in [n]$
  and each $j \in [r]$.  An entry of $M(\ts, e)$ is $1$ if
  and only if $i_1' = i_1, i_2' = i_2, \ldots, i_\ell' = i_\ell$, and
  $j_1 = j_2 = \ldots = j_\ell$.

  For any $i_1, \ldots, i_u \in [n]$ and $j \in [r]$, consider the row index $(i_1, j, i_2, j, \ldots, i_u, j)$ and the column
  index $(i_1, i_2, \ldots, i_u, 1, 1, \ldots, 1; 1, j, 1, j, \ldots,
  1, j)$.  The
  $n^u r \times n^ur$ submatrix of $M(\ts, e)$ induced by these sets of
  row and column indices is the identity matrix, thus $\rk(M(\ts, e))
  \ge n^u r$ as desired.

  For the $n^{\ell}$ lower bound, we
  instead consider the edge $e'$ joining $x$ with the output socket.
  Now the rows are indexed by $(i_1, j_1, i_2, j_2, \ldots, i_\ell, j_\ell)$  and the
  columns by $(i_1', \ldots, i_\ell')$.  The
  $n^\ell \times n^\ell$ identity submatrix is obtained by taking for
  every $i_1, \ldots, i_\ell \in [n]$, the row $(i_1, 1, i_2, 1, \ldots, i_\ell, 1)$ and the column $(i_1, \ldots, i_\ell)$.
\end{Proof}


\appendix

\section{Minimum-cost execution}

\label{sect:exec}

This appendix summarizes some results on tensor network contractions and minimum-cost executions. These results are not original work, but for the sake of completeness we present them here.

\subsection{Invariance property}

In the following lemma, we will show that the tensor of a network is equal to the tensor of any network that is obtained by a contraction from the original network. In particular, this implies that any execution gives the same tensor.

\begin{Lem}[Invariance]
\label{lem:invariance}
Let $D$ be a tensor network. For all nonempty $W\subseteq V(D)$ it holds that $T(D)=T(D/W)$.
\end{Lem}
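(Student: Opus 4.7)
The plan is to show that the defining sum \eqref{eq:tensor-rep} for $T(D)_i$ factors along the partition of $V(D)$ into $W$ and $V(D)\setminus W$, and that the factor corresponding to $W$ is exactly the tensor $T(D[W])$ that $D/W$ assigns to the new vertex $w$. The proof is essentially careful bookkeeping of indices and a rearrangement of a finite sum of products.

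First, I would record the two basic set-theoretic facts that drive the argument. From the definitions \eqref{eq:induced} and \eqref{eq:contraction}, the modes of $D$ split as a disjoint union
\[
E(D) \;=\; \bigl(E(D/W)\bigr) \;\cup\; \bigl(E(D[W])\setminus B(D[W])\bigr),
\]
and moreover $B(D/W)=B(D)$. Consequently
\[
E(D)\setminus B(D) \;=\; \bigl(E(D/W)\setminus B(D/W)\bigr) \;\cup\; \bigl(E(D[W])\setminus B(D[W])\bigr),
\]
again disjointly. Fix $i\in J(B(D))$ and split the summation variable $j\in J(E(D)\setminus B(D))$ accordingly as $j=(j_1,j_2)$ with $j_1\in J(E(D/W)\setminus B(D/W))$ and $j_2\in J(E(D[W])\setminus B(D[W]))$.

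Next, I would observe that for any $v\in V(D)\setminus W$ one has $I(v)\cap(E(D[W])\setminus B(D[W]))=\emptyset$: such an ``internal'' mode of $D[W]$ is by definition incident only to vertices of $W$. Hence $T(v)_{ij}$ depends only on $i$ and $j_1$. For $v\in W$ all incident modes lie in $E(D[W])=(B(D[W]))\cup(E(D[W])\setminus B(D[W]))$, where the first piece, by the boundary-formula for $D[W]$ in \eqref{eq:induced}, consists of either (a) modes of $B(D)$ incident to some vertex in $W$, indexed by the restriction of $i$, or (b) modes shared between $W$ and $V(D)\setminus W$, which land in $E(D/W)\setminus B(D/W)$ and are indexed by the restriction of $j_1$. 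Thus the indices of $T(v)_{ij}$ for $v\in W$ are determined by $i$, $j_1$, and $j_2$. Splitting the product and pulling out the $j_2$-independent factors,
\[
T(D)_i \;=\; \sum_{j_1}\Bigl(\prod_{v\in V(D)\setminus W} T(v)_{ij_1}\Bigr)\Bigl(\sum_{j_2}\prod_{v\in W} T(v)_{ij_1 j_2}\Bigr).
\]

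Now comes the key identification: the inner sum is exactly $T(D[W])$ evaluated at the boundary position determined by $(i,j_1)$, since $B(D[W])$-indices are governed precisely by $i$ and the portion of $j_1$ coming from modes shared with $V(D)\setminus W$. By the rule $T(w)=T(D[W])$ in \eqref{eq:contraction}, this inner sum equals $T(w)_{ij_1}$. Substituting gives
\[
T(D)_i \;=\; \sum_{j_1\in J(E(D/W)\setminus B(D/W))}\; T(w)_{ij_1}\prod_{v\in V(D)\setminus W} T(v)_{ij_1} \;=\; \sum_{j_1}\prod_{v\in V(D/W)} T(v)_{ij_1} \;=\; T(D/W)_i,
\]
which is exactly \eqref{eq:tensor-rep} for the contracted network and completes the proof.

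The only delicate point — and the step I would write out most carefully — is the index bookkeeping in the second paragraph, in particular the claim that every mode of $B(D[W])$ is in fact indexed either by part of $i$ (if it lies in $B(D)$) or by part of $j_1$ (if it is shared with $V(D)\setminus W$), with no mode of $B(D[W])$ accidentally appearing in $j_2$. This is immediate from the definition of $B(D[W])$ in \eqref{eq:induced}, but it is the place where a sloppy treatment would introduce a subtle error.
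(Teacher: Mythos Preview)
Your proof is correct and follows essentially the same approach as the paper's own proof: both amount to splitting the non-boundary modes of $D$ into those of $D/W$ and the internal modes of $D[W]$, factoring the sum accordingly, and identifying the inner sum over the latter with $T(D[W])=T(w)$. The only cosmetic difference is that you work from $T(D)$ toward $T(D/W)$ while the paper writes the chain of equalities in the opposite direction, and you make the index bookkeeping more explicit than the paper does.
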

\begin{Proof}
Let $W\subseteq V(D)$ be nonempty and let $i\in J(B(D/W))=J(B(D))$.
From \eqref{eq:induced}, \eqref{eq:tensor-rep}, and \eqref{eq:contraction},
it follows that
\[
\begin{split}
T(D/W)_i
&=
\sum_{j\in J(E(D/W)\setminus B(D/W))}
\prod_{v\in (V(D)\setminus W) \cup \{w\}}T(v)_{ij}\\
&=
\sum_{j\in J(E(D/W)\setminus B(D/W))}
T(w)_{ij}\prod_{v\in V(D)\setminus W}T(v)_{ij}\\
&=
\sum_{j\in J(E(D/W)\setminus B(D))}
T(D[W])_{ij}\prod_{v\in V(D)\setminus W}T(v)_{ij}\\
&=
\sum_{j\in J(E(D/W)\setminus B(D))}\
\sum_{j'\in J(E(D[W])\setminus B(D[W]))}\
\prod_{w\in W}T(w)_{ijj'}
\prod_{v\in V(D)\setminus W}T(v)_{ij}\\
&=
\sum_{j\in J(E(D/W)\setminus B(D))}\
\sum_{j'\in J(E(D[W])\setminus B(D[W]))}\
\prod_{v\in V(D)}T(v)_{ijj'}\\
&=
\sum_{j''\in J(E(D)\setminus B(D))}\
\prod_{v\in V(D)}T(v)_{ij''}\\
&=T(D)_i\,.
\end{split}
\]
\end{Proof}

\subsection{The structure of a minimum-cost execution}

In this section, we analyze the structure of a minimum-cost execution. In particular, we will prove Lemma~\ref{lem:binary} that states that each contracted set has size at most two and also show that one can always contract adjacent vertices in a network.

\BinaryExecutionLemma*

\begin{Proof}
If $D$ contains loops, we may assume that
a minimum-cost execution first removes all the loops by contracting singleton
vertices incident to loops. Indeed, the cost of contracting
a singleton vertex is the volume of the tensor associated to it. Since the result of an execution is a single tensor, then every vertex has to be contained in a contracted set of an execution and none of the hyperedges incident to a vertex cannot be removed before the vertex is contained in a contracted set. Hence, the volume of any tensor in the tensor network is a lower
bound for the cost of any execution of $D$ and we may contract singleton vertices.

So let us assume that $D$ is loopless.
Suppose that a minimum-cost execution of $D$ contains a contraction by
a set $W=\{w_1,w_2,\ldots,w_s\}$ of size at least $s\geq 3$, and let $w$ be the new vertex after this contraction. Then we can replace the contraction by $W$ with
two contractions by $W'=\{w_1,w_2,\ldots,w_{s-1}\}$ and $W''=\{w,w_s\}$ without increasing the
cost of the execution. The cost of contracting $W'$ is less or equal than the cost of contracting $W$, because every hyperedge incident to $W'$ is also incident to $W$. The cost of contracting $W''$ is less or equal than the cost of contracting $W$, because the set of hyperedges incident to $w$ is contained in the set of hyperedges incident to $W'$. We repeat this procedure until all contracted sets in the
execution have size at most two.
\end{Proof}

Two tensors in a tensor network are called {\em adjacent}
if they are incident to a common mode.

\begin{Lem}[Execution by contracting adjacent tensors]
\label{lem:adjacent}
Let $D$ be a loopless tensor network that is connected as a hypergraph.
Then, there exists a minimum-cost execution of $D$ such that
each contracted set has size two and consists of adjacent vertices.
\end{Lem}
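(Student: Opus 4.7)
The plan is to start from a minimum-cost execution whose contraction tree $\mathcal{T}$ is a rooted binary tree (possible by Lemma~\ref{lem:binary}, since $D$ is loopless) and perform tree surgery to replace non-adjacent contractions by adjacent ones without increasing cost. For each internal node $v$ of $\mathcal{T}$, write $X_v, Y_v \subseteq V(D)$ for the leaf sets of its two children and $Z_v = X_v \cup Y_v$; call $v$ \emph{adjacent} if some hyperedge of $D$ is incident to a vertex in $X_v$ and a vertex in $Y_v$, and \emph{non-adjacent} otherwise. I will consider the potential
\[
\Phi(\mathcal{T}) = \sum_{v \text{ non-adjacent in } \mathcal{T}} \mathrm{depth}_{\mathcal{T}}(v),
\]
and among all minimum-cost binary (size-$2$) executions fix one minimizing $\Phi$. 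The goal is to show $\Phi(\mathcal{T}) = 0$ by exhibiting, from any counterexample, another minimum-cost binary execution with strictly smaller $\Phi$.

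Take any non-adjacent $v$ in $\mathcal{T}$. The root is always adjacent: its two children partition $V(D)$, and connectedness of $D$ forces a hyperedge across any such partition. Hence $v$ has a proper ancestor chain $v = v^{(0)}, v^{(1)}, v^{(2)}, \ldots$ in which $v^{(i)}$'s other child subtree has leaf set $A^{(i)}$. Because $D$ is connected, $Z_v \neq V(D)$, and $X_v$ is non-adjacent to $Y_v$, some hyperedge of $D$ joins $X_v$ to $V(D) \setminus Z_v$; its far endpoint lies in some $A^{(i)}$, so the smallest index $k$ for which $A^{(i)}$ contains a vertex adjacent in $D$ to some vertex in $X_v$ is well-defined. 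Write $W^* = A^{(k)}$ and $v^* = v^{(k)}$; by minimality of $k$, $X_v$ is non-adjacent in $D$ to $Y_v$ and to each $A^{(i)}$ with $i < k$. The rewrite producing $\mathcal{T}'$ is: delete $v$, so that $Y_v$'s subtree sits where $v$ used to; each intermediate node $v^{(i)}$ with $1 \le i < k$ stays in its structural position but now merges the $Y$-side accumulation $Y_v \cup A^{(1)} \cup \ldots \cup A^{(i-1)}$ with $A^{(i)}$; and in place of $v^*$ insert two new nodes, the lower merging $X_v$ with $W^*$ (adjacent by construction) and the upper merging this with the $Y$-side accumulation $Y_v \cup A^{(1)} \cup \ldots \cup A^{(k-1)}$.

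The bookkeeping behind cost-preservation and potential-decrease is the heart of the proof. The crucial decomposition is that, since $X_v$ is non-adjacent in $D$ to $Y_v$ and to every $A^{(i)}$ with $i<k$, the mode set of the left-hand tensor appearing at $v^{(i)}$ decomposes as the disjoint union $B(D[X_v]) \sqcup B(D[Y_v \cup A^{(1)} \cup \ldots \cup A^{(i-1)}])$, and all shared modes at the $v^{(i)}$-merge lie on the $Y$-side. Consequently the new intermediate merge at $v^{(i)}$ costs exactly the old cost divided by $|T(X_v)|$, and the two new merges replacing $v^*$ are each bounded by the old $v^*$-cost by the analogous mode accounting applied to $B(D[X_v \cup W^*])$ and its shared modes with the $Y$-side. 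Hence $c(\mathcal{T}') \le c(\mathcal{T})$, so $\mathcal{T}'$ is also minimum-cost. For the potential, each intermediate node keeps both its depth and its adjacency status (the latter because $X_v$ contributes no shared modes to any intermediate merge), the non-adjacent node $v$ at depth $\mathrm{depth}(v^*) + k$ disappears, and at most one new non-adjacent node appears at depth $\mathrm{depth}(v^*)$ (the upper of the two nodes replacing $v^*$; the lower is adjacent). Therefore $\Phi(\mathcal{T}') \le \Phi(\mathcal{T}) - k < \Phi(\mathcal{T})$, contradicting the minimality of $\Phi$.

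The main obstacle is the careful mode accounting at the two replacement nodes at $v^*$: one must verify that $|T(X_v \cup W^*)|$ is bounded by $|T(X_v)| \cdot |T(W^*)|$ divided by the shared-mode contribution between $X_v$ and $W^*$, and that the subsequent merge with the $Y$-side does not introduce any new shared modes beyond those already on the $Y$-side boundary. Both reduce to explicitly computing $B(D[\cdot])$ for the relevant unions, using that the only adjacencies of $X_v$ across the cut at $v^*$ go to $W^*$. Once this decomposition is written down, finite descent on the non-negative integer-valued potential $\Phi$ closes the argument and produces an adjacent-pair execution of the same minimum cost.
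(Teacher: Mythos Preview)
Your cost-preservation argument is essentially correct and in some ways more explicit than the paper's: the mode decomposition $B(D[Z_{v^{(i-1)}}])=B(D[X_v])\sqcup B(D[Y_v\cup A^{(1)}\cup\cdots\cup A^{(i-1)}])$ does hold for $i\le k$ by the non-adjacency of $X_v$, and it does imply that each rewritten contraction costs no more than the corresponding old one.

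The gap is in the potential step. Your surgery moves the entire $W^*$ subtree down one level (its root goes from depth $d^*+1$ to $d^*+2$), so every non-adjacent internal node inside $W^*$ gains $1$ in depth. Your accounting ``$\Phi(\mathcal T')\le\Phi(\mathcal T)-k$'' silently assumes those nodes contribute the same to $\Phi$ before and after, which is false. Concretely, take $k=1$ and suppose $W^*$ contains $m\ge 2$ non-adjacent internal nodes while $X_v,Y_v$ are leaves. Then the change in $\Phi$ is at least $-(d^*+1)+d^*+m=m-1>0$ when the upper node happens to be non-adjacent, so $\Phi$ can strictly increase and the extremal argument collapses. (The same problem defeats the obvious alternative potentials I can see for your asymmetric surgery: the count of non-adjacent nodes need not drop, since the upper node may be non-adjacent while the removed $v^*$ is always adjacent; and size-of-leaf-set potentials are spoiled by the $v^{(i)}$, whose leaf sets shrink.)

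The paper avoids this by using a more symmetric rewrite together with the simple \emph{count} of non-adjacent contractions as the potential. Rather than committing to pull $X_v$ up, it deletes the bad contraction and, at every relevant step going up, lets the current descendant of \emph{either} $X_v$ or $Y_v$ absorb the sibling $A^{(i)}$, choosing whichever one is adjacent to $A^{(i)}$; the two descendants are merged only once they become adjacent. The key observation is that if the original step at $v^{(i)}$ was adjacent, then one of the two descendants must already be adjacent to $A^{(i)}$, so that step stays adjacent; hence the modified execution has at least one fewer non-adjacent contraction (the deleted $(X_v,Y_v)$ step is replaced by an adjacent one). Your asymmetric rewrite loses exactly this guarantee, and no depth-based bookkeeping repairs it.
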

\begin{Proof}
Consider a minimum-cost execution of $D$ such that each contracted
set has size two; such an execution exists by Lemma~\ref{lem:binary}.
If all contracted sets in the execution consist of adjacent vertices,
we are done. Otherwise, consider a contraction of two vertices, $u$ and $v$,
such that $u$ and $v$ are not adjacent when they are
contracted to yield the vertex $uv$.
Consider the steps of the execution after this contraction step.
Let us call such a step a {\em relevant} step
if it involves a descendant of the vertex $uv$. Let us modify the
execution as follows. First, delete the contraction of $u$ and $v$ from
the execution. Then, for each relevant step in execution order, replace
the descendant of the vertex $uv$ with the current descendant of
either $u$ or $v$ so that the contraction becomes a contraction of
adjacent vertices whenever possible. If the descendants of $u$ and $v$
become adjacent after a relevant step, contract the descendants
of $u$ and $v$, and then continue the execution without further changes.
Since $D$ is connected, the descendants of $u$ and $v$
must eventually become adjacent.

We will show that this modification of the execution gives again an execution and that it has cost no larger than the original minimum-cost
execution.
In the modification of the execution, let the contraction sets containing $u$ or a descendant of $u$ and not containing a descendant of $v$ be $\{u,w_1\},\{uw_1,w_2\}$, $\ldots$, $\{uw_1\ldots w_{s-1},w_s\}$; similarly, let the contraction sets containing $v$ or a descendant of $v$ and not containing a descendant of $u$ be $\{v,z_1\}$, $\{vz_1,z_2\},\ldots$, $\{vz_1\ldots z_{t-1},z_t\}$. Here $w_1,\ldots,w_s,z_1,\ldots,z_t$ can be vertices of $D$ or vertices obtained after contraction steps.  Contracting $\{uw_1\ldots w_s,vz_1 \ldots z_t\}$  gives the vertex $uw_1 \ldots w_s v z_1 \ldots z_t$, which  appears also in the original execution. Hence, after certain steps the tensor networks in the original execution and in the modification are the same (after making necessary non-relevant contractions) and the modification of the original execution is also an execution.

First we consider the cost of contracting $\{uw_1\ldots w_{i-1},w_i\}$ in the modified execution where $1 \leq i \leq s$. There exists a $j$ satisfying $1 \leq j \leq t$ such that in the original execution we contract $\{uw_1 \ldots w_{i-1}vz_1 \ldots z_j,w_i\}$. Then the cost of contracting $\{uw_1\ldots w_{i-1},w_i\}$ is less or equal that the cost of contracting $\{uw_1 \ldots w_{i-1}vz_1 \ldots z_j,w_i\}$, because every hyperedge incident to the vertex $uw_1\ldots w_{i-1}$ in the modified execution has to be incident to $uw_1 \ldots w_{i-1}vz_1 \ldots z_j$ in the original execution. Otherwise, there would be an hyperedge in $D$ that is incident only to vertices in $\{u,w_1, \ldots, w_{i-1},v,z_1, \ldots, z_j\}$, and in particular both to vertices in $\{u,w_1, \ldots, w_{i-1}\}$ and in $\{v,z_1, \ldots, z_j\}$. This is impossible, because then $uw_1\ldots w_{i-1}$ and $vz_1 \ldots z_j$ would be adjacent vertices in the modified execution, but by assumption $uw_1\ldots w_s$ and $vz_1 \ldots z_t$ are the first adjacent descendants of $u$ and $v$ in the modified execution. Similarly, we can show that the cost of contracting $\{vz_1\ldots z_{j-1},z_j\}$ in the modified execution for $1 \leq j \leq t$ is less than the cost of contracting a set in the original execution.

Second we consider the cost of contracting $\{uw_1\ldots w_s,vz_1 \ldots z_t\}$ in the modified execution. Without loss of generality, we can assume that in the original execution the vertex $uw_1\ldots w_svz_1 \ldots z_t$ is obtained from contracting $\{uw_1\ldots w_svz_1 \ldots z_{t-1}, z_t\}$. We will show that the cost of contracting $\{uw_1\ldots w_s,vz_1 \ldots z_t\}$ in the modified execution is less or equal than the cost of contracting $\{uw_1\ldots w_svz_1 \ldots z_{t-1}, z_t\}$ in the original execution. Indeed, every hyperedge incident to $uw_1\ldots w_s$ in the modified execution is incident to $uw_1\ldots w_svz_1 \ldots z_{t-1}$ in the original execution, because otherwise there would be an hyperedge in $D$ that is incident only to vertices in $\{u,w_1, \ldots, w_s,v,z_1, \ldots, z_{t-1}\}$, and in particular both to vertices in $\{u,w_1, \ldots, w_s\}$ and $\{v,z_1, \ldots, z_{t-1}\}$. However, by assumption $uw_1\ldots w_s$ and $vz_1 \ldots z_t$ are the first adjacent descendants of $u$ and $v$ in the modified execution. Similarly, every hyperedge incident to $vz_1 \ldots z_t$ in the modified execution is incident to $uw_1\ldots w_svz_1 \ldots z_{t-1}$ or $z_t$ in the original execution, because otherwise there would be an hyperedge in $D$ that is incident only to vertices in $\{u,w_1, \ldots, w_s,v,z_1, \ldots, z_{t-1}\}$, and in particular both to vertices in $\{u,w_1, \ldots, w_s\}$ and $\{v,z_1, \ldots, z_{t-1}\}$. This contradicts that $uw_1\ldots w_s$ and $vz_1 \ldots z_t$ are the first adjacent descendants of $u$ and $v$ in the modified execution.

The modified execution consists of at least one less contraction of nonadjacent
vertices. Repeating this procedure until there are no contractions of nonadjacent vertices completes the lemma.
\end{Proof}

\subsection{Finding a minimum-cost execution}

In this section, we will show a recurrence relation for the cost of a tensor network. To this end, let $D$ be a loopless connected tensor network $D$ with at least two tensors.
Let us write $\mathscr{C}(D)$ for the set of all $W\subseteq V(D)$ such
that $D[W]$ is connected.

\begin{Lem}
Let $D$ be a loopless connected tensor network $D$ with at least two tensors. The cost of $D$ satisfies the recurrence
\begin{equation}
\label{eq:minimum-cost}
c(D)
=\!\!\!\!
\min_{\substack{W\in\mathscr{C}(D)\\ V(D)\setminus W\in\mathscr{C}(D)}}
\!\!\!\!
\max\bigl(c(D[W]),c(D[V(D)\setminus W]),c(D/W/(V(D)\setminus W))\bigr)\,.
\end{equation}
\end{Lem}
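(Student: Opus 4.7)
\textbf{Proof plan for the recurrence \eqref{eq:minimum-cost}.}
The plan is to establish the two directions of the equality separately. Both directions rely on matching executions of $D$ with pairs of executions of the two halves together with a final ``join'' contraction.

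For the inequality ``$\le$'', I would fix any $W \in \mathscr{C}(D)$ such that $V(D)\setminus W \in \mathscr{C}(D)$ and construct an execution of $D$ by concatenation: first perform a minimum-cost execution of $D[W]$, which replaces those vertices by a single tensor $t_W$; then perform a minimum-cost execution of $D[V(D)\setminus W]$, which replaces those vertices by a single tensor $t_{V(D)\setminus W}$; and finally contract the pair $\{t_W, t_{V(D)\setminus W}\}$. The first two phases have costs $c(D[W])$ and $c(D[V(D)\setminus W])$ respectively, and the last contraction is exactly the sole contraction step of the two-vertex network $D/W/(V(D)\setminus W)$, whose cost is $c(D/W/(V(D)\setminus W))$. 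Taking the maximum over these three gives an execution of $D$ witnessing ``$\le$'', after minimising over admissible $W$. The only thing to verify carefully here is that the executions of $D[W]$ and $D[V(D)\setminus W]$ can indeed be spliced into an execution of $D$ without side effects: this follows because non-boundary modes of $D[W]$ do not appear in $D[V(D)\setminus W]$ and vice versa, so the intermediate contractions on one side do not affect tensors on the other side.

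For the inequality ``$\ge$'', I would start with a minimum-cost execution of $D$. By \cref{lem:binary}, I may assume every contraction involves exactly two tensors, and by \cref{lem:adjacent} I may further assume each contracted pair consists of adjacent vertices (using that $D$ is loopless; the case of disconnected $D$ can be handled by executing components independently, but for \eqref{eq:minimum-cost} we tacitly assume $D$ is connected with at least two tensors). This execution is encoded by a rooted binary tree $\mathcal{T}$ whose leaves are $V(D)$. Let $W$ and $V(D)\setminus W$ be the leaf sets of the two subtrees rooted at the children of the root of $\mathcal{T}$. The two subtrees are themselves executions of $D[W]$ and $D[V(D)\setminus W]$, and the final contraction step of $\mathcal{T}$ is exactly the unique contraction step of $D/W/(V(D)\setminus W)$. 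Therefore
\[
c(D) \;\ge\; \max\bigl(c(D[W]),\,c(D[V(D)\setminus W]),\,c(D/W/(V(D)\setminus W))\bigr),
\]
and each such maximum is at least the minimum on the right-hand side of \eqref{eq:minimum-cost}, provided we check that both $W$ and $V(D)\setminus W$ lie in $\mathscr{C}(D)$.

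The main obstacle, and the key use of \cref{lem:adjacent}, is precisely verifying $W, V(D)\setminus W \in \mathscr{C}(D)$. Because every contraction in $\mathcal{T}$ is a contraction of adjacent vertices, a straightforward induction on the size of a subtree of $\mathcal{T}$ shows that the set of original leaves in any subtree induces a connected sub-hypergraph of $D$: the base case is a single leaf, and the inductive step uses that the two children, which induce connected subnetworks by the inductive hypothesis, are adjacent in the contracted network, hence share a hyperedge in $D$, hence their union is connected in $D$. Applying this to the two children of the root then gives $W, V(D)\setminus W \in \mathscr{C}(D)$, which completes the lower bound and hence \eqref{eq:minimum-cost}.
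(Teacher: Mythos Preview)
Your proposal is correct and follows essentially the same approach as the paper: use \cref{lem:binary} and \cref{lem:adjacent} to obtain a binary execution tree with adjacent contractions, split at the root into $W$ and $V(D)\setminus W$, and match the three costs. Your write-up is in fact more thorough than the paper's---you explicitly argue the ``$\le$'' direction by splicing executions and spell out the induction showing that subtree leaf sets are connected, whereas the paper states both points without justification.
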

\begin{Proof}
We recall that the cost $c(D)$ of $D$ is the cost of minimum-cost execution of $D$. We can view a minimum-cost execution of $D$ as a rooted tree
such that the root has degree two, all non-root internal vertices have
degree three, and each leaf vertex is a tensor of $D$.
By Lemma~\ref{lem:adjacent}, the two subtrees rooted at the neighbours of the
root define two disjoint sets $U,W\subseteq V(D)$ such that $U\cup W=V(D)$
and both $D[U]$ and $D[W]$ are connected. The cost of the contraction
at the root is $c(D/W/U)$, and the subtrees rooted at $U$ and $W$
cost $c(D[U])$ and $c(D[W])$, respectively. The recurrence considers
all the partitions to connected sets, including the optimum.
\end{Proof}


\bibliography{paper-5}


\end{document}